\theoremstyle{plain}
\newtheorem{theorem}{Theorem}
\newtheorem{corollary}{Corollary}
\newtheorem{claim}{Claim}
\newtheorem{lemma}{Lemma}
\newtheorem{proposition}{Proposition}
\theoremstyle{remark}
\newtheorem{remark}{Remark}
\newtheorem{assumption}{Assumption}
\newcommand{\Var}{\textnormal{Var}}
\newcommand{\eff}{\textnormal{Eff}}
\newcommand{\opt}{\textnormal{opt}}
\newcommand{\sign}{\textnormal{sign}}
\newcommand{\ovl}{\overline}
\newcommand{\ol}{\widetilde}
\newcommand{\indic}{\bm{1}}
\newcommand{\ind}{\bm{1}}
\newcommand{\ca}{\mathcal{A}}
\newcommand{\cf}{\mathcal{F}}
\newcommand{\ci}{\mathcal{I}}
\newcommand{\cj}{\mathcal{J}}
\newcommand{\ct}{\mathcal{T}}
\renewcommand{\le}{\leqslant}
\renewcommand{\leq}{\leqslant}
\renewcommand{\ge}{\geqslant}
\renewcommand{\geq}{\geqslant}
\newcommand{\dunif}{\mathbb{U}}
\newcommand{\Rbar}{\overline{\mathbb{R}}}
\newcommand{\e}{\mathbb{E}}
\newcommand{\real}{\mathbb{R}}
\newcommand{\rd}{\,\mathrm{d}}
\newcommand{\giv}{\!\mid\!}
\newcommand{\err}{\varepsilon}
\newcommand{\tran}{\mathsf{T}}
\newcommand{\intr}{\textnormal{int}}
\newcommand{\cx}{\mathcal{X}}
\newcommand{\phe}{\phantom{=}}
\newcommand{\phz}{\phantom{0}}
\newcommand{\toas}{\stackrel{\mathrm{a.s.}}{\to}}
\newcommand{\var}{\mathrm{Var}}
\newcommand{\bsp}{\boldsymbol{p}}
\DeclareMathOperator*{\argmin}{arg\,min}
\DeclareMathOperator*{\argmax}{arg\,max}
\title{A general characterization of optimal tie-breaker designs}
\author{Harrison H. Li\\Stanford University
 \and
Art B. Owen\\
Stanford University
}
\date{October 2022}
\begin{document}
\maketitle
\begin{abstract}
Tie-breaker designs trade off a statistical design objective with
short-term gain from preferentially assigning a binary treatment to those with high values of a running variable $x$.
The design objective is any continuous function of the expected information matrix in a two-line regression model,
and short-term gain is expressed as the covariance between the running variable and the treatment indicator.
We investigate how to specify design functions indicating treatment probabilities as a function of $x$ to optimize these competing objectives,
under external constraints on the number of subjects receiving treatment.
Our results include sharp existence and uniqueness guarantees,
while accommodating the ethically appealing
requirement that treatment probabilities are non-decreasing in $x$.
Under such a constraint,
there always exists an optimal design function
that is constant below and above a single discontinuity.
When the running variable distribution is not symmetric or the fraction of subjects receiving the treatment is not $1/2$,
our optimal designs improve upon a $D$-optimality objective
without sacrificing short-term gain,
compared to the three level tie-breaker designs of Owen and Varian (2020)
that fix treatment probabilities at $0$, $1/2$, and $1$.
We illustrate our optimal designs
with data from Head Start, 
an early childhood government intervention program.

\end{abstract}




\section{Introduction}
\label{sec:introduction}
Companies, charitable institutions, and clinicians often have ethical or economic reasons to prefer assigning a binary treatment to certain individuals.
If this preference is expressed by the values of a scalar
running variable $x$,
a natural decision is to assign the treatment to a subject 
if and only if their $x$ is at least some threshold $t$.
This is a regression discontinuity
design, or RDD~\citep{this:camp:1960}.
Unfortunately,
treatment effect estimates from an RDD analysis typically have
very high variance \citep{jacob2012practical,gold:1972,gelman2017high},
relative to those from a randomized control trial (RCT)
that does not preferentially treat any individuals.
To trade off between these competing statistical and ethical objectives,
investigators can use a tie-breaker design (TBD).
In a typical tie-breaker design, the top ranked subjects get the
treatment, the lowest ranked subjects are in the control group
and a cohort in the middle are randomized to treatment or control.
The earliest tie-breaker reference that we are aware of
is \cite{camp:1969}
where $x$ was discrete and the randomization broke
ties among subjects with identical values of $x$.

Past settings for tie-breaker
designs include the offer of remedial English
to incoming university students based on their
high school English proficiency
\citep{aike:west:schw:carr:hsiu:1998},
a diversion program designed to reduce juvenile delinquency~\citep{lips:cord:berg:1981},
scholarship offerings for two
and four year colleges 
based on a judgment of the applicants' needs and academic strengths~\citep{abdulkadiroglu2017impact, angrist2020},
and clinical trials~\citep{Trochim92}, where they are known
as cutoff designs.

The tie-breaker design problem
is to choose
treatment probabilities $p_i$ for subjects $i=1,\dots,n$ based on their
running variables $x_i$.
These probabilities are chosen before observing the response values $y_1,\dots,y_n$
but with the running variables $x_1,\dots,x_n$ known.
We assume throughout that
$p_i=p_{i'}$ whenever $x_i=x_{i'}$.

As is common in the optimal experimental design literature,
the statistical objective is an ``efficiency" criterion that measures estimation precision.
Specifically, our criterion 
will be a function
$\Psi(\cdot)$ of the information 
(scaled inverse variance)
matrix $\ci_n(p_1,\dots,p_n)$ for the model parameters $\beta = (\beta_0,\beta_1,\beta_2,\beta_3)^{\top}$ in
a two line model relating the response $y_i$
to the running variable $x_i$ and a treatment indicator
$z_i\in\{-1,1\}$:
\begin{equation}
\label{eq:two_line_model}
y_i = \beta_0 + \beta_1 x_i + \beta_2 z_i + \beta_3 x_i z_i + \err_i.
\end{equation}
This simple working model nonetheless poses some
challenging design problems.  
In Section~\ref{sec:discussion}
we describe some more general modeling settings for tie-breaker models.

Throughout we assume that the running variable is centered, i.e.\ $(1/n)\sum_ix_i=0$,
and that the $\err_i$ have common variance $\sigma^2$.
Here $z_i=1$ indicates treatment and so $p_i = \Pr(z_i=1) = (1+\e(z_i))/2$).
For model~\eqref{eq:two_line_model}
the information matrix is $\ci_n = \e(\cx_i\cx_i^{\top})$ 
where $\cx_i = (1,x_i,z_i,x_iz_i)^{\top} \in \mathbb{R}^4$
and the expectation is taken over the treatment assignments $z_i$,
conditional on the running variables $x_i$ (whose values are known).
The ordinary least squares
estimate $\hat{\beta}$ of $\beta$
satisfies $\e(\var(\hat\beta)^{-1})=n\ci_n/\sigma^2$.
Common examples of efficiency criteria $\Psi(\cdot)$ in the literature,
such as the D-optimality criterion $\Psi_D(\cdot) = \log(\det(\cdot))$,
are concave in both $\ci_n$ and $p = (p_1,\ldots,p_n)^{\top}$~\citep{boyd:vand:2004}.
However, our theoretical results only require continuity of $\Psi(\cdot)$.

Our preference for treating individuals with
higher running variables $x$ is expressed
as an equality constraint on
the scaled covariance $\ovl{xp}\equiv(1/n)\sum_{i=1}^n x_ip_i$
between treatment and the running variable
(recall the latter is known, hence viewed as non-random).
Under the two-line model~\eqref{eq:two_line_model}, this
constraint has the following economic interpretation.
We take $y$ to be something like
economic value or student success, where larger $y$
is better.
We expect that $\beta_3>0$ holds in most of our motivating problems.
The expected value of $y$ per customer under~\eqref{eq:two_line_model} is then
\begin{align}
\label{eq:short_term_gain}
\e(y_i) = \beta_0 + \beta_2 \cdot (2\bar{p}-1) + \beta_3 \cdot (2\ovl{xp}-1)
\end{align}
where $\bar{p} \equiv (1/n)\sum_{i=1}^n p_i$.
Equation~\eqref{eq:short_term_gain} shows that the expected gain is unaffected by $\beta_0$ or $\beta_1$.
Furthermore,
we assume the proportion of treated subjects is fixed by an external budget, 
i.e., an equality constraint $\bar{p} = \ol{p}$ for some $\ol{p} \in (0,1)$.
For instance, there might be only a set number of scholarships or perks to be given out.
The only term affected by the design in~\eqref{eq:short_term_gain} is
then $\beta_3 \cdot \ovl{xp}$, as pointed out by~\citet{owen:vari:2020}.
For $\beta_3>0$, the short term average value
per customer grows with $\ovl{xp}$ and we would
want that value to be large.
Similar functionals are also commonly studied as regret functions in bandit problems~\citep{goldenshlugerzeevi2013, metelkinapronzato2017}.

We are now ready to formulate the tie-breaker design problem as
the following constrained optimization problem.
Given real values $x_1\le x_2\le \cdots\le x_n$:
\begin{equation}
\label{eq:p_opt_problem_fd}
\begin{array}{ll}
\mbox{maximize}
&\qquad \Psi(\ci_n(\bsp))  \\
\mbox{over}&\qquad \bsp=(p_1,\dots,p_n)\in\ca\\
\mbox{subject to} &\qquad n^{-1} \sum_{i=1}^n p_i = \ol{p} \qquad \mbox{and} \qquad n^{-1} \sum_{i=1}^n x_ip_i = \ol{xp}
\end{array}
\end{equation}
for some constants $\ol{p}$ and $\ol{xp}$.
The first equality constraint in~\eqref{eq:p_opt_problem_fd} is a budget constraint
due to the cost of treatment and the second constraint
is on the short term gain mentioned above.
We consider two different sets $\ca$ in detail.
The first is $[0,1]^n$.  
The second is
$\{ \bsp\in[0,1]^n\mid 0\le p_1\le p_2\le \cdots \le p_n\le 1\}$
which requires treatment probabilities to be non-decreasing in the running variable $x$.
Such a monotonicity constraint
prevents more qualified students from having a lower chance of getting a scholarship than less qualified ones
or more loyal customers having a lower chance for a perk than others.
It also eliminates perverse incentives for subjects to lower their $x_i$.
To our knowledge, such a monotonicity constraint has not been received much attention in the optimal design literature,
though it is enormously appealing in our motivating applications.



When the efficiency criterion $\psi(\cdot)$ is concave in $\bsp$,
then a solution to~\eqref{eq:p_opt_problem_fd} can be found numerically via convex optimization,
as \cite{mvtiebreaker} do for vector valued $x_i$,
and as~\citet{metelkinapronzato2017} mention for a similar problem.
However, our particular setting with univariate $x_i$ is tractable enough to provide a simple yet complete analytical characterization of the optimal $p_i$,
even if the efficiency criterion is not concave.
We show, under general conditions,
that we can always find optimal treatment probabilities that are piecewise constant in $x$,
with the number of pieces small and independent of $n$.

There is a well-developed literature for optimal experiment design in the presence of multiple objectives.
Early examples of a constrained optimization problem of the form~\eqref{eq:p_opt_problem_fd} were designed to account for several of the standard efficiency objectives simultaneously~\citep{stigler1971,lee1987,lee1988}.
\citet{lauter1974,lauter1976} proposed maximizing a convex combination of efficiency objectives,
a practice now typically referred to as a ``compound'' design approach.
It is now well known~\citep{cookwong1994,clydechaloner1996} that in many problems with concave objectives,
optimal constrained and compound designs are equivalent.
In this paper, we provide another approach to reduce the constrained problem~\eqref{eq:p_opt_problem_fd}
to a compound problem
that can handle the monotonicity constraint.
At the same time,
we provide simple ways to compute our optimal designs that are based directly on the parameters $\ol{p}$ and $\ol{xp}$ in our constrained formulation~\eqref{eq:p_opt_problem_fd},
and do not require specifying the
Lagrange multipliers appearing in the corresponding compound problem.
Those Lagrange multipliers involve ratios of
information gain to economic gain where each of those quantities
is only known up to a multiplicative constant.

Problems similar to~\eqref{eq:p_opt_problem_fd} have received significant attention in the sequential design of clinical trials.
Biased-coin designs, beginning with the simple procedure of~\citet{efron1971},
have been developed as a compromise between treatment balance and randomization; see~\citet{atkinson2014} for a review.
Covariate-adaptive biased-coin designs
often replace the balance objective
with an efficiency criterion such as D-optimality~\citep{atkinson1982, rosenbergersverdlov2008}.
Response-adaptive designs also optimize for some  efficiency objective
but simultaneously seek to minimize the number of patients receiving the inferior treatment for ethical reasons~\citep{hurosenberger2006}.
Various authors such as~\citet{bandyopadhyaybiswas2001} and~\citet{huetal2015} propose sequential designs to effectively navigate this trade-off.
When they also account for covariate information, they are called covariate-adjusted response-adaptive (CARA) designs~\citep{zhangetal2007, zhanghu2009}.

In the CARA literature especially, 
there has been significant recent interest in optimal design for nonlinear models~\citep{sverdlovetal2013,metelkinapronzato2017, biswasbhattacharya2018}.
Unlike optimal designs in linear models such as~\eqref{eq:two_line_model},
designs in nonlinear models can typically only be locally optimal,
meaning that their optimality depends on the values of the unknown parameters~\citep{chernoff1953}.
While we may be able to obtain increasingly reliable estimates of these parameters over time in sequential settings,
in non-clinical settings subjects typically enter a tie-breaker study non-sequentially,
i.e., we know the running variables for all subjects before designing the experiment.
In these applications --- such as measuring the impact of a scholarship on future educational attainment --- it can take several years to collect a single set of responses on which to compute a parameter estimate.
Locally optimal designs are therefore of limited utility in this setting,
and so we focus on optimal design under the linear model~\eqref{eq:two_line_model} in a non-sequential setting,
which already presents a sufficient challenge.

The existing literature on problems like~\eqref{eq:p_opt_problem_fd} typically considers the running variable $x$ to be random.
For example,
Section 7 of \cite{owen:vari:2020} study tie-breaker designs under the assumption that the running variable is either uniform or Gaussian,
and exactly half the subjects are to be treated. 
They consider the typical three level tie-breaker design where subjects with running variable $x$ above some threshold $\Delta$ always get the treatment,
subjects with running variable below $-\Delta$ never get the treatment,
and the remaining subjects are randomized into treatment with probability $1/2$.
They find that a $c$-optimality criterion of statistical
efficiency is monotonically increasing in the width
$\Delta$ of the randomization window,
with the RCT ($\Delta\to\infty$) being most efficient and the RDD ($\Delta=0$)
least efficient.
Conversely, the short-term gain is decreasing in $\Delta$.
They also show the three level design is optimal for any given level of short-term gain.
In this article we show strong 
advantages to moving away from that three level design
when the running variable is not symmetric,
or we cannot treat half of the subjects.

\citet{metelkinapronzato2017} studied a further generalization of the optimal tie-breaker design problem,
motivated by CARA designs.
In particular, their Example 1,
an illustration of their Corollary 2.1,
is similar\footnote{There are minor differences such as the lack of a treatment fraction constraint,
an inequality constraint on short-term gain as opposed to an equality constraint,
and the use of a common intercept for the treated and untreated individuals,
i.e., assuming that $\beta_2=0$ in~\eqref{eq:two_line_model}.} to a random-$x$ generalization of~\eqref{eq:p_opt_problem_fd}.
Crucially, however,
the proof for their Corollary 2.1 does not generalize to the case where we require the treatment probabilities be monotone.
Even without the monotonicity constraint,
we provide a sharper characterization of the solutions to our more specific problem
(Section~\ref{sec:global_optimal}).

We introduce a random-$x$ generalization of the problem~\eqref{eq:p_opt_problem_fd} in Section~\ref{sec:random_x}.
We show it encompasses both ~\eqref{eq:p_opt_problem_fd} 
and the problem studied by~\citet{owen:vari:2020}
as special cases.
Then,
Section~\ref{sec:characterizations}
presents the main technical results
characterizing the solutions to this more general problem.
In particular, Theorem~\ref{thm:alt_pmax_pmin_mon}
shows that,
under the monotonicity constraint,
there always exists a solution to~\eqref{eq:p_opt_problem_fd}
corresponding to a simple, two-level stratified design
where all subjects with running variable
below some threshold $t'$
have the same probability of treatment,
all subjects with running variable above $t'$
have an identical, higher probability of treatment,
and those subjects with $x=t'$ have a treatment probability between these two values.
Section~\ref{sec:inadmissibility} then presents some results on the trade-off between a $D$-optimality efficiency criterion
and short-term gain for these optimal designs;
examples of this trade-off
for some specific running variable distributions are given
in Section~\ref{sec:examples}.
That section also includes a fixed-$x$ application
based on Head Start,
a government assistance program for low-income children.
It also shows how to compute our optimal designs
when $x$ is either fixed or random.
Finally, Section~\ref{sec:discussion}
provides summarizes the main results.

\section{Random running variable}
\label{sec:random_x}
Our random-$x$ generalization of~\eqref{eq:p_opt_problem_fd}
assumes the running variables $x_i$
are samples from a common distribution $F$,
which we hereafter identify with the corresponding cumulative distribution function.
It considers an information matrix that averages over both the random treatment assignments and randomness in the running variables.
This allows us to characterize optimal designs for an as yet unobserved set of running variable values,
when their distribution is known.

Before presenting the random-$x$ tie-breaker design problem,
we briefly review the standard setting of optimal design in multiple linear regression models;
see e.g.~\citet{atkinsonetal2007} for further background.
In the simplest case, 
the user assumes the standard linear model $y_i = \bm{x}_i^{\tran} \bm{\beta} + \err_i$
with the goal of selecting covariate values $\bm{x}_1,\dots,\bm{x}_n \in \real^p$ 
to optimize an efficiency criterion that is a function of the information matrix $\cx^\tran \cx$, 
where $\cx \in \real^{n \times p}$ is the design matrix with $i$-th row $\bm{x}_i^{\tran}$. 
Perhaps the most common such criterion is D-optimality,
which corresponds to maximizing $\log(\det(\cx^\tran \cx))$.
Another popular choice is $c$-optimality, 
which minimizes $c^\tran(\cx^\tran\cx)^{-1}c$ for some choice of $c \in \real^p$. 
This can be interpreted as minimizing $\var(c^{\top}\hat{\beta} \mid \cx)$,
where $\hat{\beta}$ is the ordinary least squares estimator.
The study of optimal design is often simplified by the use of \textit{design measures}. 
A design measure $\xi$ is a probability distribution
from which to generate the covariates $\bm{x}_i$.
The relaxed optimal design problem involves selecting a design measure $\xi$ instead of a finite number of covariate values $\bm{x}_1,\ldots,\bm{x}_n$.
The objective is to optimize for the desired functional of the \textit{expected} information matrix $\ci(\xi) \equiv \e_{\xi}[\cx^{\tran}\cx]$
over some space $\Xi$ of design measures $\xi$. 
For instance, a design measure $\xi^*$ is D-optimal (for the relaxed problem) 
if $\xi^* \in \argmax_{\xi \in \Xi} \det(\ci(\xi))$, 
and $c$-optimal if $\xi^* \in \argmin_{\xi \in \Xi} c^{\top}\ci(\xi)^{-1}c$.
The original optimal design problem restricts $\Xi$ to only consist of discrete probability distributions supported on at most $n$ distinct points with probabilities that are multiples of $1/n$.


For the tie-breaker design problem,
our regression model~\eqref{eq:two_line_model}
includes both the running variables $x_i$
and the treatment indicators $z_i$ as covariates.
But the experimenter does not have control over the entire joint distribution of $(x_i,z_i)$.
The running variable is externally determined,
so they can only specify the conditional distribution of the treatment indicator $z_i$ given the running variable $x_i$.
This conditional distribution is specified by a \textit{design function} $p:\real \to [0,1]$
such that $p(x) \equiv \Pr(z_i = 1 \mid x_i=x)$.
As mentioned above, we assume $x_i \sim F$ for a known, fixed distribution $F$.
This allows us to drop subscripts $i$ when convenient.
For any two design functions $p$ and $p'$
we say $p=p'$ whenever $\Pr_F(\{x:p(x)=p'(x)\})=1$.
We only need a minimal assumption on $F$,
which can be continuous, discrete, or neither:
\begin{assumption}
\label{ass:F}
$0 < \var_F(x) < \infty$ with $\e_F(x)=0$.
\end{assumption}
The mean-centeredness part of Assumption~\ref{ass:F} loses no generality,
due to the translation invariance of estimation under the two-line model~\eqref{eq:two_line_model}.
All expectations involving $x$ hereafter omit the subscript $F$ from all such expectations with the implicit understanding that $x \sim F$.

The random-$x$ tie-breaker design problem is as follows:
\begin{equation}
\label{eq:p_opt_problem}
\begin{array}{ll}
\mbox{maximize} &\qquad \Psi(\ci(p))  \\
\mbox{over} &\qquad p\in \cf\\
\mbox{subject to} &\qquad \e_p(z) = \ol{z} \\
\mbox{and}&\qquad \e_p(xz) = \ol{xz}.
\end{array}
\end{equation}
Here $\ol{z}$ and $\ol{xz}$ are constants analogous to $\bar{p}$ and $\overline{xp}$, respectively, in~\eqref{eq:p_opt_problem_fd},
$\cf$ is a collection of design functions,
and $\ci(p)$ is the expected information matrix under the model~\eqref{eq:two_line_model},
averaging over both $x \sim F$ and $z \mid x \sim p$.

This problem can be viewed as a constrained relaxed optimal design problem under the regression model~\eqref{eq:two_line_model}
where the set $\Xi$ of allowable design measures is indexed by the design functions $p \in \cf$.

To interpret the equality constraints in~\eqref{eq:p_opt_problem} it is helpful to note that
\begin{align}
\e_p(x^az) =\e( x^a\e_p(z\giv x))=\e( x^a(2p(x)-1))=2\e(x^ap(x))-\e(x^a)\label{eq:tower}
\end{align}
for any $a \ge 0$ with $\e(|x|^a)<\infty$.
In particular, for each positive integer $a$,
there exists an invertible linear mapping $\varphi_a:\real^{a+1} \to \real^{a+1}$
that does not depend on the design $p$
and maps $(\e_p(z),\ldots,\e_p(x^a z))$
to $(\e(p(x)),\ldots,\e(x^ap(x)))$.
For example, $\varphi_1(x,y)=(1/2)(1+x,y)$.
Taking $a=0$ in~\eqref{eq:tower},
we see the constraint $\e_p(z)=\ol{z}$
in~\eqref{eq:p_opt_problem}
is equivalent to requiring the expected proportion of subjects to be treated to be $(1+\ol{z})/2$.
This proportion is typically determined by 
the aforementioned budget constraints.
Taking $a=1$ in~\eqref{eq:tower} 
shows that the second constraint $\e_p(xz)=\ol{xz}$ in~\eqref{eq:p_opt_problem}
sets the expected level of short-term gain.
In Section~\ref{subsection:stg_bounds},
we provide some guidance on how to choose $\ol{xz}$
in practice.

From computing the expected information matrix $\ci(p)$ in Section~\ref{sec:efficiency},
we will see the problem~\eqref{eq:p_opt_problem} reduces to the finite-dimensional problem~\eqref{eq:p_opt_problem_fd} 
when $F$ is discrete,
placing probability mass $n^{-1}$ on each of the known running variable values $x_1,\ldots,x_n$.
Thus, to solve~\eqref{eq:p_opt_problem_fd} it suffices to solve the problem~\eqref{eq:p_opt_problem} for any $F$ satisfying Assumption~\ref{ass:F},
which must hold for any discrete distribution with finite support.
\subsection{Some design functions}

For convenience, we introduce some notation for certain forms of the design function $p$.
We will commonly encounter designs of the form
\begin{align} \label{eq:p_A}
p_A(x) \equiv \ind(x\in A)
\end{align}
for a set $A \subseteq \real$.
Another important special case consists of two level designs
\begin{align}\label{eq:twolevel}
p_{\ell,u,t}(x) \equiv \ell \ind(x<t)+u\ind(x\ge t)
\end{align}
for treatment probabilities $0 \le \ell \le u \le 1$ and a threshold $t\in \Rbar$.
For example, $p_{0,1,t}$ is a sharp RDD with threshold $t$,
while for any $t$, 
$p_{\theta,\theta,t}$ is an RCT with treatment probability $\theta$.

The condition $\ell \le u$ ensures that $p(x)$ is nondecreasing in $x$; we refer to such designs as \textit{monotone}. 
Under a monotone design, a subject cannot have a lower treatment probability than another subject with lower $x$.
We also define a \textit{symmetric} design to be one for which
$p(-x)=1-p(x)$; for instance, $p$ might
be the cumulative distribution function (CDF)
of a symmetric random variable. Finally, the three level tie-breaker design from \cite{owen:vari:2020} is both monotone and symmetric and defined for $\Delta \in [0,1]$ by
\begin{equation}\label{eq:3_level_tie-breaker}
p_{3,\Delta}(x) \equiv 0.5 \indic(|x| \le \Delta) + \indic(x > \Delta)
\end{equation}
when $F$ is the $\dunif(-1,1)$ distribution.
Note that for all $\Delta$,
$p_{3,\Delta}$ always treats half the subjects,
i.e., $\ol{z}=0$.
The generalization to other $\ol{z}$ and running variable distribution functions $F$ is
\begin{equation}
\label{eq:3_level_quantile_tie-breaker}
p_{3;\ol{z},\Delta}(x) =
0.5\times\indic(a(\ol{z},\Delta) < x < b(\ol{z},\Delta))
+ \indic(x \ge b(\ol{z},\Delta))
\end{equation}
where $a(\ol{z},\Delta) = F^{-1}((1-\ol{z})/2-\Delta)$
and $b(\ol{z},\Delta)=F^{-1}((1-\ol{z})/2+\Delta)$.
\subsection{Bounds on short-term gain}
\label{subsection:stg_bounds}
Before studying optimal designs, we impose lower and upper
bounds on the possible short-term gain constraints $\ol{xz}$ to consider, for each possible $\ol{z} \in (-1, 1)$.
For an upper bound we use $\ol{xz}_{\max}(\ol{z})$, the maximum $\ol{xz}$ that can be attained by \textit{any} design function $p$ satisfying the treatment fraction constraint $\e_p(z)=\ol{z}$.
It turns out that this upper bound is always uniquely attained.
If the running variable distribution $F$ is continuous, it is uniquely attained by a sharp RDD.
We remind the reader that uniqueness of a design function satisfying some property means that for any two design functions $p$ and $p'$ with that property,
we must have $\Pr(p(x)=p'(x))=1$ under $x \sim F$.

\begin{lemma}
\label{lemma:stg}
For any $\ol{z} \in [-1, 1]$ and running variable distribution $F$, there exists a unique design $p_{\ol{z}}$ satisfying
\begin{align} 
\e_{p_{\ol{z}}}(z) & = \ol{z},\quad\text{and} \label{eq:budget_constraint} \\
p_{\ol{z}}(x) & =
\begin{cases}
1, & x > t \\
0, & x < t
\end{cases} \label{eq:generalized_rdd}
\end{align}
for some $t\in \real$. 
Any $p$ that satisfies the treatment fraction constraint~\eqref{eq:budget_constraint} also satisfies
\begin{align}\label{eq:stg}
\e_p(xz)
\leq \e_{p_{\ol{z}}}(xz) \equiv \ol{xz}_{\max}(\ol{z})
\end{align}
with equality if and only if $p=p_{\ol{z}}$,
i.e. $\Pr(p(x)=p_{\ol{z}}(x))=1$ under $x\sim F$.
\end{lemma}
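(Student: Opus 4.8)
The plan is to recognize~\eqref{eq:stg} as an instance of the \emph{bathtub principle}: to maximize $\e(x\,p(x))$ subject to a fixed value of $\e(p(x))$ with $0\le p\le 1$, one should set $p=1$ on the largest available values of $x$. Write $q\equiv(1+\ol{z})/2$. By Assumption~\ref{ass:F} we have $\e(x)=0$ and $\e|x|<\infty$, so~\eqref{eq:tower} gives $\e_p(xz)=2\,\e(x\,p(x))$ and shows the treatment-fraction constraint $\e_p(z)=\ol{z}$ in~\eqref{eq:budget_constraint} is equivalent to $\e(p(x))=q$, with all the relevant integrals finite. It therefore suffices to prove that among all measurable $p:\real\to[0,1]$ with $\e(p(x))=q$, the quantity $\e(x\,p(x))$ is maximized, and uniquely so, by the upper-tail design in~\eqref{eq:generalized_rdd}.

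First I would build the threshold. For $\ol{z}\in(-1,1)$, so $q\in(0,1)$, set $t\equiv\inf\{s\in\real:\Pr(x>s)\le q\}$. Since $s\mapsto\Pr(x>s)=1-F(s)$ is nonincreasing and right-continuous with limits $1$ and $0$ at $\mp\infty$, the infimum is a finite number $t$ with $\Pr(x>t)\le q\le\Pr(x\ge t)$. Define $p_{\ol{z}}(x)=1$ for $x>t$, $p_{\ol{z}}(x)=0$ for $x<t$, and, when $\Pr(x=t)>0$, $p_{\ol{z}}(t)=(q-\Pr(x>t))/\Pr(x=t)\in[0,1]$ (the value at $t$ being immaterial otherwise); then $\e(p_{\ol{z}}(x))=\Pr(x>t)+p_{\ol{z}}(t)\Pr(x=t)=q$, establishing~\eqref{eq:budget_constraint}. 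For $\ol{z}=\pm1$ one has $q\in\{0,1\}$ and $p_{\ol{z}}$ is forced to be the constant design $\ind(\ol{z}=1)$, which has the form~\eqref{eq:generalized_rdd} with $t$ taken at (or beyond) the appropriate endpoint of $\supp F$.

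For the inequality~\eqref{eq:stg}, given any competing $p$ with $\e(p(x))=q$ I would examine the single quantity $\int(x-t)\bigl(p_{\ol{z}}(x)-p(x)\bigr)\rd F(x)$. Its integrand is pointwise nonnegative: for $x>t$ we have $p_{\ol{z}}(x)=1\ge p(x)$ while $x-t>0$; for $x<t$ we have $p_{\ol{z}}(x)=0\le p(x)$ while $x-t<0$; and the factor $x-t$ kills the point $x=t$. Hence $\int(x-t)(p_{\ol{z}}-p)\rd F\ge 0$; expanding it as $\e(x\,p_{\ol{z}}(x))-\e(x\,p(x))-t\bigl(\e(p_{\ol{z}}(x))-\e(p(x))\bigr)$ and using $\e(p_{\ol{z}}(x))=\e(p(x))=q$ to drop the last term gives $\e(x\,p_{\ol{z}}(x))\ge\e(x\,p(x))$, equivalently $\e_p(xz)\le\e_{p_{\ol{z}}}(xz)=:\ol{xz}_{\max}(\ol{z})$, which is~\eqref{eq:stg}.

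Equality in~\eqref{eq:stg} forces the nonnegative integrand above to vanish $F$-almost everywhere, hence $p(x)=1$ for $F$-a.e.\ $x>t$ and $p(x)=0$ for $F$-a.e.\ $x<t$; feeding this back into $\e(p(x))=q$ then pins down $p(t)\Pr(x=t)=p_{\ol{z}}(t)\Pr(x=t)$, so $p=p_{\ol{z}}$ as a design. For uniqueness of $p_{\ol{z}}$, let $p'$ be any design of the form~\eqref{eq:generalized_rdd} (with its own threshold $t'$) satisfying~\eqref{eq:budget_constraint}; running the displayed inequality once with $(p_{\ol{z}},t)$ and once with $(p',t')$ in the roles of $(p_{\ol{z}},t)$ shows $\e_{p'}(xz)\ge\e_{p_{\ol{z}}}(xz)$ and $\e_{p_{\ol{z}}}(xz)\ge\e_{p'}(xz)$, so both bounds hold with equality and the equality case just proved yields $p'=p_{\ol{z}}$. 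The one subtlety to handle with care will be the bookkeeping at a possible atom $\{x=t\}$ (and, relatedly, the degenerate endpoints $\ol{z}=\pm1$); everything else is the textbook bathtub argument.
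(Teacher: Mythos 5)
Your proposal is correct and follows essentially the same route as the paper: the same threshold $t=\inf\{s:F(s)\ge(1-\ol{z})/2\}$ with the same atom value at $t$, and the same key inequality (the paper bounds $\e(x(p_{\ol{z}}(x)-p(x)))$ below by $t\,\e(p_{\ol{z}}(x)-p(x))=0$, which is exactly your pointwise-nonnegative integrand $(x-t)(p_{\ol{z}}(x)-p(x))$), with the identical equality analysis. Your explicit uniqueness check among designs of the form~\eqref{eq:generalized_rdd} is a small tidy addition the paper leaves implicit, but it is not a different argument.
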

\begin{remark}
Notice that equation~\eqref{eq:generalized_rdd} does not specify $p_{\ol{z}}(x)$ at $x=t$.
If $F$ is continuous, then
any value for $p_{\ol{z}}(t)$ yields an equivalent design function,
but if $F$ has an atom at $x=t$ then we will require a specific value for $p_{\ol{z}}(t)\in [0,1]$. 
We must allow $F$ to have atoms to solve the finite dimensional problem~\eqref{eq:p_opt_problem_fd}.
While we specify $p_{\ol{z}}(t)$ in the proof of Lemma~\ref{lemma:stg} below,
later results of this type do not
give the values of design functions at such discontinuities.
\end{remark}
\begin{remark}
\label{remark:generalized_RDD}
If $F$ is continuous, then $p_{\ol{z}}$ is an RDD: $p_{\ol{z}} = p_{0,1,t}$ for $t = F^{-1}((1-\ol{z})/2)$.
We call the design $p_{\ol{z}}$ a \textit{generalized RDD} for general $F$ satisfying Assumption~\ref{ass:F}.
\end{remark}
\begin{remark}
The threshold $t$ in~\eqref{eq:generalized_rdd} is essentially unique.  
If there is an interval $(t,s)$ with $\Pr(t<x<s)=0$ then all step locations in $[t,s)$ provide equivalent generalized RDDs.
\end{remark}

\begin{proof}[Proof of Lemma~\ref{lemma:stg}.]
If $\ol{z} \in \{-1,1\}$ then the only design functions (again, up to uniqueness w.p.1 under $x \sim F$) are the constant functions $p(x)=0$ and $p(x)=1$, and the result holds trivially.
Thus, we can assume that $\ol{z} \in (-1,1)$.
By~\eqref{eq:tower}, the existence of $p_{\ol{z}}$ follows by taking $t = \inf\{s:F(s) \geq (1-\ol{z})/2\}$ and 
\[
p_{\ol{z}}(t) =
\begin{cases}
0, & \text{ if $\Pr(x=t) = 0$} \\
\frac{F(t)-(1-\ol{z})/2}{\Pr(x=t)}, & \text{ if $\Pr(x=t) > 0$.}
\end{cases}
\]
To show~\eqref{eq:stg}, fix any design $p$ satisfying~\eqref{eq:budget_constraint} and notice that $\e(p(x)-p_{\ol{z}}(x))=0$ means
\[
\e(p(x)\ind(x < t))+(p(t)-p_{\ol{z}}(t))\Pr(x=t)+\e((p(x)-1)\ind(x>t))=0.
\]
Then $\e(x(p_{\ol{z}}(x)-p(x)))$ equals
\begin{align*}
 &\phantom{\ge}\  \e(-xp(x)\ind(x < t)) + t(p_{\ol{z}}(t)-p(t))\Pr(x=t) + \e(x(1-p(x))\ind(x > t))\\
& \ge t[\e(-p(x)\ind(x<t))+(p_{\ol{z}}(t)-p(t))\Pr(x=t)+\e((1-p(x))\ind(x > t))] \\
& = 0
\end{align*}
with equality iff $(t-x)p(x)\indic(x<t)=(x-t)(1-p(x))\indic(x>t)=0$ 
for a set of $x$ with probability one under $F$, i.e., iff $p$ satisfies~\eqref{eq:generalized_rdd}
with probability one under $x\sim F $.
\end{proof}
By symmetry,
the design that minimizes $\e_p(xz)$ over all designs $p$ with $\e_p(z)=\ol{z}$ is $p_{1,0,s}$ where $s = F^{-1}((1+\ol{z})/2)$. 
Notice that $\e_{p_{1,0,s}}(xz) = \ol{xz}_{\min}(\ol{z}) \equiv 2\e(x\indic(x < s)) < 0$.
We impose a stricter lower bound of $\ol{xz} \geq 0$ in the context of problem~\eqref{eq:p_opt_problem}.
This is motivated by the fact that the running variable $x$ has mean 0 (Assumption~\ref{ass:F}),
meaning that $\e_p(xz)=0$ whenever the design function $p$ is constant,
corresponding to an RCT.
Designs with $\ol{xz} < 0$ exist for all $\ol{z} \in (-1,1)$ but would not be relevant in our motivating applications, 
as they represent scenarios where subjects with \textit{smaller} $x$ are more preferentially treated than in an RCT.
We hence define the \textit{feasible input space} $\cj$ by
\begin{equation}\label{eq:defj}
\cj \equiv \Bigl\{(\ol{z}, \ol{xz})\mid
-1 < \ol{z} < 1,\
0 \leq \ol{xz} \leq \ol{xz}_{\max}(\ol{z})
\Bigr\} \subseteq \mathbb{R}^2.
\end{equation}
Any design function $p$ for which the moments $(\e_p(z),\e_p(xz))$
lie within the feasible input space $\cj$
is referred to as an \textit{input-feasible} design function.

If the design $p$ is input-feasible,
we can write $\e_p(xz) = \delta \cdot \ol{xz}_{\max}(\e_p(z))$
for some $\delta \in [0,1]$.
The parameter $\delta$ corresponds to the amount of additional short-term gain
attained by the design $p$ over an RCT,
relative to the amount of additional short-term gain attained by the generalized RDD $p_{\e_p(z)}$
that treats the same proportion of subjects as $p$.
For instance,
$\delta=0.4$ means that the design $p$ has a short-term gain that is 40\% of the way
from that of an RCT 
to the maximum attainable short-term gain
under the treatment fraction constraint.

\subsection{Expected information matrix and equivalence of $D$-optimality and $c$-optimality}
\label{sec:efficiency}
We now explicitly compute the expected information matrix
\begin{equation}
\label{eq:I}    
\ci(p) =
\sigma^{-2} \e(n^{-1} \cx^{\tran} \cx)
=
\sigma^{-2}
\begin{pmatrix}
1 & 0 & \e(z) & \e(xz) \\
0 & \e(x^2) & \e(xz) & \e(x^2z) \\
\e(z) & \e(xz) & 1 & 0 \\
\e(xz) & \e(x^2z) & 0 & \e(x^2)
\end{pmatrix}
=
\sigma^{-2}
\begin{pmatrix}
D & C \\
C & D
\end{pmatrix},
\end{equation}
where
\[
C =
\begin{pmatrix}
\e(z) & \e(xz) \\
\e(xz) & \e(x^2z)
\end{pmatrix}
\quad\text{and}\quad
D =
\begin{pmatrix}
1 & 0 \\
0 & \e(x^2)
\end{pmatrix}
\]
and we have omitted the dependence of the expectations on the design $p$ for brevity. 
We emphasize that $\ci$ depends on $F$ as well, 
though the experimenter can only control $p$.
Furthermore, when $F = (1/n)\sum_{i=1}^n \delta_{x_i}$
and the running variable values $x_1,\dots,x_n$ are mean-centered,
the expected information matrix $\ci(p)$ is precisely the fixed-$x$ information matrix $\ci_n(p_1,\ldots,p_n)$,
identifying $p_i \equiv p(x_i)$.
This shows that indeed, the random-$x$ problem~\eqref{eq:p_opt_problem} 
is strictly more general than the fixed-$x$ problem~\eqref{eq:p_opt_problem_fd}.
Equation~\eqref{eq:I} also shows
that any efficiency objective $\Psi(\ci(p))$ only depends on the treatment indicators $z$ through their marginal distributions conditional on $x$, and not their
joint distribution.  
In the fixed-$x$ setting,
this makes it easier to obey an exact
budget constraint $n^{-1}\sum_{i=1}^n z_i = \ol{p}$ by stratification.  
For instance, given five subjects with $p_i=0.4$ we could randomly treat exactly
two of them,
instead of randomizing each subject independently
and possibly going over budget.

While we will characterize solutions to the optimal design problem~\eqref{eq:p_opt_problem} for any continuous efficiency criterion $\Psi(\cdot)$,
in Section~\ref{sec:inadmissibility} we will prove some additional results for the $D$-optimality criterion
$\Psi_D(\cdot) = \log(\det(\cdot))$.
We show that $D$-optimality is of particular interest in this setting,
as it happens to correspond exactly with the $c$-optimality efficiency criterion of~\citet{owen:vari:2020}.
They aim to minimize the asymptotic variance of $\hat{\beta}_3$ and do so by observing that
if $\ci$ is invertible,
then when $(x_i,z_i)$ are independent, 
by the law of large numbers
\[
n\Var(\hat{\beta}\giv \cx) = n (\cx^\tran\cx)^{-1}\, \toas\, \ci^{-1}.
\]
We have assumed $\sigma^2=1$ WLOG as the $D$-optimal design does not depend on $\sigma^2$.
Then by standard block inversion formulas
\begin{equation}
\label{eq:var_beta_3}
n \Var(\hat{\beta}_3\giv \cx) \toas  (\ci^{-1})_{44} = 
\frac{M_{11}(p)}{\det(M(p))}
\end{equation}
where
\begin{align}
\label{eq:M}
M = M(p) &= D-CD^{-1}C \notag\\ 
&=
\begin{pmatrix}
1 - \e(z)^2 - \frac{\e(xz)^2}{\e(x^2)} & -\e(xz) \cdot \e(z)-\frac{\e(x^2z)\e(xz)}{\e(x^2)} \\[1.5ex]
-\e(xz) \cdot \e(z)-\frac{\e(x^2z)\e(xz)}{\e(x^2)} & \e(x^2)-\e(xz)^2 - \frac{\e(x^2z)^2}{\e(x^2)}
\end{pmatrix}.
\end{align}
Equation~\eqref{eq:var_beta_3} shows that minimizing the asymptotic conditional variance of $\hat{\beta}_3$ is equivalent to maximizing $\eff(p) := \det(M(p))/M_{11}(p)$,
which under the present formalization of the problem is further equivalent to $c$-optimality for the expected information matrix~\eqref{eq:I} with $c=(0,0,0,1)^{\top}$.
The following result shows that $M_{11}(p)>0$ for any input-feasible design $p$.
It follows that $\eff(p)$ is always well-defined and nonnegative
for any input-feasible design.

\begin{corollary}
\label{corollary:M11}
For any $(\ol{z},\ol{xz}) \in \cj$, $M_{11} = 1-\ol{z}^2-(\ol{xz})^2/\e(x^2) > 0$.
\end{corollary}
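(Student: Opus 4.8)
The plan is to recognize $M_{11}$ as a residual variance, so that the Cauchy--Schwarz inequality delivers $M_{11}\ge 0$ at once, and then to upgrade this to the strict inequality by ruling out the equality case using the constraints that define $\cj$.

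First I would fix $(\ol{z},\ol{xz})\in\cj$ and choose an input-feasible design $p$ realizing these moments --- for instance the convex combination $p=\delta\,p_{\ol{z}}+(1-\delta)\,\bar p$ of the generalized RDD of Lemma~\ref{lemma:stg} with the constant design $\bar p\equiv(1+\ol{z})/2$, where $\delta=\ol{xz}/\ol{xz}_{\max}(\ol{z})\in[0,1]$ (and $\ol{xz}_{\max}(\ol{z})>0$ for $\ol{z}\in(-1,1)$ by Lemma~\ref{lemma:stg} and Assumption~\ref{ass:F}). This $p$ has $\e_p(z)=\ol{z}$ and $\e_p(xz)=\ol{xz}$. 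Working in $L^2$ of $(x,z)$ under $p$ and using $\e(x)=0$, $\e(z^2)=1$ (since $z\in\{-1,1\}$), and $\e(x^2)=\var(x)>0$, the functions $1$ and $x$ are orthogonal, so projecting $z$ onto their span gives
\[
M_{11}=1-\ol{z}^2-\frac{(\ol{xz})^2}{\e(x^2)}=\bigl\| z-\ol{z}-\tfrac{\ol{xz}}{\e(x^2)}\,x\bigr\|^2\ge 0,
\]
which is (equivalently) Cauchy--Schwarz applied to $z-\ol{z}$ and $x$.

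Next I would analyze equality. It forces $z=\ol{z}+\lambda x$ almost surely with $\lambda=\ol{xz}/\e(x^2)$. If $\lambda=0$ then $z=\ol{z}$ a.s., impossible since $z\in\{-1,1\}$ while $-1<\ol{z}<1$ by the definition of $\cj$. If $\lambda\ne 0$, then for $F$-a.e.\ $x$ the variable $z$ is conditionally degenerate, i.e.\ $p(x)\in\{0,1\}$. For $\ol{xz}<\ol{xz}_{\max}(\ol{z})$ this is excluded by taking $\delta<1$ above: then $p(x)=\delta\,p_{\ol{z}}(x)+(1-\delta)(1+\ol{z})/2$ lies strictly between $0$ and $1$ for every $x\in\supp F$ (the second summand is in $(0,1-\delta)$ because $-1<\ol{z}<1$, the first in $[0,\delta]$), so $z$ is genuinely random given $x$, Cauchy--Schwarz is strict, and $M_{11}>0$. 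For $\ol{xz}=\ol{xz}_{\max}(\ol{z})$ the only input-feasible design is $p_{\ol{z}}$ itself (Lemma~\ref{lemma:stg}), and I would check from the explicit form of $p_{\ol{z}}$ that the induced $z=\sign(x-t)$ fails to be an exact affine function of $x$ --- and hence $M_{11}>0$ --- unless $\supp F$ consists of exactly two points and $(\ol{z},\ol{xz})$ is the single boundary point at which $p_{\ol{z}}$ reduces to a sharp two-valued RDD.

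The main obstacle is precisely that last boundary case: Cauchy--Schwarz is tight exactly when $z$ is almost surely affine in $x$, and the constraint $-1<\ol{z}<1$ eliminates only the zero-slope branch; the nonzero-slope branch genuinely arises for a two-point $F$ at the specific $(\ol{z},\ol{xz})$ making the generalized RDD a sharp $\{0,1\}$-valued design, where $M_{11}=0$. I would therefore expect the statement to rely implicitly on $F$ not being supported on exactly two points --- automatic when $F$ is continuous, and, in the reduction to the fixed-$x$ problem~\eqref{eq:p_opt_problem_fd}, guaranteed as soon as there are at least three distinct running-variable values --- under which $p_{\ol{z}}$ is never an exact affine function of $x$ and strictness of $M_{11}>0$ on all of $\cj$ follows.
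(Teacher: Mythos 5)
Your proof is correct and takes a genuinely different route from the paper's. The paper reduces to the extreme case $\ol{xz}=\ol{xz}_{\max}(\ol{z})$ (since $M_{11}$ decreases in $(\ol{xz})^2$) and then applies Cauchy--Schwarz twice, to $xp_{\ol{z}}(x)^{1/2}\times p_{\ol{z}}(x)^{1/2}$ and to $x(1-p_{\ol{z}}(x))^{1/2}\times(1-p_{\ol{z}}(x))^{1/2}$, combining the two bounds via $\e(x^2p_{\ol{z}}(x))+\e(x^2(1-p_{\ol{z}}(x)))=\e(x^2)$. You instead realize $(\ol{z},\ol{xz})$ by an explicit interpolated design, read $M_{11}$ off as the $L^2$ residual of $z$ after projection onto $\mathrm{span}\{1,x\}$, and get nonnegativity from a single Cauchy--Schwarz; for $\ol{xz}<\ol{xz}_{\max}(\ol{z})$ strictness is then immediate because your design has $p(x)$ strictly inside $(0,1)$, so $z$ cannot be any deterministic --- let alone affine --- function of $x$. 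Your route is more transparent about exactly when equality can occur.

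That transparency pays off: your boundary analysis exposes a genuine gap in the statement itself. When $\ol{xz}=\ol{xz}_{\max}(\ol{z})$ the only feasible design is $p_{\ol{z}}$, and $M_{11}=0$ forces $z=\ol{z}+\lambda x$ a.s.\ with $\lambda\neq 0$, hence a two-point support --- and this case really occurs. If $F$ puts mass $q_1$ at $x_1<0$ and $q_2$ at $x_2>0$ with $q_1x_1+q_2x_2=0$ and $\ol{z}=q_2-q_1$, then $1-\ol{z}^2=4q_1q_2$ and $(\ol{xz}_{\max}(\ol{z}))^2/\e(x^2)=4q_2^2x_2^2/(q_2x_2^2/q_1)=4q_1q_2$, so $M_{11}=0$, contradicting the corollary as stated. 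The paper's strictness justification (``$x$ cannot equal a scalar multiple of $p_{\ol{z}}(x)$ w.p.1'') tests the wrong linear-dependence condition: equality in its first Cauchy--Schwarz only requires $x$ to be constant on $\{p_{\ol{z}}>0\}$, which holds in this example. So your proposed caveat --- that $F$ not be supported on exactly two points, automatic for continuous $F$ and for the fixed-$x$ problem with at least three distinct running-variable values --- is exactly the hypothesis needed to make the strict inequality hold on all of $\cj$.
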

\begin{proof}
See Appendix \ref{app:M11}.
\end{proof}

On the other hand,
because $\ci$ is the expected value of a positive semi-definite
rank one matrix,
it is also positive semi-definite.
Thus
\[
0 \le \det(\ci) = \det(D)\det(M) = \e(x^2)\det(M).
\]
This shows that $\det(M) \ge 0$ with inequality iff $\ci$ is invertible. 
Additionally, since $M_{11}$ only depends on $p$ through $\e_p(z)$ and $\e_p(xz)$, 
any two input-feasible designs $p$ and $p'$ satisfying the equality constraints in~\eqref{eq:p_opt_problem}
must have $M_{11}(p)=M_{11}(p')$.
It follows that the solutions to~\eqref{eq:p_opt_problem} under the $D$-optimality criterion $\Psi(\ci(p)) = \Psi_D(\ci(p))$
and the $c$-optimality criterion $\Psi(\ci(p))=\eff(p)$
must be identical
whenever $(\ol{z},\ol{xz}) \in \cj$.

\section{Optimal design characterizations}
\label{sec:characterizations}
To solve the constrained optimization problem~\eqref{eq:p_opt_problem},
we begin by observing that the expected information matrix $\ci(p)$,
computed in~\eqref{eq:I},
only depends on the design function $p$
through the quantities $\e_p(z)$, $\e_p(xz)$, and $\e_p(x^2z)$.
Then the same is true for any efficiency objective $\Psi(\ci(p))$.
Consequently for any continuous $\Psi$ we can write
$\Psi(\ci(p))) = g_{\Psi}(\e_p(z),\e_p(xz),\e_p(x^2z))$
for some continuous $g_{\Psi}:\real^3 \to \real$
that may depend on the running variable distribution $F$.

Fixing $(\ol{z},\ol{xz}) \in \cj$
and the set $\cf$ of permissible design functions,
we say a \textit{feasible} design $p \in \cf$ is one that satisfies the equality constraints in~\eqref{eq:p_opt_problem},
i.e. $\e_p(z)=\ol{z}$ and $\e_p(xz)=\ol{xz}$.
Thus, the efficiency criterion $\Psi(\ci(p))$
can only vary among feasible designs $p$ through the single quantity $\e_p(x^2z)$.
Furthermore, any two feasible designs $p$ and $q$
with $\e_p(x^2z)=\e_q(x^2z)$
must have the same efficiency.
Thus, we can break down the problem~\eqref{eq:p_opt_problem}
into two steps.
First, we find a solution
\begin{equation}
\label{eq:argmax}
\ol{x^2z}^*(\ol{z},\ol{xz};\Psi) \in \argmax_{a \in I_{\cf}(\ol{z},\ol{xz})} g_{\Psi}(\ol{z}, \ol{xz}, a)
\end{equation}
where 
\begin{equation}
\label{eq:I_F}
I_{\cf}(\ol{z},\ol{xz}) = \{\e_p(x^2z) \mid \e_p(z)=\ol{z},\e_p(xz)=\ol{xz} \text{ for some $p \in \cf$}\} \subseteq [-\e(x^2), \e(x^2)]
\end{equation}
is the set of values of $\e_p(x^2z)$ attainable by some feasible design $p \in \cf$.
Then we must find a feasible design $p \in \cf$
that satisfies $\e_p(x^2z)=\ol{x^2z}^*(\ol{z},\ol{xz};\Psi)$.

The next result shows that when $\cf$ is convex,
$I_{\cf}(\ol{z},\ol{xz})$ is an interval.
For our two choices of $\cf$ of interest,
Propositions~\ref{prop:pmax_pmin} and~\ref{prop:pmax_pmin_prime} will show it is a closed interval,
so~\eqref{eq:argmax} will always have a solution when $\Psi(\cdot)$ is continuous.

\begin{lemma}
\label{lemma:convex_combinations}
Suppose feasible designs $p$, $p' \in \cf$ satisfy $\e_{p}(x^2z) \le \e_{p'}(x^2z)$,
where $\cf$ is convex. 
Then if $\e_{p}(x^2z) \le \gamma \le \e_{p'}(x^2z)$,
there exists feasible $p^{(\gamma)} \in \cf$ with $\e_{p^{(\gamma)}}(x^2z) = \gamma$.
\end{lemma}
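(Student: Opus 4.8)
The plan is to exploit an intermediate value argument applied to the map $t \mapsto \e_{p_t}(x^2 z)$ along a straight-line homotopy between $p$ and $p'$. For $t \in [0,1]$ set $p_t \equiv (1-t)p + t p'$. Since $\cf$ is convex and both $p, p' \in \cf$, we have $p_t \in \cf$ for every $t$. Moreover, each of the functionals $\e_{p}(z)$, $\e_{p}(xz)$, $\e_{p}(x^2z)$ is linear in the design function $p$ (they are integrals of $p$ against $1$, $x$, $x^2$ composed with the affine relation~\eqref{eq:tower}). Hence $\e_{p_t}(z) = (1-t)\ol z + t\ol z = \ol z$ and $\e_{p_t}(xz) = \ol{xz}$ for all $t$, so every $p_t$ is feasible, and $h(t) \equiv \e_{p_t}(x^2 z) = (1-t)\e_{p}(x^2z) + t\,\e_{p'}(x^2z)$ is affine in $t$, hence continuous.

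The key steps, in order: (i) verify $p_t \in \cf$ using convexity of $\cf$; (ii) verify feasibility of $p_t$ using linearity of the moment functionals (this is where~\eqref{eq:tower} is invoked: $\e_p(x^a z) = 2\e(x^a p(x)) - \e(x^a)$, and $p \mapsto \e(x^a p(x))$ is linear, with the finiteness of $\e(x^2)$ guaranteed by Assumption~\ref{ass:F} so the $a = 2$ moment is well-defined); (iii) observe $h(0) = \e_{p}(x^2z) \le \gamma \le \e_{p'}(x^2z) = h(1)$ and apply the intermediate value theorem to the continuous function $h$ on $[0,1]$ to obtain $t^\star \in [0,1]$ with $h(t^\star) = \gamma$; (iv) set $p^{(\gamma)} = p_{t^\star}$, which lies in $\cf$, is feasible, and satisfies $\e_{p^{(\gamma)}}(x^2z) = \gamma$ by construction.

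There is essentially no serious obstacle here; the proof is short. The only point that deserves a sentence of care is the interchange of the convex combination with the expectation --- i.e., that $\e_{p_t}(x^2 z)$ really is affine in $t$ --- which follows immediately from linearity of expectation once one writes $\e_{p_t}(x^2 z) = 2\e\big(x^2\,[(1-t)p(x) + t p'(x)]\big) - \e(x^2)$ via~\eqref{eq:tower}, valid since $\e(x^2) < \infty$. One should also note that since $h$ is in fact affine (not merely continuous), $t^\star = (\gamma - \e_{p}(x^2z))/(\e_{p'}(x^2z) - \e_{p}(x^2z))$ explicitly when the denominator is nonzero, and $t^\star$ may be taken arbitrarily (say $0$) when $\e_{p}(x^2z) = \e_{p'}(x^2z) = \gamma$; but invoking the intermediate value theorem sidesteps the need to case-split. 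This also shows $I_{\cf}(\ol z, \ol{xz})$ is convex, hence an interval, as claimed in the surrounding text.
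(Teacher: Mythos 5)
Your proof is correct and takes essentially the same route as the paper: both form the convex combination $\lambda p + (1-\lambda)p'$, use convexity of $\cf$ and linearity of the moment functionals $p \mapsto \e_p(x^az)$ (via~\eqref{eq:tower}) to verify membership and feasibility, and choose the weight so that $\e_{p^{(\gamma)}}(x^2z)=\gamma$. Your invocation of the intermediate value theorem is just a repackaging of the paper's explicit choice of $\lambda$, since the map is affine in the mixing parameter.
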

\begin{proof}
If $\e_{p'}(x^2z)=\e_{p}(x^2z)$ then either of them is a suitable $p^{(\gamma)}$. 
Otherwise take $\lambda \in[0,1]$ so that
$\gamma = \lambda \e_{p}(x^2z) + (1-\lambda) \e_{p'}(x^2z)$.
Then $p^{(\gamma)} = \lambda p+ (1-\lambda) p'$ is in $\cf$ by convexity and,
by direct computation of the moments
$\e_p(x^az)$ for $a \in \{0,1,2\}$,
feasible with $\e_{p}(x^2z)=\gamma$.
\end{proof}
The endpoints of $I_{\cf}(\ol{z},\ol{xz})$ can be computed as the optimal values of the following constrained optimization problems:
\begin{equation}
\label{eq:max_min_inequality}
\begin{array}{lllll}
\mbox{maximize}\quad & \e_p(x^2z) & \qquad 
&\mbox{minimize}\quad & \e_p(x^2z)  \\
\mbox{over} & p \in \cf & \quad  
&\mbox{over} & p \in \cf  \\
\mbox{subject to} & \e_p(z) = \ol{z} & \qquad & \mbox{subject to} & \e_p(z) = \ol{z}  \\
\mbox{and}& \e_p(xz) = \ol{xz} & \qquad &\mbox{and} & \e_p(xz) = \ol{xz}.
\end{array}
\end{equation}
Given solutions $p_{\max}$ and $p_{\min}$ to the problems~\eqref{eq:max_min_inequality},
Lemma~\ref{lemma:convex_combinations} shows that the design
\begin{equation}
\label{eq:optimal_design}
p_{\opt}(x) =
\begin{cases}
p_{\max}(x), & \e_{p_{\max}}(x^2z) \leq \ol{x^2z}^{*}(\ol{z},\ol{xz};\Psi)\\
p_{\min}(x), & \e_{p_{\min}}(x^2z) \geq \ol{x^2z}^{*}(\ol{z},\ol{xz};\Psi)\\
\lambda p_{\min}(x) + (1-\lambda)p_{\max}(x), &   \text{else}
\end{cases}
\end{equation}
solves the problem~\eqref{eq:p_opt_problem} for $\lambda = (\e_{p_{\max}}(x^2z)- \ol{x^2z}^{*}(\ol{z},\ol{xz};\Psi))/(\e_{p_{\max}}(x^2z)-\e_{p_{\min}}(x^2z))$.
If $\e_{p_{\max}}(x^2z)=\e_{p_{\min}}(x^2z)$ then all feasible designs $p$ have the same efficiency, so any one of them is optimal.

The remainder of this section is concerned with characterizing the solutions to the problems~\eqref{eq:max_min_inequality}
for two specific choices of design function classes $\cf$: 
the set of all measurable functions into $[0,1]$,
and the set of all such \textit{monotone} functions.
For these two choices of $\cf$,
solutions $p_{\max}$ and $p_{\min}$ exist
for any $(\ol{z},\ol{xz}) \in \cj$
and are unique.
Our argument uses extensions of the Neyman-Pearson lemma~\citep{neymanpearson1933} in hypothesis testing.
These extensions are in~\cite{dantzigwald1951},
whose two authors discovered the relevant results independently of each other.
We use a modern formulation of their work,
adapting the presentation by~\citet{lehmannromano2005}:
\begin{lemma} \label{lemma:np}
Consider any measurable $h_1,\dots,h_{m+1}:\real \rightarrow \real$ 
with $\e(|h_i(x)|)<\infty, i=1,\dots,m+1$. 
Define $S \subseteq \real^m$ to be the set of all points $c=(c_1,\dots,c_m)$ such that 
\begin{equation}
\label{eq:np_constraints}
\e(p(x)h_i(x))=c_i, \quad i=1,\dots,m,
\end{equation}
for some $p \in \cf$, where $\cf$ is some collection of measurable functions from $\real$ into $[0,1]$.
For each $c \in S$ let $\cf_c$ be the set of all $p \in \cf$ satisfying~\eqref{eq:np_constraints}. If $\cf$ is such that 
\[
S' = \bigl\{(c,c_{m+1}) \in \real^{m+1}\mid c \in S,\  c_{m+1}=\e(p(x)h_{m+1}(x)) \text{ for some $p \in \cf_c$}\bigr\}
\] 
is closed and convex and $c \in \intr \ S$, then
\begin{enumerate}
    \item \label{lemma:np:existence} There exists $p \in \cf_c$ and $k_1,\ldots,k_m \in \real$ such that
    \begin{equation} \label{eq:np_um}
        p \in \argmax_{q \in \cf} \e\biggl(q(x)\biggl(h_{m+1}(x)-\sum_{i=1}^m k_ih_i(x)\biggr)\biggr),\quad\text{and}
    \end{equation}
    \item \label{lemma:np:suff_nec} $p \in \argmax_{q \in \cf_c} \e(q(x)h_{m+1}(x))$ if and only if $p \in \cf_c$ satisfies~\eqref{eq:np_um} for some $k_1,\dots,k_m$.
\end{enumerate}
\end{lemma}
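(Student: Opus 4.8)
The plan is to read~\eqref{eq:np_um} as the assertion that $p$ maximises a Lagrangian, and to extract the multipliers $k_1,\dots,k_m$ from a supporting hyperplane to the convex set $S'$ at the point lying directly above $c$. Fix $c\in\intr S$. I would first verify that the constrained optimum is attained: the slice $\{t\in\real : (c,t)\in S'\}$ is closed, being the preimage of the closed set $S'$ under $t\mapsto(c,t)$; it is nonempty since $c\in S$; and it is bounded since any $(c,t)\in S'$ has $t=\e(p(x)h_{m+1}(x))$ for some $p\in\cf_c$, so $|t|\le\e(|h_{m+1}(x)|)<\infty$. Hence $c^*:=\max\{t:(c,t)\in S'\}$ exists and is attained by some $p^*\in\cf_c$ with $\e(p^*(x)h_{m+1}(x))=c^*$. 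This $p^*$ will be the design of part~\ref{lemma:np:existence}; everything reduces to producing the right multipliers.

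To get them, for each $\delta>0$ I would strongly separate the point $y_\delta:=(c,c^*+\delta)$ --- which lies outside $S'$ by maximality of $c^*$ --- from the closed convex set $S'$, obtaining a unit vector $a^{(\delta)}=(a^{(\delta)}_1,\dots,a^{(\delta)}_{m+1})$ with $b^{(\delta)}:=\sup_{y\in S'}(a^{(\delta)})^{\tran}y<(a^{(\delta)})^{\tran}y_\delta$. Comparing $y_\delta$ with the point $(c,c^*)\in S'$ forces $a^{(\delta)}_{m+1}\delta>0$, so $a^{(\delta)}_{m+1}>0$. A single $\delta$ yields only a strict inequality, which $p^*$ need not attain, so I would then let $\delta\downarrow0$ along a sequence, pass to a subsequence along which $a^{(\delta)}\to a^*$ with $\|a^*\|=1$, and use the squeeze $0\le b^{(\delta)}-(a^{(\delta)})^{\tran}(c,c^*)<a^{(\delta)}_{m+1}\delta$ to conclude $b^{(\delta)}\to(a^*)^{\tran}(c,c^*)$. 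Passing to the limit in $(a^{(\delta)})^{\tran}y\le b^{(\delta)}$ then gives a genuine supporting inequality $(a^*)^{\tran}y\le(a^*)^{\tran}(c,c^*)$ for all $y\in S'$, with equality at $(c,c^*)$.

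The hard part, and the step I expect to be the main obstacle, is ruling out the ``vertical'' case $a^*_{m+1}=0$; this is exactly where $c\in\intr S$ is indispensable. Write $a^*=(\tilde a,a^*_{m+1})$ with $\tilde a\in\real^m$. If $a^*_{m+1}=0$ then $\tilde a\ne0$ (as $\|a^*\|=1$), and since the first $m$ coordinates of the points of $S'$ sweep out all of $S$, the supporting inequality reduces to $\tilde a^{\tran}u\le\tilde a^{\tran}c$ for every $u\in S$; that is, $c$ maximises the nonconstant linear functional $\tilde a^{\tran}(\cdot)$ over $S$, impossible at an interior point. Hence $a^*_{m+1}>0$; rescaling so that $a^*_{m+1}=1$ and setting $k_i:=-\tilde a_i$, I note that each $q\in\cf$ contributes the point $\bigl((\e(q(x)h_1(x)),\dots,\e(q(x)h_m(x))),\e(q(x)h_{m+1}(x))\bigr)$ to $S'$, so the supporting inequality becomes $\e\bigl(q(x)(h_{m+1}(x)-\sum_{i=1}^m k_ih_i(x))\bigr)\le c^*-\sum_{i=1}^m k_ic_i$, and substituting $p^*$ --- for which $\e(p^*h_i)=c_i$ and $\e(p^*h_{m+1})=c^*$ --- attains the right-hand side. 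This is precisely~\eqref{eq:np_um}, proving part~\ref{lemma:np:existence}.

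For part~\ref{lemma:np:suff_nec}: the ``if'' direction uses no regularity hypotheses, since if $p\in\cf_c$ satisfies~\eqref{eq:np_um} for some $k_1,\dots,k_m$, then for any competitor $q\in\cf_c$ the linear terms cancel, $\sum_i k_i\e(qh_i)=\sum_i k_ic_i=\sum_i k_i\e(ph_i)$, leaving $\e(ph_{m+1})\ge\e(qh_{m+1})$, so $p$ is constrained-optimal. For ``only if'', any constrained-optimal $p\in\cf_c$ has $\e(ph_{m+1})=c^*$ and therefore attains the upper bound $c^*-\sum_i k_ic_i$ of the supporting inequality for the very multipliers $k_i$ built above, which is exactly~\eqref{eq:np_um} for $p$ with those $k_i$. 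In summary, closedness of $S'$ delivers the attainment of $c^*$, convexity of $S'$ delivers the separating and supporting hyperplanes, and $c\in\intr S$ forces the hyperplane at $(c,c^*)$ to be non-vertical --- which is what turns it into an honest set of Lagrange multipliers.
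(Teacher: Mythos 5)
Your proof is correct and takes essentially the same route as the paper, which simply defers to the proof of Theorem 3.6.1(iv) and (ii) in Lehmann and Romano (2005) --- an argument that, as the paper notes, constructs exactly the separating/supporting hyperplane in $\real^{m+1}$ from closedness and convexity of $S'$, with $c \in \intr\, S$ ruling out the vertical hyperplane. The only difference is that you write the argument out self-contained (attainment of $c^*$ from the closed bounded slice, strong separation of $(c,c^*+\delta)$, the limit $\delta \downarrow 0$, and the method of undetermined multipliers for part~2), where the paper cites it.
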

\begin{proof}
Claim \ref{lemma:np:existence} and necessity of~\eqref{eq:np_um} in claim \ref{lemma:np:suff_nec} follows from the proof of part (iv) of Theorem 3.6.1 in \citet{lehmannromano2005}, which uses the fact that $S'$ is closed and convex to construct a separating hyperplane in $\real^{m+1}$. Sufficiency of~\eqref{eq:np_um} in claim \ref{lemma:np:suff_nec} follows from part (ii) of that theorem, and is often called the method of undetermined multipliers.
\end{proof}
Lemma~\ref{lemma:np} equates a constrained optimization problem (item~\ref{lemma:np:suff_nec})
and a compound optimization problem (item~\ref{lemma:np:existence}).
Unlike typical equivalence theorems,
it does not require $\cf$ to be the set of all measurable design functions,
and uses an entirely different proof technique.
Following~\citet{whittle1973},
equivalence theorems in optimal design are now popularly proven 
using the concept of Fr\'{e}chet derivatives
on the space of design functions (measures).
However,
such approaches often do not apply when $\cf$ is restricted to be the set of all monotone design functions.
Most relevant to our problem, the proof of Corollary 2.1 in the supplement of~\citet{metelkinapronzato2017}
involves Fr\'{e}chet derivatives
in the direction of design functions supported at a single value of $x$,
which are not monotone.
However, the use of Lemma~\ref{lemma:np} requires an objective linear in $p$,
where typical equivalence theorems only require concavity.

\subsection{Globally optimal designs}
\label{sec:global_optimal}
We now solve the design problem~\eqref{eq:p_opt_problem} in the case that $\cf$ is the set of 
all measurable functions $p:\real\to[0,1]$.
We first explain how the results of~\citet{metelkinapronzato2017} do not adequately do so already.
Identifying our design functions $p$ with their design measures $\xi$,
Corollary 2.1 of~\citet{metelkinapronzato2017}
does not provide any information about what an optimal solution $p_{\opt}(x)$ to~\eqref{eq:p_opt_problem} would be for values of $x$ where $G_1(p_{\opt}(\cdot);x)=G_2(p_{\opt}(\cdot);x)$.
Here $G_1$ and $G_2$ are quantities derived from the aforementioned Fr\'{e}chet derivatives,
depending on the constraints $\ol{z}$ and $\ol{xz}$.
Unfortunately, this lack of information about $p_{\opt}$ holds for all $x$ both in their Example 1 and in our setting.
Their 
example skirts this limitation by noting some moment conditions on $p_{\opt}$ implied by the equality $G_1=G_2$
when the running variable is uniform and the efficiency criterion is $D$-optimality,
and then manually searching for some parametric forms of $p_{\opt}$ for which it is possible to satisfy these conditions.
By contrast,
the results in this section apply Lemma~\ref{lemma:np} with $\cf$ the set of all design functions,
and show a simple
stratified design function
is always optimal for any running variable distribution $F$ and continuous efficiency criterion.
This enables optimal designs to be systematically and efficiently constructed
(Section~\ref{sec:examples}).

We will apply Lemma~\ref{lemma:np} with $m=2$ constraints pertaining to $h_1(x)=1$ and $h_2(x)=x$. 
Our objective function is based on $h_3(x)=x^2$.
When the running variable distribution $F$ is continuous,
recalling the notation~\eqref{eq:p_A}
the solutions to~\eqref{eq:max_min_inequality} take the forms $p_{\max} = p_{[a_1,a_2]^c}$ and $p_{\min}=p_{[b_1,b_2]}$ for some intervals $[a_1,a_2]$ and $[b_1,b_2]$.
\begin{proposition} 
\label{prop:pmax_pmin}
Let $\cf$ be the set of all measurable functions from $\real$ into $[0,1]$.
For any $(\ol{z},\ol{xz}) \in \cj$, there exist unique solutions $p_{\max}$ and $p_{\min}$ to the optimization problems~\eqref{eq:max_min_inequality}.
These solutions are the unique feasible designs satisfying
\begin{equation}\label{eq:gen_p_max_min}
p_{\max}(x)=
\begin{cases}
1, & x \not\in[a_1,a_2]\\
0, & x\in(a_1,a_2)
\end{cases} 
\qquad \text{and}\qquad
p_{\min}(x) = 
\begin{cases}
1, & x\in(b_1,b_2) \\
0, & x \not\in[b_1,b_2]
\end{cases}
\end{equation}
for some $a_1 \le a_2$ and $b_1 \le b_2$ which depend on $(\ol{z},\ol{xz})$ and can be infinite if $\ol{xz}=\ol{xz}_{\max}(\ol{z})$.
\end{proposition}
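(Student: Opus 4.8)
The plan is to recast the two problems in~\eqref{eq:max_min_inequality} as instances of the generalized Neyman--Pearson problem in Lemma~\ref{lemma:np}. By the tower identity~\eqref{eq:tower}, the constraints $\e_p(z)=\ol{z}$, $\e_p(xz)=\ol{xz}$ are equivalent to the linear constraints $\e(p(x))=c_1:=(1+\ol{z})/2$ and $\e(xp(x))=c_2:=\ol{xz}/2$, and $\e_p(x^2z)=2\e(x^2p(x))-\e(x^2)$; so ``maximize $\e_p(x^2z)$'' (resp.\ ``minimize'') over feasible $p\in\cf$ is the same as maximizing $\e(h_3(x)p(x))$ with $h_1(x)=1$, $h_2(x)=x$, $h_3(x)=x^2$ (resp.\ $h_3(x)=-x^2$), which is exactly the problem of Lemma~\ref{lemma:np} with $m=2$. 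The upper-boundary case $\ol{xz}=\ol{xz}_{\max}(\ol{z})$ I would dispose of first: by Lemma~\ref{lemma:stg} the only feasible design is the generalized RDD $p_{\ol{z}}$, so $p_{\max}=p_{\min}=p_{\ol{z}}$, which has the form~\eqref{eq:gen_p_max_min} with $(a_1,a_2)=(-\infty,t)$ and $(b_1,b_2)=(t,+\infty)$ for the threshold $t$ of Lemma~\ref{lemma:stg} --- this is the infinite case in the statement. So assume henceforth $0\le\ol{xz}<\ol{xz}_{\max}(\ol{z})$ with $\ol{z}\in(-1,1)$, hence $c_1\in(0,1)$.

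I would then verify the hypotheses of Lemma~\ref{lemma:np}. Under Assumption~\ref{ass:F} we have $h_1,h_2,h_3\in L^1(F)$, so $p\mapsto(\e(h_1p),\e(h_2p),\e(h_3p))$ is weak-$*$ continuous and linear on $L^\infty(F)$; since $\cf$ is convex and weak-$*$ compact (Banach--Alaoglu), its image $S'$ is compact and convex, hence closed and convex. For interiority, $S=\{(\e(p),\e(xp)):p\in\cf\}$ is the convex region $\{(t,y):t\in[0,1],\ \ell(t)\le y\le u(t)\}$, where for $t\in(0,1)$ the upper face is $u(t)=\ol{xz}_{\max}(\ol{z})/2>0$, attained by the RDD treating a fraction $t$ (Lemma~\ref{lemma:stg}), and $\ell(t)<0$ by the symmetric minimization. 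Since $c_1\in(0,1)$ and $\ell(c_1)<0\le c_2<u(c_1)$, we get $c=(c_1,c_2)\in\intr S$, so Lemma~\ref{lemma:np} applies.

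By part~\ref{lemma:np:existence} of Lemma~\ref{lemma:np} there exist a feasible $p_{\max}\in\cf$ and $k_1,k_2\in\real$ with $p_{\max}\in\argmax_{q\in\cf}\e\bigl(q(x)(x^2-k_2x-k_1)\bigr)$, and the pointwise maximizer must be $1$ on $\{x^2-k_2x-k_1>0\}$ and $0$ on $\{x^2-k_2x-k_1<0\}$. The monic quadratic $x\mapsto x^2-k_2x-k_1$ opens upward, so its positivity set is the complement of the closed interval $[a_1,a_2]$ between its roots; the roots are real --- else the quadratic is everywhere positive, forcing $p_{\max}\equiv1$ $F$-a.s.\ and contradicting $c_1<1$ --- and finite because $k_1,k_2$ are, with $a_1=a_2$ possible only at an atom of $F$. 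This is precisely the form~\eqref{eq:gen_p_max_min}, with the values of $p_{\max}$ at $a_1,a_2$ (only relevant at atoms) determined by the two equality constraints; running the same argument with $h_3(x)=-x^2$ gives $p_{\min}$ equal to $1$ on an open interval $(b_1,b_2)$ and $0$ outside $[b_1,b_2]$. Conversely, any feasible design of the form~\eqref{eq:gen_p_max_min} satisfies the optimality condition~\eqref{eq:np_um} with the monic quadratic vanishing at its two endpoints, hence is optimal by part~\ref{lemma:np:suff_nec} of Lemma~\ref{lemma:np}; so together with the uniqueness argument below, $p_{\max}$ (and likewise $p_{\min}$) is the unique feasible design of the stated form.

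For uniqueness of the optimum, suppose $p,p'\in\cf$ both solve the ``maximize'' problem. Then $\e((p-p')h_i)=0$ for $i=1,2$ by feasibility and for $i=3$ since both are optimal. With $r(x):=x^2-k_2x-k_1$ the multiplier combination for $p=p_{\max}$, the Neyman--Pearson form of $p$ forces $(p(x)-p'(x))r(x)\ge0$ everywhere ($p=1\ge p'$ where $r>0$, $p=0\le p'$ where $r<0$), while $\e\bigl((p-p')r\bigr)=\e((p-p')h_3)-k_2\e((p-p')h_2)-k_1\e((p-p')h_1)=0$; hence $(p-p')r=0$ $F$-a.s., so $p=p'$ $F$-a.s.\ off $\{a_1,a_2\}$. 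If $\Pr(x\in\{a_1,a_2\})=0$ we are done; otherwise the residual masses $\alpha:=(p(a_1)-p'(a_1))\Pr(x=a_1)$ and $\beta:=(p(a_2)-p'(a_2))\Pr(x=a_2)$ satisfy $\alpha+\beta=0$ and $a_1\alpha+a_2\beta=0$, so $\alpha=\beta=0$ (as $a_1\ne a_2$, or only one atom is present), giving $p=p'$ $F$-a.s.; the same argument with $-x^2$ handles $p_{\min}$. The step I expect to be the main obstacle is keeping everything valid for atomic $F$ --- the finiteness of $a_1,a_2$ once $\ol{xz}<\ol{xz}_{\max}(\ol{z})$ and the atom bookkeeping in the uniqueness step --- together with the clean verification that $c\in\intr S$ throughout $\cj$.
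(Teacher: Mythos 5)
Your proof is correct, and for existence and the characterization~\eqref{eq:gen_p_max_min} it follows the same route as the paper: dispose of the boundary case $\ol{xz}=\ol{xz}_{\max}(\ol{z})$ via Lemma~\ref{lemma:stg}, invoke Lemma~\ref{lemma:np} with $m=2$ and $h_3(x)=x^2$, check $c\in\intr\,S$ by describing $S$ as the convex region between the reverse-RDD and RDD values of $\e(xp(x))$ for each treatment fraction, and read off the one--zero structure of a pointwise maximizer of $\e(q(x)(x-a_1)(x-a_2))$, ruling out a rootless quadratic by $\ol{z}<1$. Where you genuinely diverge is the uniqueness step. The paper deliberately avoids Lemma~\ref{lemma:np} here and gives a direct combinatorial argument (Appendix~\ref{app:uniqueness}): it canonicalizes the interval endpoints so that $F$ puts mass arbitrarily close to them, then does casework on the relative positions of $b_1,b_1'$ etc., using the two moment constraints to force the endpoints and atom weights to agree. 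You instead run a complementary-slackness argument: for two optimal $p,p'$ and the multiplier quadratic $r$ attached to $p$, the Neyman--Pearson form gives $(p-p')r\ge 0$ pointwise while the three matched moments give $\e((p-p')r)=0$, so $p=p'$ a.s.\ off the roots, and the residual atom masses are killed by the $2\times 2$ linear system $\alpha+\beta=0$, $a_1\alpha+a_2\beta=0$. Your version is shorter and transfers the atom bookkeeping to a single determinant computation; the paper's version has the modest advantage of being self-contained (it does not presuppose that an optimal solution of Neyman--Pearson form exists), which is presumably why the authors flag that it ``does not rely on Lemma~\ref{lemma:np}.'' Both are valid; just make sure, as you do, that the argument only uses the multiplier of one optimizer, since the second optimizer's own multipliers could a priori differ.
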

\begin{proof}
If $\ol{xz}=\ol{xz}_{\max}(\ol{z})$, then the proposition follows by Lemma~\ref{lemma:stg} and taking
$a_1=-\infty$, $a_2=t=b_1$ and $b_2=\infty$. 
Thus we can assume that $\ol{xz} < \ol{xz}_{\max}(\ol{z})$.
We give the proof for $p_{\max}$ in detail. The argument for $p_{\min}$ is completely symmetric. 

As noted above, we are in the setting of Lemma~\ref{lemma:np} with $m=2$, $h_1(x)=1$, $h_2(x)=x$, and $h_3(x)=x^2$.
The collection $\cf$ here is the set of all measurable functions from $\real$ into $[0,1]$,
so the corresponding $S'$ is closed and convex, as shown in part (iv) of Theorem 3.6.1 in~\citet{lehmannromano2005}. 
By Lemma~\ref{lemma:stg} and~\eqref{eq:tower} we can write $\intr \ S = \varphi_1(\ct)$ 
where $\varphi_1$ is defined in the discussion around~\eqref{eq:tower} and
\[
\ct = \{(\ol{z},\ol{xz}) \mid -1 < \ol{z} < 1, \ol{xz}_{\min}(\ol{z}) < \ol{xz} < \ol{xz}_{\max}(\ol{z})\}
\]
Hence our previous assumption $\ol{xz} < \ol{xz}_{\max}(\ol{z})$ ensures $c=\varphi_1(\ol{z},\ol{xz}) \in \intr \ S$.

With the conditions of Lemma~\ref{lemma:np} satisfied,
we now show that~\eqref{eq:np_um} is equivalent to~\eqref{eq:gen_p_max_min} for any feasible $p_{\max}$. 
A feasible design $p_{\max}$ satisfies~\eqref{eq:np_um} iff $p_{\max} \in \argmax_{q \in \cf} \e(q(x)(x^2-k_1x-k_2))$ for some $k_1,k_2$,
or equivalently
\begin{equation} \label{eq:one_zero}
p_{\max}(x) = 
\begin{cases}
1, & x^2-k_1x-k_2 > 0 \\
0, & x^2-k_1x-k_2 < 0
\end{cases}
\end{equation}
(cf. part (ii) of Theorem 3.6.1 in~\citet{lehmannromano2005}). 
If $x^2-k_1x-k_2$ has no real roots then $p_{\max}(x)=1$ for all $x$, contradicting $\ol{z}<1$. Thus we write $x^2-k_1x-k_2=(x-a_1)(x-a_2)$ for some (real) $a_1 \le a_2$, showing that~\eqref{eq:one_zero} is equivalent to~\eqref{eq:gen_p_max_min}.
We can now conclude, by the second claim in Lemma~\ref{lemma:np}, 
that the set of optimal solutions to~\eqref{eq:max_min_inequality} 
contains precisely those feasible designs satisfying~\eqref{eq:gen_p_max_min}.
Furthermore, the first claim of Lemma~\ref{lemma:np} ensures that such a design must exist.
It remains to show only one feasible design can satisfy~\eqref{eq:gen_p_max_min};
in Appendix~\ref{app:uniqueness} we provide a direct argument, which does not rely on Lemma~\ref{lemma:np}.

\end{proof}
\begin{remark}
The necessity and sufficiency results of Proposition~\ref{prop:pmax_pmin} do
follow from Corollary 2.1 of~\citet{metelkinapronzato2017}.
We again identify our design functions $p$ with their design measures $\xi$
and take $\psi(\xi) = \int x^2 \rd\xi(x)$,
which can be written as an affine function $\bm{\Psi}(\cdot)$
of the expected information matrix $\ci(\xi)$.
As discussed at the beginning of this section, however, the form of a solution to problem~\eqref{eq:p_opt_problem},
cannot be constructed from their Corollary
without the reduction to~\eqref{eq:max_min_inequality}
and applying~\eqref{eq:optimal_design},
so that $\psi$ is as above rather than
something like $D$-optimality.
We have also shown a stronger uniqueness result than Section 2.3.3 of~\citet{metelkinapronzato2017},
which only applies when the running variable distribution $F$ has a density with respect to Lebesgue measure.
Our Lemma~\ref{lemma:np} also provides an existence guarantee that does not rely on strict concavity of $\bm{\Psi}(\cdot)$
on the set of positive definite matrices;
this is violated by the affine choice we need here.
\end{remark}

As we will see in Section~\ref{sec:examples}, 
when $\ol{z} < 0$ we frequently encounter $p_{\opt}=p_{\max}$
under $D$-optimality.
In this case it is intuitive that there is an
efficiency advantage to strategically allocate the rare level $z=1$ at both high and low $x$, compared to a three level tie-breaker.
But such a design is usually unacceptable in our motivating problems.
We will thus constrain $\cf$ to the set of monotone design functions in Section~\ref{sec:monotonicity}.

Before doing that, we present an alternative solution to~\eqref{eq:p_opt_problem}
assuming the running variable distribution has an additional moment.
This result shows
that when the running variable is continuous,
an optimal design with no randomization always exists.
However, randomized assignment
becomes essential once we restrict our attention to monotone designs in Section~\ref{sec:monotonicity},
as the only non-randomized monotone designs
are generalized RDDs (Remark~\ref{remark:generalized_RDD}).
\begin{theorem}
\label{thm:alt_pmax_pmin}
Suppose $\e(|x|^3) < \infty$.
Then when $\cf$ is the set of all measurable design functions,
for any $(\ol{z},\ol{xz}) \in \cj$
there exists a solution to~\eqref{eq:p_opt_problem} with 
\begin{equation}
\label{eq:p_opt} 
p_{\opt}(x) =
\begin{cases}
1, & x < a_1 \\
0, & a_1 < x < a_2 \\
1, & a_2 < x < a_3 \\
0, & x > a_3
\end{cases}
\end{equation}
for some $a_1 \le a_2 \le a_3$ which are finite 
unless $p_{\opt}$ is one of the designs $p_{\max}$ or $p_{\min}$ 
in~\eqref{eq:gen_p_max_min}.
\end{theorem}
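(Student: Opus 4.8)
\emph{Proof plan.} Write $\alpha:=\ol{x^2z}^*(\ol z,\ol{xz};\Psi)$ for the target value from~\eqref{eq:argmax}; by Proposition~\ref{prop:pmax_pmin} and Lemma~\ref{lemma:convex_combinations}, $\alpha$ lies in the closed interval $I_{\cf}(\ol z,\ol{xz})=[\e_{p_{\min}}(x^2z),\e_{p_{\max}}(x^2z)]$, where $p_{\max},p_{\min}$ solve~\eqref{eq:max_min_inequality}. By the reduction in Section~\ref{sec:characterizations}, it suffices to exhibit \emph{some} feasible $p\in\cf$ (i.e.\ $\e_p(z)=\ol z$ and $\e_p(xz)=\ol{xz}$) with $\e_p(x^2z)=\alpha$ and of the form~\eqref{eq:p_opt}; the randomized convex combination in~\eqref{eq:optimal_design} already has the right three moments but the wrong shape, so the real task is to de-randomize it. The plan is to treat $\e_p(x^2z)=\alpha$ as a \emph{third} equality constraint and apply the Dantzig--Wald / Neyman--Pearson machinery of Lemma~\ref{lemma:np} with $h_1(x)=1$, $h_2(x)=x$, $h_3(x)=x^2$, and linear objective $h_4(x)=-x^3$. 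Over the class $\cf$ of all measurable $p:\real\to[0,1]$, a pointwise maximizer of $q\mapsto\e\bigl(q(x)(-x^3-k_1-k_2x-k_3x^2)\bigr)$ is $\{0,1\}$-valued, equal to $1$ precisely where the cubic $\pi(x):=-x^3-k_3x^2-k_2x-k_1$ is positive, and the sign pattern of such a cubic is exactly what produces~\eqref{eq:p_opt}. This is the role of the hypothesis $\e(|x|^3)<\infty$: it makes $h_4$ integrable.

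First I would dispose of the degenerate case in which $\alpha$ is an endpoint of $I_{\cf}(\ol z,\ol{xz})$. If $\alpha\in\{\e_{p_{\min}}(x^2z),\e_{p_{\max}}(x^2z)\}$, then~\eqref{eq:optimal_design} permits $p_{\opt}\in\{p_{\max},p_{\min}\}$, and by~\eqref{eq:gen_p_max_min} these already match~\eqref{eq:p_opt} with an infinite threshold: $p_{\max}=p_{[a_1,a_2]^c}$ corresponds to $a_3=\infty$, and $p_{\min}=p_{[b_1,b_2]}$ to $a_1=-\infty,a_2=b_1,a_3=b_2$. This is the exceptional clause of the theorem (and it absorbs the case $\ol{xz}=\ol{xz}_{\max}(\ol z)$, which Lemma~\ref{lemma:stg} handles directly). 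So in the main case I may assume $\e_{p_{\min}}(x^2z)<\alpha<\e_{p_{\max}}(x^2z)$, which makes $I_{\cf}(\ol z,\ol{xz})$ nondegenerate and hence, by Lemma~\ref{lemma:stg}, forces $\ol{xz}<\ol{xz}_{\max}(\ol z)$, so that $(\ol z,\ol{xz})\in\ct$.

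In this main case I would verify the hypotheses of Lemma~\ref{lemma:np} with $m=3$ and the $h_i$ above. Integrability of $h_1,\dots,h_4$ follows from Assumption~\ref{ass:F} and $\e(|x|^3)<\infty$; the $4$-dimensional set $S'$ of Lemma~\ref{lemma:np} is closed and convex for this $\cf$, exactly as invoked in the proof of Proposition~\ref{prop:pmax_pmin} (part (iv) of Theorem 3.6.1 of~\citet{lehmannromano2005}); and by~\eqref{eq:tower} and the maps $\varphi_a$ the three equality constraints amount to fixing $c=(\e(p(x)),\e(xp(x)),\e(x^2p(x)))\in\real^3$ at the $\varphi_3$-image of $(\ol z,\ol{xz},\alpha)$. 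The point I expect to require the most care is the interiority hypothesis $c\in\intr S$, where $S\subseteq\real^3$ is the moment body of $(1,x,x^2)$ under $[0,1]$-weights. I would argue: since $(\ol z,\ol{xz})\in\ct$, the projection of $c$ onto its first two coordinates is interior to the corresponding projection of $S$; the fiber of $S$ over that point is an affine image of $I_{\cf}(\ol z,\ol{xz})$, in whose interior $\alpha$ lies by the previous step; and because $S$ is a closed convex subset of $\real^3$, a point projecting into the interior of its shadow and lying in the interior of its fiber is interior to $S$. Granting this, part~\ref{lemma:np:existence} of Lemma~\ref{lemma:np} produces a feasible $p$ and $k_1,k_2,k_3$ with $p\in\argmax_{q\in\cf}\e\bigl(q(x)(-x^3-k_1-k_2x-k_3x^2)\bigr)$, so $p=1$ on $\{\pi>0\}$ and $p=0$ on $\{\pi<0\}$, with the value of $p$ at the finitely many roots of $\pi$ unconstrained and irrelevant for~\eqref{eq:p_opt}.

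It then remains to read off the shape of $p$ from $\pi$, which I expect to be routine. Since $\pi$ has leading coefficient $-1$, $\pi\to+\infty$ as $x\to-\infty$ and $\pi\to-\infty$ as $x\to+\infty$; hence $\{\pi>0\}$ is either a left half-line $(-\infty,r)$ (when $\pi$ has a single real root or a repeated one) or $(-\infty,a_1)\cup(a_2,a_3)$ for three distinct real roots $a_1<a_2<a_3$. In the first case $p=p_{1,0,r}$, but any feasible design of this generalized-RDD form attains $\e_p(xz)=\ol{xz}_{\min}(\ol z)<0\le\ol{xz}$ (the minimum short-term gain, as noted after Lemma~\ref{lemma:stg}), contradicting feasibility; hence $\pi$ has three distinct finite roots and $p$ is exactly of the form~\eqref{eq:p_opt}. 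The only genuinely delicate step is the convex-geometry fact used for $c\in\intr S$; the sign analysis of the cubic, the exclusion of degenerate roots, and the identification of the exceptional designs with~\eqref{eq:p_opt} are all straightforward.
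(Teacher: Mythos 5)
Your proposal follows essentially the same route as the paper: reduce to producing a feasible design with $\e_p(x^2z)=\ol{x^2z}^*(\ol z,\ol{xz};\Psi)$, dispose of the endpoint case via Proposition~\ref{prop:pmax_pmin}, and in the interior case invoke Lemma~\ref{lemma:np} with $m=3$, constraints $1,x,x^2$, objective $-x^3$, then read off the form of $p_{\opt}$ from the sign pattern of the cubic. Your expanded justification of $c\in\intr\ S$ (interior shadow plus interior fiber of a closed convex set) is correct and is in fact more careful than the paper's one-line assertion.

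One step is overstated: you claim that whenever the cubic $\pi$ fails to have three \emph{distinct} real roots, $p$ must be the reversed generalized RDD $p_{1,0,r}$ and hence infeasible, so that the roots are necessarily distinct. This is not quite right when $F$ has atoms. For $\pi(x)=-(x-a)^2(x-b)$ with $a<b$, the argmax condition forces $p=1$ on $(-\infty,b)\setminus\{a\}$ and $p=0$ on $(b,\infty)$ but leaves $p(a)$ free; if $\Pr(x=a)>0$ such a design can have $\e_p(xz)\ge 0$ and be feasible, so no contradiction arises. The theorem is unharmed because this design is exactly of the form~\eqref{eq:p_opt} with $a_1=a_2=a$, $a_3=b$ (and symmetrically for a repeated upper root). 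The paper handles this by ruling out only the genuinely one-real-root case (where $\pi$ changes sign exactly once, forcing $\e_p(xz)<0$) and then writing $\pi=-(x-a_1)(x-a_2)(x-a_3)$ with $a_1\le a_2\le a_3$ allowed to coincide, which is the statement's intended generality. With that adjustment your argument matches the paper's.
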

\begin{proof}
The solutions to~\eqref{eq:p_opt_problem}
are precisely the feasible design functions $p \in \cf$ where $\e_p(x^2z)$ is a solution to~\eqref{eq:argmax}.
Fix any such solution $\ol{x^2z}^*(\ol{z},\ol{xz};\Psi)$ to~\eqref{eq:argmax};
it suffices to find a feasible design $p_{\opt}$
with $\e_{p_{\opt}}(x^2z) = \ol{x^2z}^*(\ol{z},\ol{xz};\Psi)$.
If $\ol{x^2z}^*(\ol{z},\ol{xz};\Psi)$ is the lower (resp. upper) endpoint of the interval $I_{\cf}(\ol{z},\ol{xz})$,
then by Proposition~\ref{prop:pmax_pmin},
the unique feasible design with $\e_p(x^2z)=\ol{x^2z}^*(\ol{z},\ol{xz};\Psi)$ 
is the design $p_{\min}$ (resp. $p_{\max}$).
Then the result follows
with $a_1=-\infty$ (resp. $a_3=\infty$).

Otherwise, $\ol{x^2z}^*(\ol{z},\ol{xz};\Psi)$ is in the interior of $I_{\cf}(\ol{z},\ol{xz})$,
and we aim to apply Lemma~\ref{lemma:np} with $m=3$, $h_i(x)=x^i$ for $i \in \{1,2,3\}$,
and $c=\varphi_2(\ol{z},\ol{xz},\ol{x^2z}^*(\ol{z},\ol{xz};\Psi))$.
With $S'$ closed and convex as shown in Proposition~\ref{prop:pmax_pmin},
we only need to show $c \in \intr\ S$.
With the interior of $I_{\cf}(\ol{z},\ol{xz})$ being nonempty,
there is more than one feasible design and
the uniqueness result of Lemma~\ref{lemma:stg} indicates that we must have $\ol{xz} < \ol{xz}_{\max}(\ol{z})$.
With $\ol{z} \in (-1,1)$ by assumption and
$\ol{x^2z}^*(\ol{z},\ol{xz};\Psi) \in \intr\ I_{\cf}(\ol{z},\ol{xz})$,
indeed $c \in \intr \ S$.

Applying Lemma~\ref{lemma:np} we know that there exists a feasible design $p_{\opt}$
with $\e_{p_{\opt}}(x^2z) = \ol{x^2z}^*(\ol{z},\ol{xz};\Psi)$ and
$p_{\opt}(x) \in \argmax_{q \in \cf} \e(q(x)(-x^3-k_1x^2-k_2x-k_3))$
for some $k_1,k_2,k_3$.
If $f(x;k_1,k_2,k_3) \equiv -x^3-k_1x^2-k_2x-k_3$ had only one real root $a_1$,
then the negative leading coefficient indicates $f(x;k_1,k_2,k_3) > 0$ when $x<a_1$
and $f(x;k_1,k_2,k_3) < 0$ when $x>a_1$.
This would imply $p_{\opt}(x)$ is a design that always treats all subjects with $x<a_1$
and never treats any subject with $x>a_1$,
which cannot be input-feasible.
We conclude $f(x;k_1,k_2,k_3)=-(x-a_1)(x-a_2)(x-a_3)$ for some finite
$a_1 \le a_2 \le a_3$
which are the roots of $f(x;k_1,k_2,k_3)$.
This shows the existence of $p_{\opt}$ of the form~\eqref{eq:p_opt}.
\end{proof}

\subsection{Imposing a monotonicity constraint}\label{sec:monotonicity}
We now apply Lemma~\ref{lemma:np} to solve~\eqref{eq:p_opt_problem}
in the case of principal interest,
where $\cf$ is the set of all \textit{monotone} design functions.
Note that
the lower bound $\ol{xz} \geq 0$ that we
imposed in Section \ref{subsection:stg_bounds} does not exclude
any monotone designs.  If $p(x)$ is monotone
then $x$ and $z$ necessarily have a nonnegative covariance $\e_p(xz)$.

Our argument follows the outline of Section~\ref{sec:global_optimal}.
Suppose that $p_{\max}^{\dag}$ and $p_{\min}^{\dag}$ are solutions to~\eqref{eq:max_min_inequality} with $\cf$ the set of monotone design functions,
which we distinguish from the optimal designs $p_{\max}$ and $p_{\min}$ of Section~\ref{sec:global_optimal}.
As $\cf$ is convex, Lemma~\ref{lemma:convex_combinations} applies, and thus a solution $p_{\opt}^{\dag}$ to~\eqref{eq:p_opt_problem}
is given by~\eqref{eq:optimal_design},
replacing $p_{\max}$ and $p_{\min}$ by $p_{\max}^{\dag}$ and $p_{\min}^{\dag}$,
respectively.
Note that $\ol{x^2z}^*(\ol{z},\ol{xz})$
may differ from its value in Section~\ref{sec:global_optimal}
since $\cf$ has changed.

We now characterize the designs $p_{\max}^{\dag}$ and $p_{\min}^{\dag}$.
As in Proposition~\ref{prop:pmax_pmin},
these designs always exist and are unique
for any $(\ol{z},\ol{xz}) \in \cj$.
When $F$ is continuous, these are monotone two level designs
$p_{\max}^{\dag}=p_{\ell,1,t}$ and $p_{\min}^{\dag}=p_{0,u,s}$
as defined in~\eqref{eq:twolevel}.
For general $F$ the designs $p_{\max}^{\dag}$ and $p_{\min}^{\dag}$ may differ from these designs at at the single discontinuity.


\begin{proposition}
\label{prop:pmax_pmin_prime}
For any $(\ol{z},\ol{xz}) \in \cj$, there exist unique solutions $p_{\max}^{\dag}$ and $p_{\min}^{\dag}$ to the optimization problems~\eqref{eq:max_min_inequality},
when $\cf$ is the set of all monotone design functions.
These solutions are the unique feasible designs satisfying
\begin{equation}\label{eq:gen_p_max_min_mon}
p_{\max}^{\dag}(x)=
\begin{cases}
\ell, & x < t \\
1, & x > t
\end{cases} 
\qquad \text{and}\qquad
p_{\min}^{\dag}(x) = 
\begin{cases}
0, & x < s \\
u, & x > s
\end{cases}
\end{equation}
for some $\ell,u \in [0,1]$ and constants $s,t$, which all depend on $(\ol{z},\ol{xz})$, where $s$ and $t$ may be infinite if $\ol{xz}=0$.
\end{proposition}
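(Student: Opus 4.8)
The plan is to follow the proof of Proposition~\ref{prop:pmax_pmin}, invoking Lemma~\ref{lemma:np} with $m=2$, $h_1(x)=1$, $h_2(x)=x$, and $h_3(x)=x^2$, but now taking $\cf$ to be the set of all monotone design functions. I would give the argument for $p_{\max}^{\dag}$ in detail; the design $p_{\min}^{\dag}$ is then obtained by the same argument with $h_3$ replaced by $-h_3$ (minimizing rather than maximizing $\e_p(x^2z)$), which turns the two-level form $p_{\ell,1,t}$ into $p_{0,u,s}$. First I would dispose of three boundary cases. The case $\ol{z}\in\{-1,1\}$ is trivial. If $\ol{xz}=\ol{xz}_{\max}(\ol{z})$, then by Lemma~\ref{lemma:stg} the only feasible design is the generalized RDD $p_{\ol{z}}$, which is monotone, so $p_{\max}^{\dag}=p_{\min}^{\dag}=p_{\ol{z}}$ and~\eqref{eq:gen_p_max_min_mon} holds with $\ell=0$, $u=1$, $s=t$. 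If $\ol{xz}=0$, then since $\e_p(xz)=2\Cov(x,p(x))\ge 0$ for monotone $p$, with equality if and only if $p$ is $F$-a.s.\ constant, the only feasible design is $p\equiv(1+\ol{z})/2$, which equals both $p_{(1+\ol{z})/2,1,\infty}$ and $p_{0,(1+\ol{z})/2,-\infty}$ --- this is the source of the possibly infinite $s$ and $t$ in the statement.

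For the remaining case $-1<\ol{z}<1$ and $0<\ol{xz}<\ol{xz}_{\max}(\ol{z})$, I would verify the hypotheses of Lemma~\ref{lemma:np}. Convexity of the associated set $S'$ is immediate from convexity of $\cf$. Closedness of $S'$ follows from a compactness argument: by Helly's selection theorem a sequence of monotone $[0,1]$-valued functions has a pointwise-convergent subsequence whose limit is again monotone and $[0,1]$-valued, while the moment maps $p\mapsto(\e(p),\e(xp),\e(x^2p))$ are continuous along such sequences by bounded convergence, using $\e(x^2)<\infty$ from Assumption~\ref{ass:F}; so $S'$, being the image of a compact set under a continuous map, is closed. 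That $c=\varphi_1(\ol{z},\ol{xz})\in\intr S$ follows as in Proposition~\ref{prop:pmax_pmin}: by Lemma~\ref{lemma:stg}, \eqref{eq:tower}, and the covariance inequality above, for monotone designs the pair $(\e_p(z),\e_p(xz))$ sweeps out $\{-1\le\ol{z}\le 1,\ 0\le\ol{xz}\le\ol{xz}_{\max}(\ol{z})\}$, whose interior is precisely the region cut out by our assumptions. Lemma~\ref{lemma:np} then supplies a feasible $p_{\max}^{\dag}\in\cf$ and multipliers $k_1,k_2$ with $p_{\max}^{\dag}\in\argmax_{q\in\cf}\e\bigl(q(x)(x^2-k_1x-k_2)\bigr)$, and characterizes the feasible solutions of the maximization problem in~\eqref{eq:max_min_inequality} as exactly those feasible $q\in\cf$ that lie in such an $\argmax$ for some $k_1,k_2$.

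The crux is to show that every $q\in\argmax_{q\in\cf}\e\bigl(q(x)w(x)\bigr)$, with $w(x)=x^2-k_1x-k_2$, agrees with some $p_{\ell,1,t}$ ($\ell\in[0,1]$, $t\in\Rbar$) up to an $F$-null set and, at a single point, an unspecified value. Representing a monotone $q$ by its Stieltjes (sub-probability) measure $\mu_q$ on $[-\infty,\infty)$, so $q(x)=\mu_q([-\infty,x])$, Fubini gives $\e(q(x)w(x))=\int W(t)\,\mu_q(\mrd t)$ with $W(t)=\e\bigl(w(x)\ind(x\ge t)\bigr)$. Since $w$ is an upward parabola, $W$ is nonincreasing off the open interval between the two distinct real roots of $w$ (when such roots exist), so $\max_{[-\infty,\infty]}W$ is attained at $-\infty$ and/or at the larger root $a_2$, and $W$ is strictly decreasing to the right of $a_2$ off $F$-null sets. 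Hence a maximizing $\mu_q$ can put its mass only (up to $F$-null sets) at $-\infty$ and at $a_2$, which forces $q$ to agree $F$-a.s.\ --- away from $x=a_2$ --- with $p_{\ell,1,a_2}$ for some $\ell\in[0,1]$; and if $F$ has an atom at $a_2$ then, as $w(a_2)=0$, the objective does not see $q(a_2)$, so feasibility pins down the value $q(a_2)\in[\ell,1]$ left unspecified in~\eqref{eq:gen_p_max_min_mon}. The degenerate subcases --- $w$ with no real root or a double root, or $W$ attaining its maximum only at $-\infty$ --- all give a constant or one-sided step, again of the form $p_{\ell,1,t}$ with $t=\pm\infty$. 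Combining this characterization with the existence half of Lemma~\ref{lemma:np} yields the claimed form of $p_{\max}^{\dag}$, and the analogous analysis of a downward parabola yields that of $p_{\min}^{\dag}$. Finally, uniqueness of the feasible member of the family $\{p_{\ell,1,t}\}$ (respectively $\{p_{0,u,s}\}$), and hence of $p_{\max}^{\dag}$ and $p_{\min}^{\dag}$ via Lemma~\ref{lemma:np}, is checked by a direct elementary argument in the spirit of Appendix~\ref{app:uniqueness}.

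I expect the main obstacle to be the $\argmax$ characterization of the third paragraph: recasting ``maximize a linear functional over monotone functions'' as the scalar problem $\max_{\mu_q}\int W\,\mrd\mu_q$, and reading off the two-level threshold structure from the sign pattern of the quadratic $w$, while correctly handling flat stretches of $W$ across $F$-null sets and the value of $q$ at an atom of $F$. The compactness argument for closedness of $S'$, the computation of $\intr S$, the boundary cases, and the uniqueness claim should all be routine.
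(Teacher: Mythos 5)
Your proposal is correct and, at the structural level, mirrors the paper's proof exactly: the same boundary cases ($\ol{xz}=0$ forcing the RCT, $\ol{xz}=\ol{xz}_{\max}(\ol{z})$ handled by Lemma~\ref{lemma:stg}), the same invocation of Lemma~\ref{lemma:np} with $m=2$ and $h_1=1$, $h_2=x$, $h_3=x^2$, the same verification that $S'$ is convex and (via Helly/sequential compactness plus dominated convergence) closed, the same identification of $\intr\, S$ with $\varphi_1(\intr\,\cj)$, and uniqueness deferred to a direct elementary argument (the paper's is Appendix~\ref{app:uniqueness_mon}, which is genuinely separate from Appendix~\ref{app:uniqueness}). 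Where you genuinely diverge is the crux step. The paper shows that any maximizer of $\e(q(x)(x-r)(x-t))$ over monotone $q$ must be two-level by a direct pointwise inequality: it bounds the objective by freezing $q$ at $q(r)$ on $\{x<t\}$ and at $1$ on $\{x\ge t\}$ and characterizes the equality case; it then also proves the converse (every feasible two-level design lies in some $\argmax$, via the root $r_1$ solving $\e((x-r_1)(x-t)\ind(x<t))=0$). You instead represent monotone $q$ by its Stieltjes measure and reduce to maximizing $\int W\,\mrd\mu_q$ with $W(t)=\e(w(x)\ind(x\ge t))$, reading off the two-level structure from the decreasing--increasing--decreasing shape of $W$. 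This is a clean and standard "mixture of step functions" viewpoint, and it buys you a logical economy: since any optimum must lie in an $\argmax$ (the necessity half of Lemma~\ref{lemma:np}) and the feasible two-level design is unique, you never need the converse implication the paper proves. Two places need more care than you advertise, though neither is fatal: (i) $q(x)=\mu_q([-\infty,x])$ only represents right-continuous $q$, so at atoms of $F$ that are jump points of $q$ the objective picks up a term interpolating between $W$ and $W^+(t)=\e(w(x)\ind(x>t))$, which must be tracked (you flag this); and (ii) the $p_{\min}^{\dag}$ case is not a literal mirror image of the upward-parabola analysis, because a feasible $p_{0,u,s}$ with $u<1$ corresponds to a measure of total mass $u<1$, which can only be optimal when $\max_t W(t)=0$ --- the cleaner route is the reflection $q(x)\mapsto 1-q(-x)$, $F\mapsto$ law of $-x$, which maps one problem onto the other.
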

\begin{proof}
If $\ol{xz}=0$, the only feasible monotone design is the fully randomized design $p(x)=(1+\ol{z})/2$, 
and the theorem holds trivially with $p_{\max}^{\dag}=p_{\min}^{\dag}$, $t=-s=\infty$, and $\ell=u=(1+\ol{z})/2$. 
Likewise, if $\ol{xz}=\ol{xz}_{\max}(\ol{z})$ then the desired results follow by Lemma~\ref{lemma:stg} 
(take $\ell=0$, $u=1$, and $s=t$ with $t$ as in Lemma~\ref{lemma:stg}). 
Thus, we assume that $0 < \ol{xz} < \ol{xz}_{\max}(\ol{z})$.
Again, we only write out the argument for $p_{\max}^{\dag}$;
the proof for $p_{\min}^{\dag}$ is completely analogous.

Once again, we are in the setting of Lemma~\ref{lemma:np} with $m=2$, $h_1(x)=1$, $h_2(x)=x$, and $h_3(x)=x^2$.
The only difference from Proposition~\ref{prop:pmax_pmin} is the definition of $\cf$, 
so we must verify that the conditions on the corresponding $S'$ and $S$ are satisfied.
Since $\e(x^ap(x))$ is linear in $p$ for all $a$,
and any convex combination of monotone functions is monotone (cf. Lemma~\ref{lemma:convex_combinations}),
$S'$ is convex.
Now suppose that $c_0$ is a limit point of $S'$. 
Then there exists a sequence $p_1,p_2,\dots\in \cf$ 
with $(\e(p_n(x)),\e(xp_n(x)),\e(x^2p_n(x))) \to c_0$ as $n \to \infty$. 
As $\cf$ is sequentially compact, there exists a subsequence $p_{n_i}$ and $p \in \cf$ with $p_{n_i} \to p$ pointwise. 
But then $(\e(p(x)),\e(xp(x)),\e(x^2p(x)))=c_0$ by dominated convergence, so $S'$ is closed. 
Finally, $\intr\ S=\varphi_1(\ct^{\dag})$ where 
\[
\ct^{\dag} \equiv \{(\ol{z},\ol{xz}) \mid -1 < \ol{z} < 1, 0 < \ol{xz} < \ol{xz}_{\max}(\ol{z})\} = \intr\ \cj.
\]
Hence the assumption that 
$0 < \ol{xz} < \ol{xz}_{\max}(\ol{z})$ ensures that $c=\varphi_1(\ol{z},\ol{xz}) \in \intr\ S$.

With the conditions of Lemma~\ref{lemma:np} once again satisfied, 
we now show that~\eqref{eq:np_um} is equivalent to~\eqref{eq:gen_p_max_min_mon} for any (monotone) feasible $p_{\max}^{\dag}$. 
First assume feasible $p_{\max}^{\dag}$ satisfies~\eqref{eq:np_um}, i.e. $p_{\max}^{\dag} \in \argmax_{q \in \cf} \e(q(x)(x^2-k_1x-k_2))$ for some $k_1,k_2$.
The polynomial $x^2-k_1x-k_2$ having no real roots would mean this condition is equivalent to $p_{\max}^{\dag}(x)=1$, contradicting $\ol{z}<1$.
Hence we can factor $x^2-k_1x-k_2=(x-r)(x-t)$ for some $r \le t$.
Considering the sign of $(x-r)(x-t)$ and monotonicity of any $q \in \cf$ we see
\begin{align*}
\e(q(x)(x-r)(x-t)) & \le \e(q(r)(x-r)(x-t)\ind(x < r))  + \e(q(r)(x-r)(x-t) \ind(r \le x < t)) \\
&\phe\ + \e((x-r)(x-t)\ind(x \ge t)).     
\end{align*}
This inequality is strict unless
$q(x)=1$ for almost every $x > t$
and $$(q(r)-q(x))(x-r)(x-t)\ind(x<r)=(q(x)-q(r))(x-r)(x-t)\ind(r \le x < t)=0$$ with probability one under $F$, i.e., $q(x)=q(r)=\ell$ for some $\ell \in [0,1]$ and almost every $x < t$.
Therefore any design in $\argmax_{q \in \cf} \e(q(x)(x^2-k_1x-k_2))$ must satisfy the first condition in~\eqref{eq:gen_p_max_min_mon}.
Conversely, if a feasible, monotone $p_{\max}^{\dag}$ satisfies~\eqref{eq:gen_p_max_min_mon} 
then let $r_1 \le t$ be such that $g(r_1) = \e((x-r_1)(x-t)\indic(x<t))=0$. 
Such $r_1$ exists since assuming WLOG that $\Pr(x<t) > 0$, $g(\cdot)$ is continuous on $(-\infty, t]$
with $-\infty = \lim_{k_1 \downarrow -\infty} g(k_1) < 0 \le g(t)$. 
Considering the signs of $(x-r_1)(x-t)$ we get for any $p \in \cf$
\begin{align*}
\e((p_{\max}^{\dag}(x)-p(x))(x-r_1)(x-t)) & = \e((1-p(x))(x-r_1)(x-t)\indic(x \ge t)) \\
&\phe\ + \e((\ell-p(x))(x-r_1)(x-t)\indic(x<t)) \\
& \ge \e((1-p(x))(x-r_1)(x-t)\indic(x \ge t)) + (\ell-p(r_1))g(r_1) \\
& = \e((1-p(x))(x-r_1)(x-t)\indic(x \ge t)) \ge 0
\end{align*}
and so $p_{\max}^{\dag}$ satisfies~\eqref{eq:np_um} with $k_1=r_1+t$, and $k_2=-tr_1$.
The second claim in Lemma~\ref{lemma:np} 
then ensures that the set of optimal solutions to~\eqref{eq:max_min_inequality} consists of precisely those feasible, monotone designs satisfying~\eqref{eq:gen_p_max_min_mon}.
Such a design must exist by the first claim of Lemma~\ref{lemma:np}.
The remaining uniqueness claims are shown in Appendix~\ref{app:uniqueness_mon}.
\end{proof}

Analogous to Theorem~\ref{thm:alt_pmax_pmin},
if we assume the running variable has a third moment
then we have a solution to~\eqref{eq:p_opt_problem}
of a simpler form than~\eqref{eq:optimal_design}.
When the running variable $x$ is continuous,
this solution will be a two level design $p_{\ell',u',t'}$
for some $0 \le \ell' \le u' \le 1$.
In general,
when $x$ is not continuous,
we may need a different treatment probability at the discontinuity $t'$.
\begin{theorem}
\label{thm:alt_pmax_pmin_mon}
Suppose $\e(|x|^3)<\infty$.
Then when $\cf$ is the set of all monotone design functions,
for any $(\ol{z},\ol{xz}) \in \cj$
there exists a solution to~\eqref{eq:p_opt_problem}
with
\begin{equation}
\label{eq:p_opt_mon}
p_{\opt}^{\dag}(x) = 
\begin{cases}
\ell', & x < t' \\
u', & x > t'
\end{cases}
\end{equation}
for some $0 \le \ell' \le u' \le 1$ and $t' \in \real$.
\end{theorem}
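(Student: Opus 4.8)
The plan is to follow the proof of Theorem~\ref{thm:alt_pmax_pmin}, now taking $\cf$ to be the class of all monotone design functions and replacing $p_{\max},p_{\min}$ by their monotone counterparts $p_{\max}^{\dag},p_{\min}^{\dag}$ from Proposition~\ref{prop:pmax_pmin_prime}. As in the reduction leading to~\eqref{eq:argmax}--\eqref{eq:optimal_design}, a design solves~\eqref{eq:p_opt_problem} exactly when it is feasible and monotone with $\e_p(x^2z)$ equal to some fixed maximizer $\ol{x^2z}^*(\ol z,\ol{xz};\Psi)$ of $g_\Psi(\ol z,\ol{xz},\cdot)$ over $I_{\cf}(\ol z,\ol{xz})$ (with $\cf$ now the monotone designs); by Proposition~\ref{prop:pmax_pmin_prime} and Lemma~\ref{lemma:convex_combinations} this set is a closed interval, so such a maximizer exists by continuity of $\Psi$. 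Thus it suffices to exhibit one feasible monotone design of the form~\eqref{eq:p_opt_mon} with $\e_p(x^2z)=\ol{x^2z}^*$. If $\ol{x^2z}^*$ is an endpoint of $I_{\cf}(\ol z,\ol{xz})$, Proposition~\ref{prop:pmax_pmin_prime} identifies the unique feasible monotone design attaining it as $p_{\min}^{\dag}$ or $p_{\max}^{\dag}$, which already has the form~\eqref{eq:gen_p_max_min_mon}, a special case of~\eqref{eq:p_opt_mon}; the boundary subcases $\ol{xz}=0$ (forced constant design) and $\ol{xz}=\ol{xz}_{\max}(\ol z)$ (the generalized RDD, with finite threshold) likewise give designs of the form~\eqref{eq:p_opt_mon}. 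So from now on I assume $\ol{x^2z}^*\in\intr\ I_{\cf}(\ol z,\ol{xz})$; then this interval is nondegenerate, hence $p_{\min}^{\dag}\ne p_{\max}^{\dag}$, forcing $0<\ol{xz}<\ol{xz}_{\max}(\ol z)$, i.e.\ $(\ol z,\ol{xz})\in\intr\ \cj$.

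I would then apply Lemma~\ref{lemma:np} with $m=3$, $h_1(x)=1$, $h_2(x)=x$, $h_3(x)=x^2$, objective $h_4(x)=-x^3$, $\cf$ the monotone designs, and $c=\varphi_2(\ol z,\ol{xz},\ol{x^2z}^*)$. The hypothesis $\e(|x|^3)<\infty$ keeps all of these expectations finite. That the corresponding $S'\subseteq\real^4$ is convex and closed is verified as in the proof of Proposition~\ref{prop:pmax_pmin_prime}: convexity because $\e(x^ap(x))$ is linear in $p$ and monotonicity is preserved by convex combinations; closedness because the monotone $[0,1]$-valued functions are sequentially compact and dominated convergence applies, now with integrable dominating function $|x|^3$. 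That $c\in\intr\ S$ follows as in Theorem~\ref{thm:alt_pmax_pmin} from $(\ol z,\ol{xz})\in\intr\ \cj$ and $\ol{x^2z}^*\in\intr\ I_{\cf}(\ol z,\ol{xz})$. Lemma~\ref{lemma:np} then supplies a feasible monotone $p_{\opt}^{\dag}$ with $\e_{p_{\opt}^{\dag}}(x^2z)=\ol{x^2z}^*$ and constants $k_1,k_2,k_3\in\real$ such that, writing $f(x)=-x^3-k_1x^2-k_2x-k_3$,
\[
p_{\opt}^{\dag}\in\argmax_{q\in\cf}\e\bigl(q(x)f(x)\bigr).
\]

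The crux, and where this departs from the global case~\eqref{eq:p_opt}, is to read the shape of $p_{\opt}^{\dag}$ off this maximization, since a monotone maximizer of $\e(qf)$ need not be the indicator of $\{f>0\}$. For monotone $q$ I would use the layer-cake identity $\e(q(x)f(x))=\int_0^1 g(\tau_q(s))\rd s$, where $g(c)\equiv\e\bigl(f(x)\ind(x\ge c)\bigr)$ and $\tau_q(s)\equiv\inf\{x:q(x)\ge s\}$ is nondecreasing; this yields $\max_{q\in\cf}\e(qf)=\max_{c\in\Rbar}g(c)$, with $q$ optimal iff $\tau_q(s)\in\argmax_{c\in\Rbar}g(c)$ for almost every $s$. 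Because $g$ increases where $f<0$ and decreases where $f>0$, and $f$ is a cubic with negative leading coefficient, $g$ is unimodal (decreasing then increasing) unless $f$ has three distinct real roots $a_1<a_2<a_3$, in which case its only local maxima over $\Rbar$ are at $-\infty$, $a_2$, and $+\infty$. Feasibility of $p_{\opt}^{\dag}$ with $0<\ol{xz}<\ol{xz}_{\max}(\ol z)$ forces it to be neither constant nor an RDD: the former rules out $\argmax_{c\in\Rbar}g(c)\subseteq\{-\infty,+\infty\}$, so $f$ has three distinct real roots and $a_2\in\argmax_{c\in\Rbar}g(c)$; the latter, via the uniqueness clause of Lemma~\ref{lemma:stg}, rules out $\argmax_{c\in\Rbar}g(c)=\{a_2\}$. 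Hence $\tau_{p_{\opt}^{\dag}}(s)\in\{-\infty,a_2,+\infty\}$ for almost every $s$, and monotonicity of $\tau_{p_{\opt}^{\dag}}$ gives $0\le\lambda_1\le\lambda_2\le1$ with $\tau_{p_{\opt}^{\dag}}(s)$ equal to $-\infty$, $a_2$, $+\infty$ according as $s<\lambda_1$, $\lambda_1<s<\lambda_2$, $s>\lambda_2$; unwinding the layer-cake gives $p_{\opt}^{\dag}(x)=\lambda_1$ for $x<a_2$ and $p_{\opt}^{\dag}(x)=\lambda_2$ for $x>a_2$, i.e.~\eqref{eq:p_opt_mon} with $\ell'=\lambda_1$, $u'=\lambda_2$, $t'=a_2$. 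The hard part will be making this layer-cake argument fully rigorous, especially handling atoms of $F$ near $a_2$, where $g$ may jump and $p_{\opt}^{\dag}(a_2)$ is left unconstrained (as the statement permits); a more hands-on alternative, closer to the proof of Proposition~\ref{prop:pmax_pmin_prime}, is to bound $\e(qf)$ directly by splitting the integral at $a_2$ and using monotonicity of $q$ together with the sign of $f$ on $\{x<a_2\}$ and on $\{x>a_2\}$.
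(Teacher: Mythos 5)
Your reduction to the endpoint cases, the invocation of Proposition~\ref{prop:pmax_pmin_prime}, and the application of Lemma~\ref{lemma:np} with $m=3$, $h_i(x)=x^i$, and $c=\varphi_2(\ol{z},\ol{xz},\ol{x^2z}^*)$ (including the verification that $S'$ is closed and convex and that $c\in\intr\ S$) match the paper's proof exactly. Where you diverge is the final step of reading the two-level shape of $p_{\opt}^{\dag}$ off the condition $p_{\opt}^{\dag}\in\argmax_{q\in\cf}\e(q(x)f(x))$ for the cubic $f$. You use a layer-cake representation $\e(q(x)f(x))=\int_0^1\e\bigl(f(x)\ind(q(x)\ge s)\bigr)\rd s$, reduce to maximizing $\e(f\ind_U)$ over up-sets $U$, locate the only candidate maximizers at $\emptyset$, $\{x\ge a_2\}$, and $\real$, and use feasibility ($0<\ol{xz}<\ol{xz}_{\max}(\ol{z})$, plus the uniqueness clause of Lemma~\ref{lemma:stg}) to rule out the degenerate configurations. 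The paper instead proves a short perturbation claim --- if $\sign(f(x))=\ind(x<a)-\ind(x>a)$ a.s., then replacing a monotone maximizer $p$ by $\min(p(x),p(a))$ and $\max(p(x),p(a))$ shows $p$ must be a.s.\ constant --- and applies it to the conditional distributions of $x$ given $x<a_2$ and $x>a_2$; this sidesteps the atom bookkeeping entirely and also dispatches the one-real-root case directly (yielding a constant design, still of the form~\eqref{eq:p_opt_mon}) rather than by contradiction. Your approach buys a cleaner global picture of which monotone designs are optimal for a fixed cubic, at the cost of the measure-theoretic care you yourself flag around atoms and the precise statement of ``$\tau_q(s)\in\argmax$ for a.e.\ $s$''; your proposed fallback of splitting at $a_2$ and exploiting monotonicity and the sign of $f$ on each side is, in substance, the paper's argument. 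I see no genuine gap.
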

\begin{proof}
The proof structure is similar to that of Theorem~\ref{thm:alt_pmax_pmin}.
Fix any solution $\ol{x^2z}^*(\ol{z},\ol{xz};\Psi)$ to~\eqref{eq:argmax}.
If it is an endpoint of $I_{\cf}(\ol{z},\ol{xz})$
then the unique solution to~\eqref{eq:p_opt_problem}
is $p_{\max}^{\dag}$ or $p_{\min}^{\dag}$
from Proposition~\ref{prop:pmax_pmin_prime},
which takes the form~\eqref{eq:p_opt_mon}
with $u'=1$ or $\ell'=0$, respectively.
Otherwise we apply Lemma~\ref{lemma:np}
with $m=3$, $h_i(x)=x^i$ for $i \in \{1,2,3\}$,
and $c=\varphi_2(\ol{z},\ol{xz},\ol{x^2z}^*(\ol{z},\ol{xz};\Psi))$.
The lemma applies since $S'$ is closed and convex from the proof of Proposition~\ref{prop:pmax_pmin_prime},
and $c \in \intr \ S$
since our assumption that $\ol{x^2z}^*(\ol{z},\ol{xz};\Psi) \in \intr \ I_{\cf}(\ol{z},\ol{xz})$
indicates there is more than one feasible design,
so $0 < \ol{xz} < \ol{xz}_{\max}(\ol{z})$.

Applying Lemma~\ref{lemma:np} we see that there exists a design $p_{\opt}^{\dag}$
that solves~\eqref{eq:p_opt_problem}
with $p_{\opt}^{\dag} \in \argmax_{q \in \cf} \e(q(x)f(x;k_1,k_2,k_3))$
for some $k_1,k_2,k_3$.
Here,
as in the proof of Theorem~\ref{thm:alt_pmax_pmin},
$f(x) = f(x;k_1,k_2,k_3) \equiv -(x^3+k_1x^2+k_2x+k_3)$.
We show this implies $p_{\opt}^{\dag}$ is of the form~\eqref{eq:p_opt_mon}
using the following claim. \\

\noindent
\underline{Claim:} Suppose $\sign(f(x)) = \indic(x < a) - \indic(x > a)$ w.p.1 for some $a \in \real$
and $\cf$ is the set of monotone design functions.
Then $p \in \argmax_{q \in \cf} \e(q(x)f(x))$
implies $p(x)=p(a)$ w.p.1. \\
\underline{Proof of claim:}
For any monotone design $p$ we can define $\tilde{p}(x)=\min(p(x),p(a))$ so that 
\begin{align*}
\e((\tilde{p}(x)-p(x))f(x) & = \e((\tilde{p}(x)-p(x))f(x)\indic(x \ge a)) = -\e((p(x)-p(a)) f(x)\indic(x \ge a))
\end{align*}
is nonnegative,
and zero iff $p(x)=p(a)$ for almost all $x \ge a$.
Similarly by considering $\tilde{p}(x)=\max(p(x),p(a))$,
we conclude $p(x)=p(a)$ for almost all $x < a$. \\

We notice that $f$
has either one real root $a_1$ or three real roots $a_1,a_2,a_3$.
If $a_1$ is the only root,
we know $f(x)<0$ when $x > a_1$ and
$f(x) > 0$ when $x<a_1$,
since the leading coefficient of $f$ is negative.
Thus we can apply the claim directly to show
that $p_{\opt}^{\dag}$ is constant,
in particular of the form~\eqref{eq:p_opt_mon}
with $\ell'=u'$.
If there are three real roots we show $p_{\opt}^{\dag}$ is of this form
with $t'=a_2$.
Let $F_<$ ($F_>$) be the conditional distribution of $x$ given $x < a_2$
($x>a_2$), so
\[
\e_{F}(q(x)f(x)) = \e_{F_<}(q(x)f(x))\Pr(x < a_2)  + \e_{F_>}(q(x)f(x))\Pr(x > a_2)
\]
We conclude the condition $p_{\opt}^{\dag} \in \argmax_{q \in \cf} \e_F(q(x)f(x;k_1,k_2,k_3))$
implies $p_{\opt}^{\dag}(x)=p_{\opt}^{\dag}(a_1)$ for almost all $x < a_2$
and $p_{\opt}^{\dag}(x)=p_{\opt}^{\dag}(a_3)$ for almost all $x>a_2$
by applying the claim twice
(once for $F_<$, once for $F_>$).
\end{proof}

In general, the optimal designs derived in Theorems~\ref{thm:alt_pmax_pmin} and~\ref{thm:alt_pmax_pmin_mon}
are not unique
when $\ol{x^2z}^*(\ol{z},\ol{xz},\Psi)$ is not on the boundary of $I_{\cf}(\ol{z},\ol{xz})$.
For example,
in nondegenerate cases the solution $p_{\opt}^{\dag}$ in~\eqref{eq:p_opt_mon}
typically has two levels,
while the solution in~\eqref{eq:optimal_design}
(with the monotonicity constraint)
will have three levels.
As another example,
the three-level tiebreaker found by~\citet{owen:vari:2020} to be optimal when $F$ is uniform
and $\ol{z}=0$
does not take the form~\eqref{eq:p_opt}
whenever $\ol{xz} < \ol{xz}_{\max}(0)$.
Conversely, Propositions~\ref{prop:pmax_pmin} and~\ref{prop:pmax_pmin_prime}
guarantee a unique optimal design when $\ol{x^2z}^*(\ol{z},\ol{xz},\Psi)$ is one of the endpoints of $I_{\cf}(\ol{z},\ol{xz})$.


\section{Exploration-exploitation trade-off}
\label{sec:inadmissibility}
As discussed in Section~\ref{sec:introduction},
\cite{owen:vari:2020} showed that when $\ol{z}=0$ and
$F \sim \dunif(-1,1)$, 
the efficiency (under their criterion $\eff(\cdot)$) of the three level tie-breaker~\eqref{eq:3_level_tie-breaker} 
is monotonically increasing in the width $\Delta$ of the randomization window. 
As $\Delta$ is a strictly decreasing function of $\ol{xz}$, 
and the three level tie-breaker solves~\eqref{eq:p_opt_problem} for all $\ol{xz}$, 
they conclude that there is a monotone trade-off between short-term gain and
statistical efficiency. 
In other words, greater statistical efficiency from an optimal design
requires giving up short-term gain.

We now extend these results to general $\ol{z}$ and other running variable distributions.
Hereafter $p_{\opt}=p_{\opt;\ol{z},\ol{xz}}$ denotes an optimal design without the monotonicity constraint,
to be contrasted with $p_{\opt}^{\dag}$ of Section~\ref{sec:monotonicity}.
Note we have made the dependence of these designs on $(\ol{z},\ol{xz}) \in \cj$ explicit.
We use the same efficiency criterion $\eff(\cdot)$ as~\citet{owen:vari:2020}.
Recall this is a $c$-optimality criterion
corresponding to the scaled asymptotic variance
of the OLS estimate for $\beta_3$ in~\eqref{eq:two_line_model},
and equivalent to $D$-optimality for our problem~\eqref{eq:p_opt_problem} by
Section~\ref{sec:efficiency}.

\begin{theorem}
\label{thm:opt_eff_monotonicity}
Suppose the distribution function $F$ of the running variable has a positive derivative everywhere in $I$, 
the smallest open interval with $\int_I f(x) \rd x = 1$.
If additionally $F(x)=1-F(-x),$ $\forall x \in I$, 
then fixing any $\ol{z} \in (-1,1)$, $\eff(p_{\opt;\ol{z},\ol{xz}})$ is decreasing in $\ol{xz}$.
\end{theorem}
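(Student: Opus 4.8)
The plan is to establish the stronger fact that $\ol{xz}\mapsto\eff(p_{\opt;\ol{z},\ol{xz}})$ is \emph{strictly} decreasing on $[0,\ol{xz}_{\max}(\ol{z})]$, using only the symmetry of $F$ together with convexity of the map $A\mapsto(A^{-1})_{44}$ on the cone of positive definite matrices. Recall from Section~\ref{sec:efficiency} that $\eff(p)=1/(\ci(p)^{-1})_{44}$ when $\ci(p)$ is invertible, $\eff(p)=0$ otherwise (since $M_{11}(p)>0$ by Corollary~\ref{corollary:M11}), and that $\ci(p)$ is affine in $p$. Fix $0\le w_1<w_2\le\ol{xz}_{\max}(\ol{z})$ and let $p_2:=p_{\opt;\ol{z},w_2}$, an optimal design for the constraints $(\ol{z},w_2)$ over the class $\cf$ of all measurable design functions (existence follows from Proposition~\ref{prop:pmax_pmin} and the construction~\eqref{eq:optimal_design}). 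Put $r(x):=p_2(-x)$. Since $x$ and $-x$ have the same law under the symmetric $F$, a change of variables gives $\e_r(z)=\ol{z}$, $\e_r(xz)=-w_2$ and $\e_r(x^2z)=\e_{p_2}(x^2z)$, so $\ci(r)=J\ci(p_2)J$ with $J=\diag(1,-1,1,-1)$; hence $(\ci(r)^{-1})_{44}=(\ci(p_2)^{-1})_{44}$ and $\eff(r)=\eff(p_2)$.

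Next I would set $\lambda:=(w_1+w_2)/(2w_2)\in[\tfrac12,1)$ and $q:=\lambda p_2+(1-\lambda)r\in\cf$. Linearity of the moments in $p$ gives $\e_q(z)=\ol{z}$ and $\e_q(xz)=\lambda w_2-(1-\lambda)w_2=w_1$, so $q$ is feasible for $(\ol{z},w_1)$, and it remains to show $\eff(q)>\eff(p_2)$. Write $\ci_t:=\ci(r)+t\Delta$ for $t\in[0,1]$, where $\Delta:=\ci(p_2)-\ci(r)=2w_2 N$ and $N$ is the $4\times4$ exchange matrix (ones on the anti-diagonal) — valid because $\ci(p_2)$ and $\ci(r)$ agree except in the four anti-diagonal entries of~\eqref{eq:I}, which hold $\pm w_2$. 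A direct differentiation gives $\frac{\mrd^2}{\mrd t^2}(\ci_t^{-1})_{44}=2(\Delta u_t)^{\tran}\ci_t^{-1}(\Delta u_t)$ with $u_t:=\ci_t^{-1}e_4$, where $e_4$ is the fourth standard basis vector; since $N$ sends $(v_1,v_2,v_3,v_4)$ to $(v_4,v_3,v_2,v_1)$, $\Delta u_t=0$ forces $u_t=0$, which is impossible when $\ci_t$ is invertible. Thus $t\mapsto(\ci_t^{-1})_{44}$ is strictly convex on any interval where $\ci_t\succ0$. If $\ci(p_2)$ is singular then $\eff(p_2)=0<\eff(q)$ as soon as $\ci(q)\succ0$ is shown below; otherwise $\ci(r)=J\ci(p_2)J\succ0$, so $\ci_t\succ0$ for all $t\in[0,1]$, and strict convexity together with $(\ci(r)^{-1})_{44}=(\ci(p_2)^{-1})_{44}$ yields $(\ci(q)^{-1})_{44}=(\ci_\lambda^{-1})_{44}<(\ci(p_2)^{-1})_{44}$, i.e.\ $\eff(q)>\eff(p_2)$.

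It then remains to prove $\ci(q)\succ0$. The design $q$ cannot be $\{0,1\}$-valued $F$-a.s.: that would force $p_2(x)=p_2(-x)\in\{0,1\}$ $F$-a.e., making the integrand of $\e_{p_2}(xz)=\e(x(2p_2(x)-1))$ odd and hence $w_2=\e_{p_2}(xz)=0$, contradicting $w_2>0$. So $q\in(0,1)$ on a set $B$ with positive $F$-probability, and the law of total covariance gives $v^{\tran}\ci(q)v=\e[(v_1+v_2x+(v_3+v_4x)(2q(x)-1))^2]+\e[4q(x)(1-q(x))(v_3+v_4x)^2]$; if this vanishes then $v_3+v_4x=0$ on $B$, which (as $F$ has a density, so single points are $F$-null) forces $v_3=v_4=0$, whence $v_1+v_2x\equiv0$ forces $v_1=v_2=0$ because $\var(x)>0$. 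Hence $\ci(q)\succ0$, and combining with the previous paragraph, $\eff(p_{\opt;\ol{z},w_1})\ge\eff(q)>\eff(p_2)=\eff(p_{\opt;\ol{z},w_2})$; since $0\le w_1<w_2\le\ol{xz}_{\max}(\ol{z})$ were arbitrary, the theorem follows, with strict monotonicity. The non-strict version already follows from mere convexity of $A\mapsto(A^{-1})_{44}$ and the reflection symmetry; the real content — and the only use of the positive-density hypothesis — lies in upgrading to strict monotonicity via positive definiteness of $\ci(q)$ and the curvature computation above, which I expect to be the main obstacle.
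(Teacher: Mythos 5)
Your proof is correct, and it takes a genuinely different route from the paper's. The paper works entirely through the explicit quadratic formula for $\det(M(p))$ as a function of $\e_p(x^2z)$: it locates the unconstrained maximizer $a^*(\ol{z},\ol{xz})$, shows that the optimal $\e_p(x^2z)$ is the projection of $a^*$ onto the attainable interval $I_{\cf}(\ol{z},\ol{xz})$ (yielding~\eqref{eq:quad_min}), and then runs a fairly heavy computation — casework on the sign of $\ol{z}$, the explicit forms $p_{\max}=p_{[a_1,a_2]^c}$ and $p_{\min}=p_{[b_1,b_2]}$ from Proposition~\ref{prop:pmax_pmin}, the implicit function theorem to differentiate the interval endpoint in $\ol{xz}$, and a final case analysis on where $a^*$ sits relative to that endpoint. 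Your argument bypasses all of this: reflect the optimal design for the larger gain $w_2$ to get an equally efficient design with gain $-w_2$ (here symmetry of $F$ enters, via $\ci(r)=J\ci(p_2)J$), average to hit gain $w_1$, and invoke convexity of $A\mapsto(A^{-1})_{44}$; strictness comes from the curvature formula $2(\Delta u_t)^{\tran}\ci_t^{-1}(\Delta u_t)$ with $\Delta$ a nonzero multiple of the invertible exchange matrix, plus the positive-definiteness of $\ci(q)$, which you verify correctly via the total-variance decomposition (and which is the only place the density hypothesis is used — indeed atomlessness would suffice for your argument). I checked the key steps: the moment identities for $r$, that $\ci(p_2)$ and $\ci(r)$ differ only on the anti-diagonal, the handling of the singular case, and the argument that $q$ cannot be $\{0,1\}$-valued (which correctly uses $\lambda,1-\lambda>0$ and symmetry of the set where $q\in\{0,1\}$). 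What your approach buys is a short, conceptual proof that isolates symmetry as the driving mechanism and avoids all explicit computation with $p_{\max}$ and $p_{\min}$; what it gives up is the byproduct~\eqref{eq:quad_min} characterizing $\ol{x^2z}^*$ as a projection of $a^*$, which the paper reuses in Section~\ref{sec:examples} and in the proof of Theorem~\ref{thm:inadmissibility}, so the paper's computational route is not wasted effort in context even though your proof fully establishes the stated theorem.
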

\begin{proof}
See Appendix~\ref{app:opt_eff_monotonicity}.
\end{proof}
It turns out, however, that the gain versus efficiency trade-off is no longer monotone
under the monotonicity constraint.
Indeed, our next theorem shows that whenever $\ol{z} \neq 0$, if $F$ is symmetric (or indeed, not extremely skewed),
the fully randomized design $p_{\theta,\theta,0}$ is inadmissible for any $\theta \neq 1/2$, in the sense that there exists a different
\textit{monotone} design $p$ with $\e_p(z)=\ol{z}$ but both $\eff(p) > \eff(p_{\theta,\theta,0})$ and $\e_p(xz) > \e_{p_{\theta,\theta,0}}(xz)$.
In other words, the RCT is no longer admissible under $\eff(\cdot)$ when $\ol{z} \neq 0$. 

\begin{theorem}
\label{thm:inadmissibility}
Fix $\ol{z} \in (-1,1) \setminus \{0\}$,
and assume $F$ satisfies the conditions of Theorem~\ref{thm:opt_eff_monotonicity}.
If $\ol{z} < 0$ assume that $\e(x^2) < F^{-1}(1)^2$; otherwise assume that $\e(x^2) < F^{-1}(0)^2$.
Here $F^{-1}(1) \equiv \sup I$ and $F^{-1}(0) \equiv \inf I$.
Let $p_1=p_{\theta,\theta,0}$ be the fully randomized monotone design with $\e_{p_1}(z)=\ol{z}$, so that $\theta = (1+\ol{z})/2$.
Then there exists a monotone design $p_2$ such that $\e_{p_2}(z)=\ol{z}$ yet both
$\eff(p_{2}) > \eff(p_{1})$ and $\e_{p_{2}}(xz) > 0 = \e_{p_{1}}(xz)$.
\end{theorem}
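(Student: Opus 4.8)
The plan is to take $p_2$ to be a small monotone perturbation of $p_1\equiv\theta$ (note $\theta=(1+\ol z)/2\in(0,1)$ strictly, since $\ol z\in(-1,1)\setminus\{0\}$), designed so that the perturbation raises the short-term gain while the forced change in the only remaining free moment $\e_p(x^2z)$ drags the efficiency upward.

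First I would record the relevant structure. Writing $v=\e(x^2)$ and, for a design $p$, $b=\e_p(xz)$, $c=\e_p(x^2z)$, equation~\eqref{eq:M} gives that every design with $\e_p(z)=\ol z$ satisfies
\begin{equation*}
\eff(p)=\Bigl(v-b^2-\tfrac{c^2}{v}\Bigr)-\frac{b^2(\ol z+c/v)^2}{1-\ol z^2-b^2/v}.
\end{equation*}
For $p_1$ we have $b=0$ and $c=\ol z v$, so $\eff(p_1)=v(1-\ol z^2)$, and $M_{11}(p_1)=1-\ol z^2>0$ (consistent with Corollary~\ref{corollary:M11}), so $\eff$ is a smooth function of $(b,c)$ near $(0,\ol z v)$. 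Since every occurrence of $b$ outside the factor $(\ol z+c/v)^2$ enters through $b^2$, the partial derivative $\partial_b\eff$ vanishes at $(0,\ol z v)$, whereas $\partial_c\eff=-2c/v=-2\ol z$ there. This is the key point: increasing the short-term gain $b$ only costs efficiency at second order, but one cannot do so without also moving $c$, and a first-order efficiency gain appears as soon as $c$ moves in the direction opposite to the sign of $\ol z$.

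Next I would construct the perturbation. For a threshold $t$ with $0<\Pr(x<t)<1$, set $\eta_t:=\beta\ind(x\ge t)-\alpha\ind(x<t)$ with $\alpha,\beta>0$ chosen so that $\alpha\Pr(x<t)=\beta\Pr(x\ge t)=:m>0$, whence $\e(\eta_t)=0$. Then $\eta_t$ is nondecreasing and bounded, so $p_\epsilon:=\theta+\epsilon\eta_t$ is a monotone design function with $\e_{p_\epsilon}(z)=\ol z$ for every $\epsilon>0$ small enough that $0\le p_\epsilon\le 1$ (possible because $\theta\in(0,1)$). By~\eqref{eq:tower}, $\e_{p_\epsilon}(xz)=2\epsilon m\bigl(\e(x\mid x\ge t)-\e(x\mid x<t)\bigr)$, which is strictly positive for all $\epsilon>0$, and $\e_{p_\epsilon}(x^2z)=\ol z v+2\epsilon\,\e(x^2\eta_t)$ with $\e(x^2\eta_t)=m\bigl(\e(x^2\mid x\ge t)-\e(x^2\mid x<t)\bigr)$. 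Since $b(\epsilon),c(\epsilon)$ are affine in $\epsilon$ and $\eff$ is smooth near $(0,\ol z v)$, the chain rule and the partials above give
\begin{equation*}
\frac{d}{d\epsilon}\,\eff(p_\epsilon)\Big|_{\epsilon=0}=-2\ol z\cdot 2\,\e(x^2\eta_t)=-4\,\ol z\,m\bigl(\e(x^2\mid x\ge t)-\e(x^2\mid x<t)\bigr).
\end{equation*}

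The main obstacle — and the only place the moment hypothesis is used — is to choose $t$ making this derivative positive, i.e.\ so that $\e(x^2\eta_t)$ has sign opposite to $\ol z$; this is possible precisely because $x\mapsto x^2$ is not monotone, so a monotone mean-zero $\eta_t$ can be correlated with $x^2$ with either sign by localizing it near the appropriate tail. Since $\e(x^2)$ lies between $\e(x^2\mid x<t)$ and $\e(x^2\mid x\ge t)$, the bracketed difference has the sign of $\e(x^2\mid x\ge t)-\e(x^2)$. If $\ol z<0$, I would send $t\uparrow F^{-1}(1)=\sup I$, noting $\e(x^2\mid x\ge t)\to F^{-1}(1)^2$ when $\sup I<\infty$ and $\e(x^2\mid x\ge t)\ge t^2\to\infty$ otherwise; the hypothesis $\e(x^2)<F^{-1}(1)^2$ then lets me fix $t$ with $\e(x^2\mid x\ge t)>\e(x^2)$, so $\e(x^2\eta_t)>0$ and the displayed derivative is $-4\ol z\times(\text{positive})>0$. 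If $\ol z>0$, the mirror-image choice $t\downarrow F^{-1}(0)=\inf I$ with the hypothesis $\e(x^2)<F^{-1}(0)^2$ makes $\e(x^2\eta_t)<0$, and again the derivative is positive. Having fixed such a $t$, I would take any $\epsilon>0$ small enough that $0\le p_\epsilon\le 1$ and, by the sign of the derivative at $\epsilon=0$, $\eff(p_\epsilon)>\eff(p_1)$; then $p_2:=p_\epsilon$ is a monotone design with $\e_{p_2}(z)=\ol z$, $\eff(p_2)>\eff(p_1)$, and $\e_{p_2}(xz)>0=\e_{p_1}(xz)$, as required.
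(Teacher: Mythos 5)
Your proof is correct, but it takes a genuinely different route from the paper's. You construct an explicit two-level perturbation $p_\epsilon=\theta+\epsilon\eta_t$ of the RCT (note $p_\epsilon$ is just the two-level design $p_{\theta-\epsilon\alpha,\,\theta+\epsilon\beta,\,t}$), observe that for fixed $\e_p(z)=\ol{z}$ the criterion $\eff$ is a smooth function of the two moments $(b,c)=(\e_p(xz),\e_p(x^2z))$ with $\partial_b\eff=0$ and $\partial_c\eff=-2\ol{z}$ at the RCT, and then pick the threshold $t$ near the appropriate endpoint of $I$ so that the induced first-order change in $c$ has sign opposite to $\ol{z}$; the moment hypothesis $\e(x^2)<F^{-1}(1)^2$ (resp.\ $<F^{-1}(0)^2$) enters exactly where it should, to guarantee $\e(x^2\mid x\ge t)>\e(x^2)$ (resp.\ $\e(x^2\mid x<t)>\e(x^2)$) for some admissible $t$. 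The paper instead works along the optimal monotone frontier: it shows $p^{\dag}_{\opt;\ol{z},\ol{xz}}=p^{\dag}_{\max;\ol{z},\ol{xz}}=p_{\ell,1,t}$ for small $\ol{xz}>0$, differentiates the feasibility constraints via the implicit function theorem to get $\ell'(\ol{xz})$ and $t'(\ol{xz})$, and computes $\lim_{\ol{xz}\downarrow 0}\partial\det(M(p^{\dag}_{\max}))/\partial\ol{xz}=-2\ol{z}M_{11}\bigl(F^{-1}(1)-\e(x^2)/F^{-1}(1)\bigr)>0$. Your argument is more elementary --- it needs neither the Neyman--Pearson characterization of $p^{\dag}_{\max}$, nor the implicit function theorem, nor (as far as I can see) the symmetry of $F$ assumed via Theorem~\ref{thm:opt_eff_monotonicity} --- and it suffices for the existence statement actually claimed. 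What the paper's heavier computation buys is the stronger fact, visible in Figures~\ref{fig:uniform_efficiency} and~\ref{fig:weibull_efficiency}, that the \emph{optimal} monotone trade-off curve $\ol{xz}\mapsto\eff(p^{\dag}_{\opt;\ol{z},\ol{xz}})$ is itself increasing near $\ol{xz}=0$, i.e.\ the RCT is dominated by the constrained-optimal designs themselves, not merely by some feasible competitor. One small point worth making explicit in your write-up: strict positivity of the derivative at $\epsilon=0$ yields $\eff(p_\epsilon)>\eff(p_1)$ only for all sufficiently small $\epsilon>0$, so you should intersect that range with the range of $\epsilon$ keeping $p_\epsilon\in[0,1]$; you do say this, and both ranges are nonempty since $\theta\in(0,1)$, so the argument closes.
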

\begin{proof}
See Appendix~\ref{app:inadmissibility}.
\end{proof}
\section{Examples}
\label{sec:examples}
In this section, we compute the optimal exploration-exploitation trade-off curves investigated in Section~\ref{sec:inadmissibility}
for several specific running variable distributions $F$. 
We can obtain large gains in efficiency under the criterion $\eff(\cdot)$
by moving from the three level tie-breaker design to $p_{\opt}^{\dag}$,
without sacrificing short-term gain.
We see further (generally smaller) improvements 
when we remove the monotonicity constraint and move from $p_{\opt}^\dag$ to $p_{\opt}$. 

To generate these curves we compute optimal designs $p_{\opt;\ol{z},\ol{xz}}$ and $p_{\opt;\ol{z},\ol{xz}}^{\dag}$
and evaluate their efficiency for various fixed $\ol{z} \in (-1,1)$
as we vary the short-term gain constraint $\ol{xz}$ over a fine grid covering $[0, \ol{xz}_{\max}(\ol{z})]$.
For interpretability we write $\ol{xz}=\delta \cdot \ol{xz}_{\max}(\ol{z})$ and specify short-term gain with the normalized parameter $\delta \in [0,1]$,
as discussed in Section~\ref{subsection:stg_bounds}.
When $F$ is continuous,
solutions $p_{\opt}$ and $p_{\opt}^{\dag}$ to~\eqref{eq:p_opt_problem} are computed by noting that we can write
$p_{\max}=p_{[a_1,a_2]^c}$, $p_{\min}=p_{[b_1,b_2]}$, $p_{\max}^{\dag}=p_{\ell,1,t}$, and $p_{\min}^{\dag}=p_{0,u,s}$
by Propositions~\ref{prop:pmax_pmin} and~\ref{prop:pmax_pmin_prime}.
Each of these designs has two unknown parameters
that must be the unique
solutions to the two feasibility constraints $\e_p(z)=\ol{z}$ and $\e_p(xz)=\ol{xz}$.
Given these parameters, we can apply~\eqref{eq:optimal_design} to compute $p_{\opt}$ and $p_{\opt}^{\dag}$.
If $\lambda \notin \{0,1\}$
we could also get an optimal design of the form in Theorem~\ref{thm:alt_pmax_pmin}. First we compute $\ol{x^2z}^*(\ol{z},\ol{xz};\eff)$
via~\eqref{eq:argmax},
noting that the endpoints of 
$I_{\cf}(\ol{z},\ol{xz})$
are $\e_{p_{\min}(\ol{z},\ol{xz}}(x^2z)$ and $\e_{p_{\max}(\ol{z},\ol{xz})}(x^2z)$.
Then~\eqref{eq:argmax} is simply maximizing a continuous function over a closed interval,
so it can be handled by standard methods such as Brent's algorithm~\citep{brent1973}.
Given $\ol{x^2z}^*(\ol{z},\ol{xz};\eff)$
we can then numerically search for $a_1$, $a_2$, $a_3$
such that $p_{\opt}=\indic (x \le a_1) + \indic(a_2 \le x \le a_3)$ is feasible with $\e_{p_{\opt}}(x^2z)=\ol{x^2z}^*(\ol{z},\ol{xz};\eff)$.
By Theorem~\ref{thm:alt_pmax_pmin} such a solution will exist and be optimal.
We can do a similar search for an optimal two level design $p_{\opt}^{\dag}$ under the monotonicity constraint,
by Theorem~\ref{thm:alt_pmax_pmin_mon}.

\subsection{Uniform running variable}
\label{sec:uniform}

\begin{table}
\caption{A list of the parameters for the various designs considered in Section \ref{sec:examples}, for fixed $(\ol{z}, \ol{xz}) \in \cj$, when $F \sim \dunif(-1,1)$. The values for $\Delta$ are only valid if they are between $0$ and $\min((1-\ol{z})/2, (1+\ol{z})/2)$, inclusive. Otherwise there is no feasible three level tie-breaker design.
}
\label{table:param_list}
\begin{tabular}{lcc}
\toprule
Design & Parameter & Value \\
\midrule
$p_{3;\ol{z},\Delta}$ & $\Delta$ & $2(1-\ol{z}^2-2\ol{xz})^{1/2}$ \\
\midrule
\multirow{2}{*}{$p_{\max;\ol{z},\ol{xz}}$} & $a_1$ & $-\ol{xz}/(1-\ol{z})-(1-\ol{z})/2$ \\
& $a_2$ & $-\ol{xz}/(1-\ol{z})+(1-\ol{z})/2$ \\
\midrule
\multirow{2}{*}{$p_{\min;\ol{z},\ol{xz}}$} & $b_1$ & $\ol{xz}/(1+\ol{z})-(1+\ol{z})/2$ \\
& $b_2$ & $\ol{xz}/(1+\ol{z})+(1+\ol{z})/2$ \\
\midrule
\multirow{2}{*}{$p_{\max;\ol{z},\ol{xz}}^{\dag}$} & $\ell$ & $(1/2)(1-\ol{z}^2-2\ol{xz})/(1-\ol{z}-\ol{xz})$ \\
 & $t$ & $1-2\ol{xz}/(1-\ol{z})$ \\
 \midrule
\multirow{2}{*}{$p_{\min;\ol{z},\ol{xz}}^{\dag}$} & $u$ & $(1/2)(1+z)^2/(1+\ol{z}-\ol{xz})$ \\
& $s$ & $2\ol{xz}/(1+\ol{z})-1$ \\
\bottomrule
\end{tabular}
\end{table}

We begin with the case $F \sim \dunif(-1, 1)$. 
This is the distribution most extensively studied by \cite{owen:vari:2020}, 
and allows closed form expressions for the parameters in $p_{\max}$, $p_{\min}$, $p_{\max}^{\dag}$, and $p_{\min}^{\dag}$,
given in Table~\ref{table:param_list}. 
Figure \ref{fig:uniform_efficiency} shows plots of $\eff(p)^{-1}$ versus $\ol{xz}$ for $\ol{z} \in \{0,-0.2,-0.5,-0.7\}$ 
under different designs:
the three level tie-breaker $p_{3;\ol{z},\Delta}$,
a globally optimal design $p_{\opt;\ol{z},\ol{xz}}$,
and
an optimal monotone design $p_{\opt;\ol{z},\ol{xz}}^{\dag}$.
Since $F$ is symmetric, 
the curves would be identical if $\ol{z}$ were replaced with $-\ol{z}$.

As shown in \citet{owen:vari:2020}, 
under the constraint $\ol{z}=0$ the three level tie-breaker is optimal for all $\delta$, 
and thus the three level tie-breaker, $p_{\opt}$, and $p_{\opt}^{\dag}$ all attain the optimal efficiency, 
as can be seen in the top left panel of Figure~\ref{fig:uniform_efficiency}. 
The proof of Theorem~\ref{thm:opt_eff_monotonicity}
shows this would hold for any continuous, symmetric running variable distribution $F$.
As $\ol{z}$ moves away from 0, however, we see that the three level tie-breaker becomes increasingly less efficient relative to both the optimal monotone design and the optimal design. 
At the same time, the range of short-term gain values $\ol{xz}$ attainable by three level tie-breaker designs becomes smaller relative to the full range achievable by arbitrary designs.
Note that Figure~\ref{fig:uniform_efficiency} plots the \textit{reciprocal} of the efficiency criterion $\eff(\cdot)$,
so that it can be interpreted
as an asymptotic variance for $\hat{\beta}_3$ via~\eqref{eq:var_beta_3},
and compared with the plots in~\citet{owen:vari:2020}.

\begin{figure}[t]
\centering  
\includegraphics[width=0.49\linewidth,height=6cm]{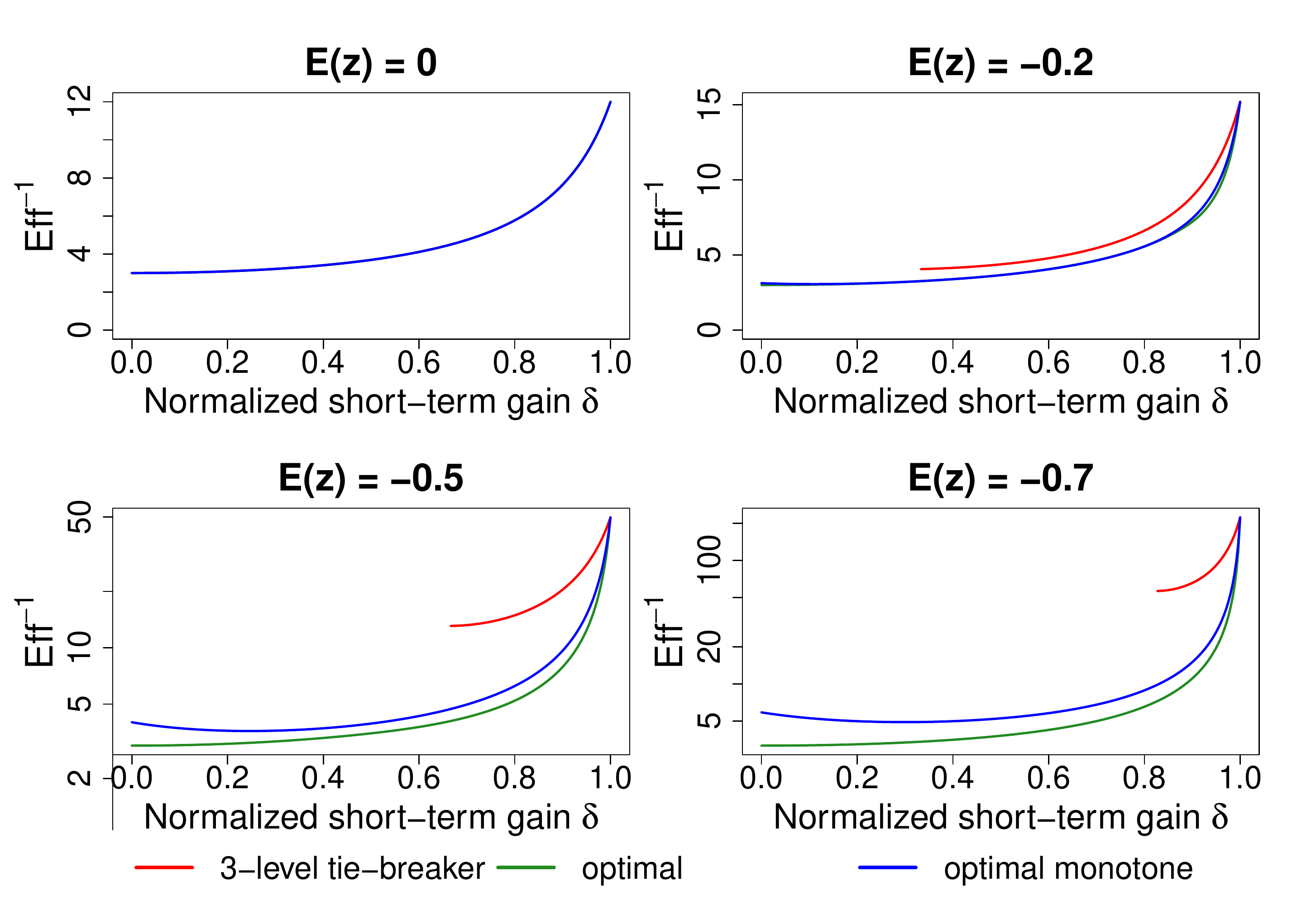} 
\includegraphics[width=0.49\linewidth,height=6cm]{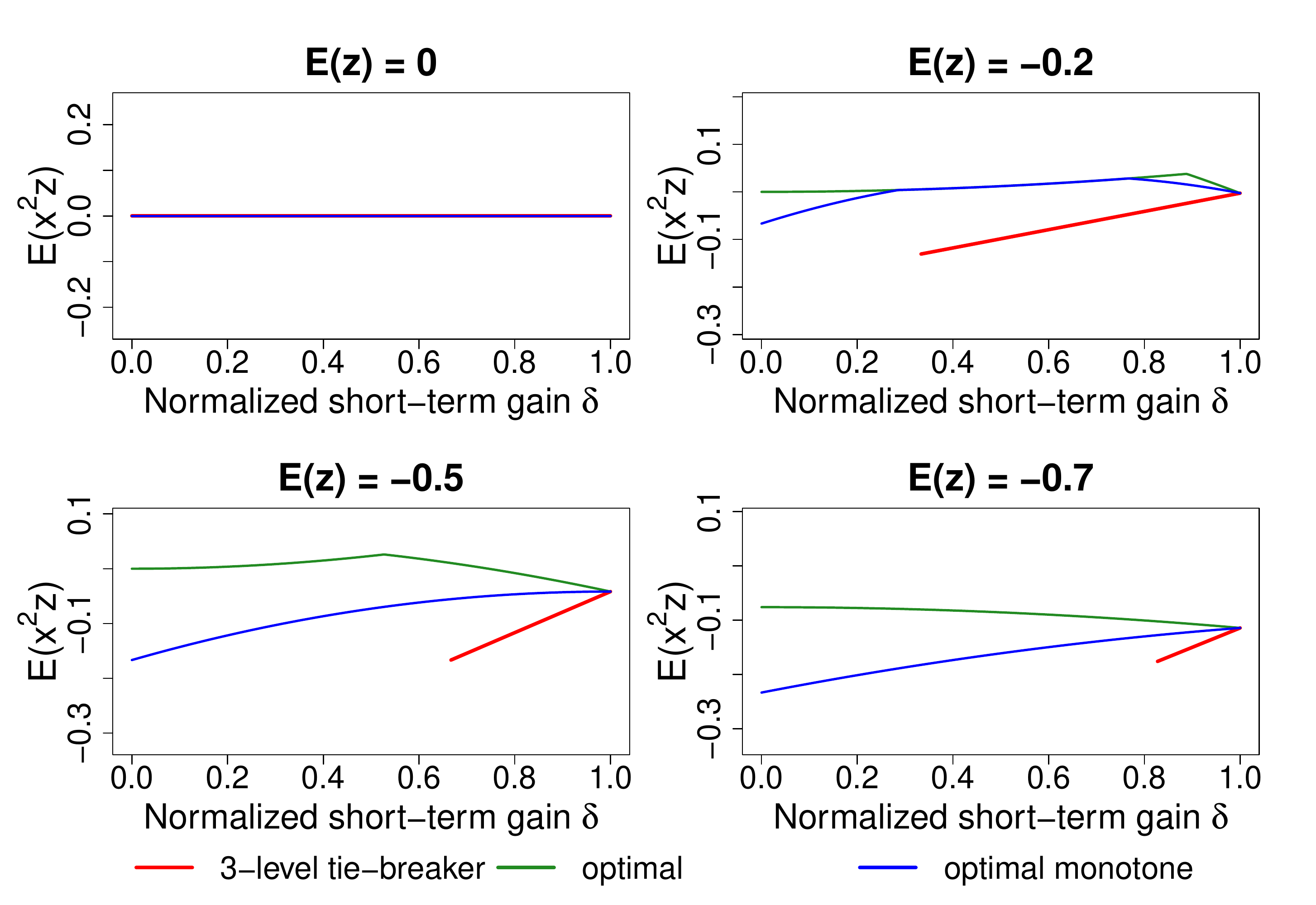} 
\caption{\label{fig:uniform_efficiency}
Left four plots: The inverse efficiencies of the three level tiebreaker $p_{3;\ol{z},\Delta}$,
the optimal designs $p_{\opt}$ in Section~\ref{sec:global_optimal},
and the optimal monotone designs $p_{\opt}^{\dag}$ in Section~\ref{sec:monotonicity}
as a function of the normalized short-term gain parameter $\delta$ 
for fixed $\ol{z} \in \{0, -0.2, -0.5, -0.7\}$, 
when $F \sim \dunif(-1,1)$.
When $\ol{z}=0$, all three curves coincide.
Note the logarithmic vertical axis for the bottom two plots.
Right four plots: The value of $\e_p(x^2z)$ for the designs $p$ in the left four plots.}
\end{figure}

\begin{table}[b]
\caption{
An extension of Table 2 in \citet{owen:vari:2020}, showing how $p_{\opt}$ and $p_{\opt}^{\dag}$ can greatly increase efficiency without sacrificing short-term gain, compared to a three level tie-breaker. All designs $p$ in this table satisfy $\e_p(z)=-0.7$ and assume $F \sim \dunif(-1,1)$.
}
\label{table:unif_example}
\begin{tabular}{llcc}
\toprule
Design $p$ & Description & Normalized short-term gain $\delta$ & $\eff(p)^{-1}$ \\
\midrule
$p_{0,1,0.7}$ &Sharp RDD & $1.000$ & $223.44$ \\
$p_{3;-0.7,\Delta}$ &3 level tie-breaker & $0.980\phz$ & $137.56$ \\
$p_{\opt;-0.7,0.25}^{\dag}$& Optimal monotone design & $0.980\phz$ & $\phz54.90$ \\
$p_{\opt;-0.7,0.25}$ &Optimal design & $0.980\phz$  & $\phz42.37$ \\
\bottomrule
\end{tabular}

\end{table}
Table~\ref{table:unif_example} extends Table 2 of \cite{owen:vari:2020}, 
referring to a setting in which only 15\% of subjects are to be treated ($\ol{z} = -0.7$).
That table shows the inverse efficiency of the sharp RDD $p_{0,1,0.7}$ is 223.44, 
while the three level tie-breaker $p_{3;-0.7,0.05}$ reduces this by about 40\% to $137.56$, at the cost of around 2\% of the short-term gain of the sharp RDD
over the RCT. 
Then $\eff(p_{\opt}^{\dag})^{-1}=54.90$ and $\eff(p_{\opt})^{-1}=42.37$, further improving efficiency
for designs $p_{\opt}$ and $p_{\opt}^{\dag}$ achieving the same short term gain as the three level tie-breaker. 
For this example, we can directly compute
with~\eqref{eq:quad_min} that $p_{\opt;-0.7,0.25}^{\dag} = p_{\max;-0.7,-0.25}^{\dag} = p_{\ell,1,t}$
is the unique optimal monotone design,
where by Table \ref{table:param_list},
$\ell=0.0034$ and $t = 0.7059$. 
In other words, the unique optimal monotone tie-breaker design deterministically assigns treatment to the top 14.7\%, 
and gives the other subjects an equal, small (0.34\%) chance of treatment. 

A limitation of this analysis is that in many practical settings, 
the two-line regression model will not fit very well over the entire range of $x$ values.
In that case the investigator might use a narrower data range,
essentially fitting a less asymmetric two-line model, as illustrated in \cite{owen:vari:2020}. 
This is equivalent to using a local linear regression with a rectangular ``boxcar'' kernel. 
In this setting, we know from Figure~\ref{fig:uniform_efficiency} that when the treatment proportion is not exactly 50\%, 
we can always do better than the three level tie-breaker using monotone two level design.
Even with a small asymmetry, e.g. 40\% treatment ($\ol{z} = -0.2$), 
we see a noticeable efficiency increase between the three level tie-breaker and an optimal monotone design across all values of $\delta$.

Finally, consistent with the results of Section~\ref{sec:inadmissibility}, 
we observe in Figure~\ref{fig:uniform_efficiency} that $\eff(p_{\opt})$ decreases with the gain parameter $\delta$ for each $\ol{z}$,
while near $\delta =0$, $\eff(p_{\opt}^{\dag})$ \textit{increases} with $\delta$ for all $\ol{z} \neq 0$. 
This clearly demonstrates the inadmissibility of the fully randomized design from Theorem \ref{thm:inadmissibility}. 
For example, if we fix $\ol{z}=-0.5$ (so 25\% of the subjects are to be treated), 
the fully randomized design $p_{0.25,0.25,0}=p_{\opt;-0.5,0}^{\dag}$ has efficiency 0.25 and no short-term gain ($\delta=0$), 
while $p_{\opt;-0.5,0.1}^{\dag}$ has higher efficiency (0.28)
with short-term gain $\delta = 0.27 > 0$. 
However, if we remove the monotonicity constraint, by Theorem~\ref{thm:opt_eff_monotonicity} $p_{\opt;-0.5,0}$ is the most efficient design over all attainable gain values, 
attaining efficiency 0.33 with $\delta=0$.


\subsection{Skewed running variable}
\begin{figure}
\centering
\includegraphics[width=0.49\linewidth,height=6cm]{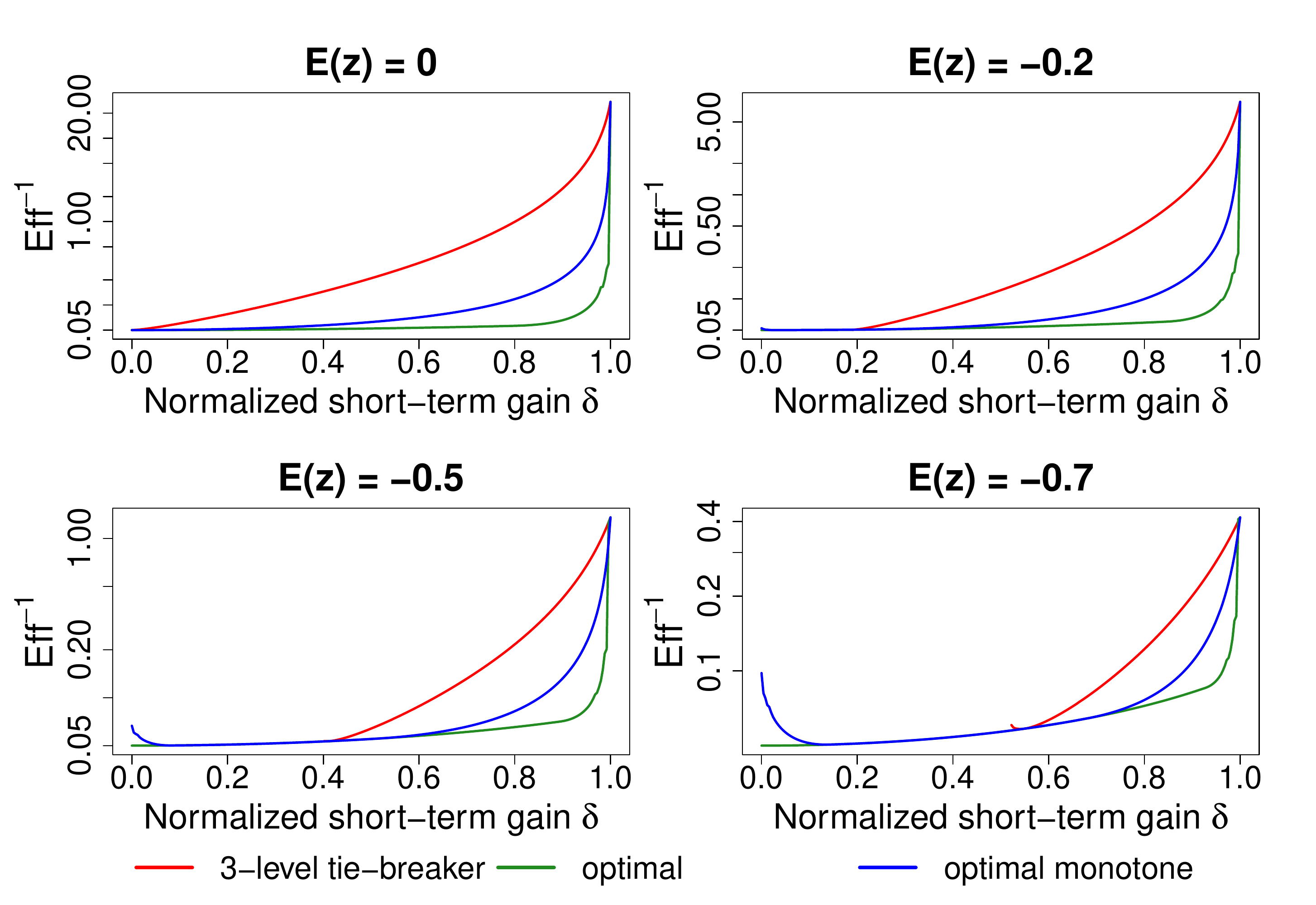} 
\includegraphics[width=0.49\linewidth,height=6cm]{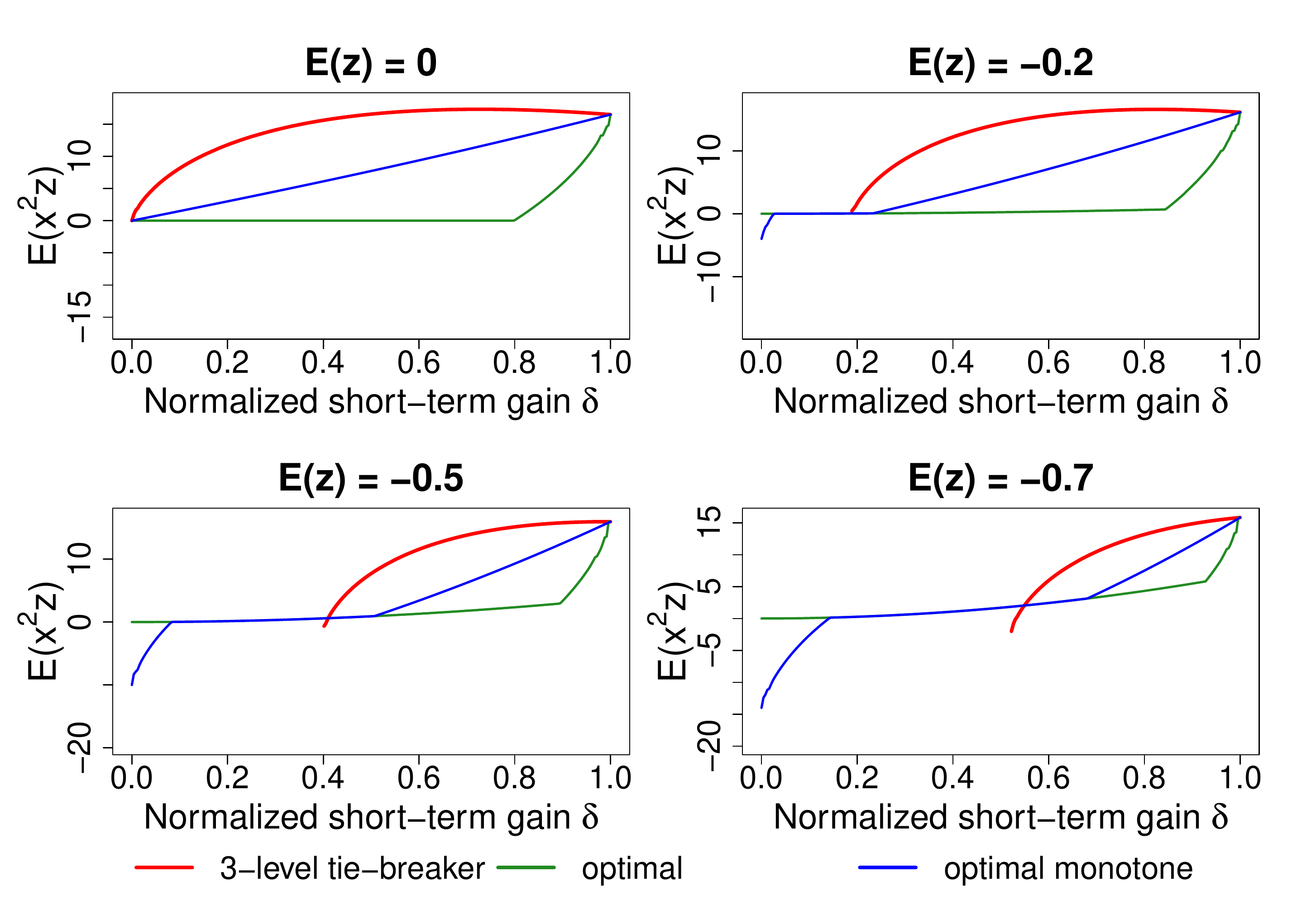} 
\caption{\label{fig:weibull_efficiency}
Same as Figure~\ref{fig:uniform_efficiency}, except for the case where $F$ is a centered Weibull distribution~\eqref{eq:weibull}.}
\end{figure}

We now repeat the analysis of Section~\ref{sec:uniform} for a skewed running variable distribution $F$:
\begin{equation}
\label{eq:weibull}
F(x) = 1-\exp(-\sqrt{x+2})
\end{equation}
for $x \in (-2,\infty) = I$.
This corresponds to a mean-centered Weibull distribution with shape parameter $0.5$ and scale parameter $1$.
Figure~\ref{fig:weibull_efficiency} shows the trade-off curves under this distribution $F$.
We see, as expected by Theorem~\ref{thm:inadmissibility},
that once again the fully randomized design
is inadmissible,
even within the class of monotone designs,
when $\ol{z} \neq 0$.

Another notable feature when $F$ is not symmetric is that the three level tie-breaker is no longer optimal, even in the balanced case $\ol{z}=0$. 
While the unconstrained optimal design 
attains the lower bound $\e(x^2z) = \ol{x^2z}^*(0,\ol{xz};\eff) = 0$ for a wide range of short-term gains, 
Figure~\ref{fig:weibull_efficiency} shows
the three level tie-breaker does not, 
except in the case $\ol{z}=\ol{xz}=0$ corresponding to the RCT. 
In Figure \ref{fig:weibull_efficiency}, 
we see the optimal design is 
over 100 times as efficient as the three level tiebreaker for sufficiently large $\delta$, even in the balanced setting $\ol{z}=0$.
In the unbalanced treatment cases we also see a range of values for which optimal designs with and without the monotonicity constraint
attain the same value of $\e(x^2z)$.
In those situations there exists a globally optimal design that is also monotone.

\subsection{Fixed-$x$ data example}
\label{sec:data_example}
We now illustrate how to compute optimal designs for the original fixed-$x$ problem~\eqref{eq:p_opt_problem_fd} using a real data example.
\citet{ludwigmiller2007} used an RDD to analyze the impact of Head Start,
a U.S.\ government program launched in 1965
that provides benefits such as preschool and health services to children in low-income families.
When the program was launched, extra grant-writing assistance was provided to the 300 counties with the highest poverty rates in the country.
This created a natural discontinuity in the amount of funding to counties as a function of $x$, a county poverty index based on the 1960 U.S. Census.
The distribution of $x$ over $n=2{,}804$ counties is shown in Figure \ref{fig:head_start}.
The data is made freely available by~\citet{cattaneo2017}.

If the government had deemed it ethical to somewhat randomize the 300 counties receiving the grant-writing assistance,
it could have more efficiently estimated the causal impact of this assistance using our $p_{\opt}^{\dag}$, 
while still ensuring poorer counties are preferentially helped,
and no county has a lower chance of getting the assistance
than a more well-off county.
As in the data example of~\citet{klug:owen:2022:tr}, we do not observe the potential outcomes, so we cannot actually implement such a design and compute any estimators.
However, we can still study statistical efficiencies, which depend only on the expected information matrix $\ci$.

\begin{figure}
\includegraphics[width=0.8\linewidth, height=6cm]{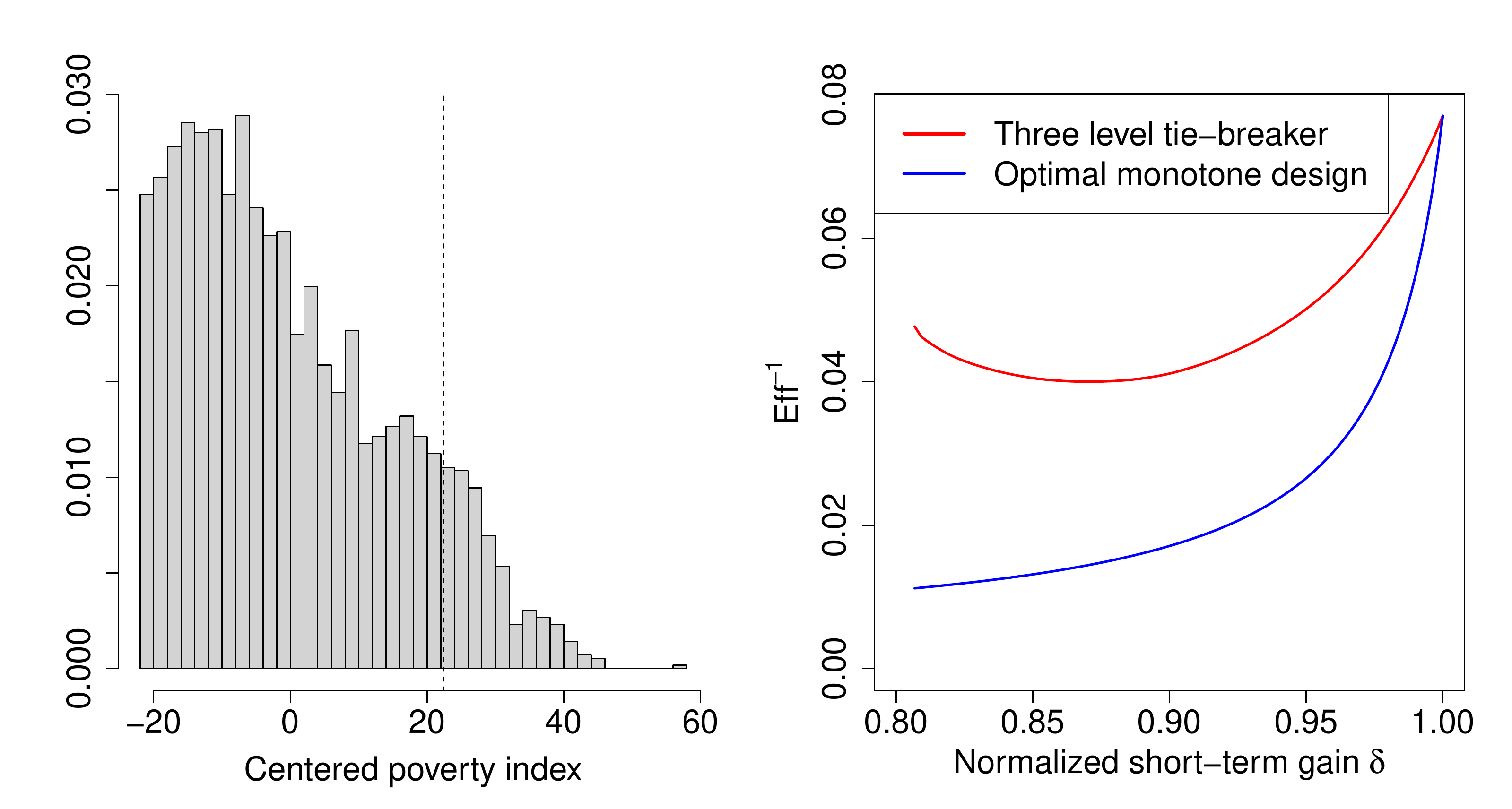} 
\caption{
Left: A histogram of the mean-centered poverty index $x$ for $n=2{,}804$
counties used to determine eligibility for additional grant-writing assistance in the Head Start program. 
The dotted vertical line indicates the eligibility threshold. 
Right: The exploration-exploitation trade-offs for the Head Start data,
comparing the three level tie-breaker with the optimal monotone two level design.
The curves intersect at the value $\eff^{-1}$ of the RDD.}
\label{fig:head_start}
\end{figure}

We fix the treatment fraction at $300/2804$,
corresponding to $\ol{z} \approx -0.79$.
Varying the short-term gain constraint $\ol{xz}$ we seek to compute $p_{\max;\ol{z},\ol{xz}}^{\dag}$ and $p_{\min;\ol{z},\ol{xz}}^{\dag}$.
We describe how to compute the former.
Because $F$ is discrete
it suffices to only consider discontinuity points $t \in \{x_1,\dots,x_n\}$ where $F$ places positive probability mass,
as every design of the form of $p_{\max}^{\dag}$ in~\eqref{eq:gen_p_max_min_mon} has a representation in that form with such $t$.
Also,
given the values of the discontinuity $t$ and $p(t)=\epsilon$,
there is at most one value $\ell=\ell(t,\epsilon) \in [0,1]$
such that the resulting design $p$ in the form of $p_{\max}^{\dag}$ in~\eqref{eq:gen_p_max_min_mon}
satisfies the treatment fraction constraint $\e_p(z)=\ol{z}$.
When such an $\ell$ exists for some $(t,\epsilon)$, call the corresponding design $p^{(t,\epsilon)}$
(note we suppress the dependence on $\ol{z}$).
From Appendix~\ref{app:uniqueness_mon} we deduce that $\e_{p^{(t,\epsilon)}}(xz) < \e_{p^{(t',\epsilon')}}(xz)$ and $\ell(t,\epsilon) > \ell(t',\epsilon')$
if $t > t'$, or if $t=t'$ and $p(t)<p'(t)$.
This shows we can efficiently find the unique $(t,\epsilon)$ so that $p^{(t,\epsilon)}$ satisfies the desired short-term gain constraint
$\e_{p^{(t,\epsilon)}}(xz)=\ol{xz}$.
In particular we compute $t = \max\{s \in \{x_1,\ldots,x_n\} \mid \e_{p^{(s,1)}}(xz) \ge \ol{xz} \}$ via a binary search on $\{x_1,\ldots,x_n\}$,
then solve for $\epsilon$ to satisfy $\e_{p^{(t,\epsilon)}}(xz)=\ol{xz}$.
Given sorted $x$, this entire procedure computes $p_{\max}^{\dag}$ in $O(n)$ operations,
as for each $(t,\epsilon)$, 
$\ell(t,\epsilon)$ and $\e_{p^{(t,\epsilon)}}(xz)$ can be computed in constant time using~\eqref{eq:tower} given the partial sums $\{\sum_{i=1}^m x_i\}_{m=1}^n$.

After computing $p_{\min}^{\dag}$ with a similar approach,
we can apply~\eqref{eq:optimal_design} to compute an optimal design $p_{\opt}^{\dag}$.
As in the continuous case,
we can alternately obtain a solution $p_{\opt}^{\dag}$ of the form in Theorem~\ref{thm:alt_pmax_pmin_mon}
by finding $\ell'$, $u'$, $t'$, and $\epsilon'$ such that
$\e_p(z)=\ol{z}$, $\e_p(xz)=\ol{xz}$, and $\e_p(x^2z)=\ol{x^2z}^*(\ol{z},\ol{xz};\eff)$
for $p(x) \equiv \ell' \indic(x < t') + \epsilon' \indic(x=t') + \indic(x >t')$.
Unlike the continuous case,
we now have 4 unknown parameters instead of 3.
We can search for an optimal set of these parameters by looping through the finite possible values of $t'$ and then doing a univariate search for $\epsilon'$,
noting that knowledge of $t'$ and $\epsilon'$ determines $\ell'$ and $u'$ by the equality constraint parameters $\ol{z},\ol{xz})$.
We implemented this search,
along with the procedure to compute $p_{\max}^{\dag}$ and $p_{\min}^{\dag}$ described above,
in the R language~\citep{R}.
The code is freely available online\footnote{https://github.com/hli90722/optimal\_tiebreaker\_designs}.

The right panel of Figure~\ref{fig:head_start} shows the inverse efficiency for the three level tie-breaker~\eqref{eq:3_level_quantile_tie-breaker}
versus the best two level monotone design obtained by applying the above procedure to the $x_i$ in the Head Start data.
It turns out that for these $x_i$ and our choice of $\ol{z}$, 
$p_{\max}^{\dag}$ is optimal for all $\ol{xz}$ (and hence the unique optimal design, by Proposition~\ref{prop:pmax_pmin_prime}).
We note that with a normalized short term gain $\delta \approx 0.958$, 
which corresponds to random assignment for about 150 counties in the 3-level tie-breaker, 
the optimal monotone two level design has inverse efficiency 0.030, compared to 0.050 for the three level tie-breaker. 
That is, confidence intervals for $\beta_3$ using the three level tie-breaker would be about 29\% wider than for the optimal monotone two-level design,
without additional short-term gain. 
The sharp RDD would give 62\% wider intervals than the optimal monotone two-level design
with only about 4.2\% additional short-term gain.

\section{Summary}
\label{sec:discussion}
Our results provide a thorough characterization of the solutions to a constrained optimal experiment design problem.
Considering a linear regression model for a scalar outcome
involving a binary treatment assignment indicator $z$, a scalar running variable $x$,
and their interaction,
we seek to specify a randomized treatment assignment scheme based on $x$ --- a tie-breaker design ---
that optimizes a statistical efficiency criterion that is an arbitrary continuous function of the expected information matrix under this regression model.
We have equality constraints on the proportion of subjects receiving treatment due to an external budget,
and on the covariance between $x$ and $z$ due to a preference for treating subjects with higher values of $x$.
Critically, our proof techniques,
which deviate from those typically used to show equivalence theorems, 
enable an additional monotonicity constraint.
This allows our results to handle the ethical or economic requirement that a subject cannot have a lower chance of receiving the treatment
than another subject with a lower value of $x$.

In a setting where the running variable $x$ is viewed as random from some distribution $F$ ---
and thus part of the randomness in the expected information matrix defining the efficiency criterion ---
we prove the existence of constrained optimal designs
that stratify $x$ into a small number of intervals
and assign treatment with the same probability to all individuals within each stratum.
In particular, with the monotonicity constraint that is essential in our motivating applications,
we only need three strata,
one of which only contains a single running variable value.
We also provide strong conditions on which the optimal tie-breaker design is unique.
We emphasize the generality of our results,
which apply for any continuous efficiency criterion,
any running variable distribution $F$ (subject only to weak moment existence conditions),
and the full range of feasible equality constraints.
The problem an investigator faces in practice,
where there are a finite number of running variable values $x_1,\ldots,x_n$
known (hence non-random) at the time of treatment assignment,
is a special case of our more general problem where $F$ is discrete and takes on values $x_1,\ldots,x_n$ with equal probability.
This enables optimal designs to be easily computed in practice, as described in Section~\ref{sec:data_example}.

We believe that this work provides a useful starting point to study optimal tie-breaker designs.
For results on tie-breaker designs beyond the two line
parametric regression,
see \cite{mvtiebreaker}
for a multivariate regression context
and \cite{klug:owen:2022:tr} for local linear
regression models with a scalar running variable.
\begin{appendix}
\section{Proof of Corollary \ref{corollary:M11}} \label{app:M11}
For any $\ol{z} \in (-1,1)$ we have $\ol{xz}_{\max}(\ol{z}) = \e_{p_{\ol{z}}}(xz) = 2\e(xp_{\ol{z}}(x))$ where $p_{\ol{z}}$ is as in Lemma~\ref{lemma:stg}. The desired condition $M_{11} > 0$ is equivalent to $(\ol{xz})^2 < \e(x^2)(1-\ol{z}^2)$
and so it suffices to show $\e(xp_{\ol{z}}(x))^2 < \e(x^2)((1+\ol{z})/2)((1-\ol{z})/2)$.

Applying Cauchy-Schwarz
to $xp_{\ol{z}}(x)^{1/2} \times p_{\ol{z}}(x)^{1/2}$ 
and then $x(1-p_{\ol{z}}(x))^{1/2} \times (1-p_{\ol{z}}(x))^{1/2}$ yields the two equations
\begin{align*}
\e(xp_{\ol{z}}(x))^2 &<  \e(x^2p_{\ol{z}}(x))
\Bigl(\frac{1+\ol{z}}{2}\Bigr) \\
\e(xp_{\ol{z}}(x))^2 = \e(x(1-p_{\ol{z}}(x)))^2 &< \e(x^2(1-p_{\ol{z}}(x)))\Bigl(\frac{1-\ol{z}}{2}\Bigr)
\end{align*}
where we have used the fact $\e(xp_{\ol{z}}(x)) = -\e(x(1-p_{\ol{z}}(x)))$ since $\e(x)=0$.
Note both inequalities are strict, since $x$ cannot equal a scalar multiple of $p_{\ol{z}}(x)$ w.p.1. 
If it did, then $\e(kp_{\ol{z}}(x)-x)=k(1+\ol{z})/2 = 0$ for some $k$, implying $k=0$ and hence $x=0$ w.p.1, contradicting $\var(x)>0$.
As $\e(x^2) = \e(x^2p_{\ol{z}}(x)) + \e(x^2(1-p_{\ol{z}}(x)))$ we know that either $\e(x^2p_{\ol{z}}(x)) \leq \e(x^2) \cdot (1+\ol{z})/2$ or $\e(x^2(1-p_{\ol{z}}(x))) \leq \e(x^2) \cdot (1-\ol{z})/2$.
\quad$\Box$

\section{Proof of uniqueness in Proposition~\ref{prop:pmax_pmin}}
\label{app:uniqueness}
We show uniqueness for $p_{\min}$.
The same argument shows uniqueness for $1-p_{\max}$ and hence uniqueness for $p_{\max}$.  
Suppose that $p(x)=\delta_1\indic(x=b_1)+\indic(b_1<x<b_2)+\delta_2\indic(x=b_2)$ and $p'(x)=\delta'_1\indic(x=b'_1)+\indic(b'_1<x<b_2')+\delta'_2\indic(x=b'_2)$ are both solutions for $p_{\min}$. 
By symmetry we can assume that either $b_1 < b_1'$, or both $b_1=b_1'$ and $\delta_1 \le \delta_1'$.
Since $p$ and $p'$ are feasible for~\eqref{eq:max_min_inequality}, we must have $\e(p(x)-p'(x))=0$ and $\e(x(p(x)-p'(x)))=0$, in view of~\eqref{eq:tower}.
We show that $p(x)=p'(x)$ w.p.1. under $x \sim F$.
Note that we can assume without loss of generality that $\Pr( x\in[b_1,b_1+\epsilon))>0$ for any $\epsilon>0$,
because otherwise, we could increase $b_1$ to $b_1 + \sup\{\epsilon > 0 \mid \Pr(x \in [b_1,b_1+\epsilon))=0\}$
without changing $p$ on a set of positive probability. 
We can similarly assume that $\Pr(x\in(b_2-\epsilon,b_2])>0$ for any $\epsilon>0$.
Finally, we impose these two canonicalizing conditions on $b_1'$ and $b_2'$ as well. 

Assume first that $b_2>b_1$.
Then we cannot have $b_1'>b_1$ because we would then need either $b_2'>b_2$ or $b_2'=b_2$ with $\delta_2'>\delta_2$ and $\Pr(x=b_2)>0$ to enforce $\e(p(x)-p'(x))=0$ and this would cause $\e(x(p(x)-p'(x)))<0$. 
We similarly cannot have $b_1'=b_1$ with both $\delta_1'>\delta_1$ and $\Pr(x=b_1)>0$.
Therefore after canonicalizing, we know that both $p$ and $p'$ are equivalent to designs of the form given with $b_1=b_1'$ and $\delta_1=\delta_1'$ 
along with the analogous conditions $b_2=b_2'$ and $\delta_2=\delta_2'$. 
Then our canonicalized $p$ and $p'$ satisfy $p(x)=p'(x)$ for all $x$ and so in particular $\Pr(p(x)=p'(x))=1$. 

It remains to handle the case where $b_1=b_2$. We then have 
$\Pr( x=b_1)>0$ since $\ol{z}>-1$. 
If $b_1'>b_1$ then the support of $p'$ is completely to the right of that of $p$ which violates $\e(x(p(x)-p'(x)))=0$. We can similarly rule out $b_2'<b_2$. As a result $p'$ must have $b_1'\le b_1=b_2\le b_2'$. Then we must have $\delta_1'\Pr(x=b_1')+\Pr( b_1'<x<b_1)=0$ or else $\e(p'(x)-p(x))>0$. For the same reason, we must have $\Pr(b_2<x<b_2')+\delta_2'\Pr(x=b_2')=0$. It then follows that both $p$ and $p'$ have support $\{b_1\}$ and then $\e(p(x))=\e(p'(x))=(1+\ol{z})/2$ forces $(1+\ol{z})/(2\Pr(x=b_1)) = \delta_1=p(b_1)=p'(b_1)=\delta_1'$,
so $\Pr( p(x)=p'(x))=1$.

\section{Proof of uniqueness in Proposition~\ref{prop:pmax_pmin_prime}}
\label{app:uniqueness_mon}
We focus on $p_{\max}^{\dag}$ and
consider two monotone designs $p$ and $p'$ satisfying the feasibility constraints $\e_p(z)=\e_{p'}(z)=\ol{z}$ and $\e_p(xz)=\e_{p'}(xz)=\ol{xz}$
along with the characterization of $p_{\max}^{\dag}$ in~\eqref{eq:gen_p_max_min_mon}.
Then $p(x)=\ell\indic(x<t)+\delta\indic(x=t)+\indic(x>t)$ and $p'(x)=\ell'\indic(x<t')+\delta'\indic(x=t')+\indic(x>t')$
for some $\ell, \ell' \in (0,1)$ with $\ell \le \delta \le 1$ and $\ell'\le\delta' \le 1$.
Note the cases $\ell \in \{0,1\}$ (and the same for $\ell'$) are excluded by the assumptions that $\ol{xz} < \ol{xz}_{\max}(\ol{z})$ and $\ol{z}<1$.
Also, $\ol{z}<1$ also guarantees $\min(\Pr(x\le t), \Pr(x \le t'))>0$.
Finally, we note that we only have to show $p(x)=p'(x)$ for almost all $x \neq t$,
since then $\e(p(x)-p'(x))=0$ ensures either $p(t)=p'(t)$ or $\Pr(x=t)=0$;
in either case this gives $p=p'$ w.p.1. 
By symmetry we can assume that $t \le t'$ with $\delta \equiv p(t) \ge p(t') =: \delta'$ if $t=t'$. 
Then $p(x)=p'(x)$ for all $x > t'$.

Now we compute
\begin{align*}
\e(x(p(x)-p'(x))) & = \e((x-t)(p(x)-p'(x))) \text{ since $\e(p(x))=\e(p'(x))$} \\
& = \e((t-x)(p'(x)-p(x))\indic(x < t)) + \e((x-t)(p(x)-p'(x))\indic(t < x \le t')) \\
& = (\ell'-\ell)\e((t-x)\indic(x<t)) + \e((x-t)(1-p'(x))\indic(t < x \le t')).
\end{align*}

If $t=t'$, then the right-hand side reduces to just $(\ell'-\ell)\e((t-x)\indic(x<t))$.
This is nonzero unless $\Pr(x<t)=0$ or $\ell=\ell'$.
In both cases $p(x)=p'(x)$ for almost all $x \neq t$.

If $t < t'$, then we can assume $\Pr(t<x\le t') > 0$ (otherwise the problem reduces to the case $t=t'$).
First suppose $\ell \ge \ell'$. Then $p(x) \ge p'(x)$ for all $x$ and so the treatment fraction constraint would require the identity
\[
\indic(t < x \le t') = p(x)\indic(t<x \le t') \ge p'(x)\indic(t<x \le t') = \delta'\indic(x=t') + \ell'\indic(t<x<t')
\]
to hold with equality w.p.1.
But since $\ell'<1$, equality w.p.1. can only occur if $\Pr(t<x<t')=0$ and $\delta'=1$.
In that case we immediately see $p(x)=p'(x)$ for almost all $x >t$,
but $0=\e(x(p(x)-p'(x))) = (\ell'-\ell)\e((t-x)\indic(x<t))$ so $p(x)=p'(x)$ for almost all $x<t$ as well.
Conversely, if we suppose $\ell < \ell'$, then $\e(x(p(x)-p'(x)))=0$ requires $\Pr(x<t)=0$ and $p'(x)=1$ for almost all $x \in (t,t']$,
so once again $p(x)=p'(x)$ for almost all $x \neq t$.


\section{Proof of Theorem~\ref{thm:opt_eff_monotonicity}}
\label{app:opt_eff_monotonicity}
For any feasible design $p$,
we have
\begin{align}
\label{eq:detM}
\begin{split}
\det(M)
&= -\frac{(1-\ol{z}^2) \cdot \bigl(\e_p(x^2z)\bigr)^2}{\e(x^2)}  - \frac{2\ol{z}(\ol{xz})^2 \cdot \e_p(x^2z)}{\e(x^2)}  
+ \e(x^2)\bigl(1-\ol{z}^2\bigr)+(\ol{xz})^2\biggl(\frac{(\ol{xz})^2}{\e(x^2)}-2\biggr).
\end{split}
\end{align}
where $M=M(p)$ as in~\eqref{eq:M}.
Thus $\det(M(p))$ is a concave quadratic function of $\e_p(x^2z)$
globally maximized at 
\begin{equation}
\label{eq:a_star}
a^*(\ol{z},\ol{xz}) \equiv -\frac{\ol{z} \cdot (\ol{xz})^2}{1-\ol{z}^2}
\end{equation}
It follows that $\ol{x^2z}^*(\ol{z},\ol{xz};\eff)$
is the point in $I_{\cf}(\ol{z},\ol{xz}) = [I_{\cf;\min}(\ol{z},\ol{xz}), I_{\cf;\max}(\ol{z},\ol{xz})]$
closest in absolute value to $a^*(\ol{z},\ol{xz})$,
i.e.
\begin{equation}
\label{eq:quad_min}
\ol{x^2z}^*(\ol{z},\ol{xz};\eff) = 
\begin{cases}
I_{\cf;\min}(\ol{z},\ol{xz}) & a^*(\ol{z},\ol{xz}) \le I_{\cf;\min}(\ol{z},\ol{xz}) \\
a^*(\ol{z},\ol{xz}) & I_{\cf;\min}(\ol{z},\ol{xz}) < a^*(\ol{z},\ol{xz}) < I_{\cf;\max}(\ol{z},\ol{xz}) \\
I_{\cf;\max}(\ol{z},\ol{xz}) & a^*(\ol{z},\ol{xz}) \ge I_{\cf;\max}(\ol{z},\ol{xz})
\end{cases}
\end{equation}
The above holds for any choice of $\cf$;
for the remainder of this proof we take $\cf$
to be the set of all measurable design functions.

We first show the case where $\ol{z}=0$.
Note that $\e_p(x^2z)=0=a^*(0,\ol{xz})$
for any symmetric design $p$,
by symmetry of the running variable distribution.
By continuity,
for any $\ol{xz} \in [0,\ol{xz}_{\max}(\ol{z})]$
there exists $\Delta \in [0,\infty]$
such that the three level tie-breaker $p_{3;\ol{z},\Delta}$ 
(which is symmetric and always satisfies $\e_{p_{3;\ol{z},\Delta}}(z)=0$)
satisfies $\e_{p_{3;\ol{z},\Delta}}(xz)=\ol{xz}$ too.
This shows that for all $\ol{xz} \in [0,\ol{xz}_{\max}(\ol{z})]$,
$0 \in I_{\cf}(\ol{z},\ol{xz})$
and hence $\ol{x^2z}^*(0,\ol{xz};\eff)=0$,
meaning any feasible design $p$ with $\e_p(x^2z)=0$ is optimal.
Then by~\eqref{eq:detM}
\[
\det(M(p_{\opt;0,\ol{xz}})) = \e(x^2) + (\ol{xz})^2\biggl(\frac{(\ol{xz})^2}{\e(x^2)}-2\biggr)
\]
which is decreasing in $\ol{xz}$ on $[0,\ol{xz}_{\max}(\ol{z})]$,
showing the theorem for $\ol{z}=0$.

For the cases $\ol{z} < 0$ and $\ol{z}>0$, we begin with the two following claims.
\begin{claim} \label{claim:p_opt_min}
For any $(\ol{z},\ol{xz}) \in \cj$ with $\ol{z} < 0$, we have $I_{\cf;\min}(\ol{z},\ol{xz}) \leq 0 < a^*(\ol{z},\ol{xz})$.
\end{claim}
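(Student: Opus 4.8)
The plan is to verify the two inequalities of the claim separately. The second one, $0 < a^*(\ol z,\ol{xz})$, is immediate from the closed form \eqref{eq:a_star}: since $\ol z \in (-1,0)$ we have $-\ol z > 0$ and $1-\ol z^2 > 0$, so $a^*(\ol z,\ol{xz}) = (-\ol z)(\ol{xz})^2/(1-\ol z^2) \ge 0$, strictly positive once $\ol{xz} > 0$, which is the regime relevant to the monotonicity statement of Theorem~\ref{thm:opt_eff_monotonicity} (at $\ol{xz}=0$ one has $a^*=0$ and there is nothing to the left). So the substance is the first inequality $I_{\cf;\min}(\ol z,\ol{xz}) \le 0$, and for this it suffices to exhibit a single feasible design $p \in \cf$ with $\e_p(x^2 z) \le 0$, since $I_{\cf;\min}$ is by definition the infimum of $\e_p(x^2z)$ over feasible $p$.

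I would take $p$ to be a convex combination of two simple feasible designs. The first is the fully randomized design $p_c \equiv (1+\ol z)/2$, which by \eqref{eq:tower} satisfies $\e_{p_c}(z) = \ol z$, $\e_{p_c}(xz) = 0$ and $\e_{p_c}(x^2 z) = \ol z\,\e(x^2)$. The second is the generalized RDD $p_{\ol z}$ of Lemma~\ref{lemma:stg}, with $\e_{p_{\ol z}}(z) = \ol z$ and $\e_{p_{\ol z}}(xz) = \ol{xz}_{\max}(\ol z)$; under the hypotheses of Theorem~\ref{thm:opt_eff_monotonicity} the distribution $F$ is continuous, so by Remark~\ref{remark:generalized_RDD} this is the ordinary RDD $p_{0,1,t}$ with $t = F^{-1}((1-\ol z)/2)$. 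Taking $\lambda = 1 - \ol{xz}/\ol{xz}_{\max}(\ol z) \in [0,1]$ (well-defined since $(\ol z,\ol{xz}) \in \cj$ and $\ol{xz}_{\max}(\ol z) > 0$ when $\ol z < 0$), the design $p_\lambda = \lambda p_c + (1-\lambda)p_{\ol z}$ is in $\cf$ and, by linearity of each $\e_p(x^a z)$ in $p$, has $\e_{p_\lambda}(z) = \ol z$ and $\e_{p_\lambda}(xz) = \ol{xz}$; hence it is feasible, with
\[
\e_{p_\lambda}(x^2 z) = \lambda\,\ol z\,\e(x^2) + (1-\lambda)\,\e_{p_{\ol z}}(x^2 z).
\]

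The one step I would be most careful about — and the only place the symmetry hypothesis enters — is showing $\e_{p_{\ol z}}(x^2 z) < 0$; granting this, both summands above are $\le 0$ and their sum is strictly negative for every $\lambda \in [0,1]$, so $I_{\cf;\min}(\ol z,\ol{xz}) \le \e_{p_\lambda}(x^2 z) < 0 \le a^*(\ol z,\ol{xz})$. For the key step: the symmetry $F(x) = 1 - F(-x)$ makes $0$ the median of $F$, and since $\ol z < 0$ and $F$ has positive density near $0$ we get $t = F^{-1}((1-\ol z)/2) > 0$; then, by \eqref{eq:tower} and symmetry,
\[
\e_{p_{\ol z}}(x^2 z) = 2\,\e\!\bigl(x^2\indic(x > t)\bigr) - \e(x^2) = \e\!\bigl(x^2\indic(x>t)\bigr) - \e\!\bigl(x^2\indic(x \le t)\bigr) < 0,
\]
because $t>0$ and the positive density give $\e(x^2\indic(x>t)) < \e(x^2\indic(x>0)) = \tfrac12\e(x^2)$. (Without symmetry this can fail: a sufficiently right-skewed $F$ has $t < 0$ and $\e_{p_{\ol z}}(x^2z)$ may be positive, which is precisely why the theorem's symmetry assumption is invoked.) The remaining friction is routine: the endpoint cases $\lambda \in \{0,1\}$ where one summand vanishes, and a precise justification of $\e(x^2\indic(0<x\le t)) > 0$ from the positive-density assumption; everything else follows directly from Lemma~\ref{lemma:stg} and the explicit maximizer \eqref{eq:a_star}.
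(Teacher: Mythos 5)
Your proof is correct, and it takes a genuinely different route from the paper's. The paper proves $I_{\cf;\min}(\ol{z},\ol{xz})\le 0$ by invoking the structural characterization $p_{\min}=p_{[b_1,b_2]}$ from Proposition~\ref{prop:pmax_pmin} and then running a four-way case analysis on the signs of $b_1$ and $b_2$ to show $\e(x^2p_{\min}(x))\le \e(x^2)/2$, with the mixed-sign cases handled by a rearrangement argument that uses symmetry of $F$. You instead observe that $I_{\cf;\min}$ is bounded above by $\e_p(x^2z)$ for \emph{any} feasible $p$, and exhibit one explicitly: the convex combination $\lambda p_c+(1-\lambda)p_{\ol z}$ of the RCT and the generalized RDD, with $\lambda$ chosen so the $(\ol z,\ol{xz})$ constraints hold by linearity of the moments (exactly the mechanism of Lemma~\ref{lemma:convex_combinations}). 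Both summands of $\e_{p_\lambda}(x^2z)$ are then shown nonpositive --- the RCT term trivially since $\ol z<0$, the RDD term via symmetry and $t=F^{-1}((1-\ol z)/2)>0$ --- which is where the symmetry hypothesis enters, just as it does in the paper's casework. Your approach buys a shorter argument with no casework and no appeal to the form of $p_{\min}$, and it actually delivers the strict inequality $I_{\cf;\min}<0$; the paper's approach bounds $\e(x^2p_{\min}(x))$ itself, which is more information but more work. Your side remark that $a^*(\ol z,0)=0$, so the strict inequality $0<a^*$ in the claim fails at $\ol{xz}=0$, is accurate --- this is a shared edge case that the paper's proof also does not address, and it is harmless for the way the claim is used in \eqref{eq:quad_min}.
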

\begin{claim} \label{claim:p_opt_max}
For any $(\ol{z},\ol{xz}) \in \cj$ with $\ol{z} > 0$, we have $I_{\cf;\max}(\ol{z},\ol{xz}) \geq 0 > a^*(\ol{z},\ol{xz})$.
\end{claim}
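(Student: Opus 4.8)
The plan is to prove Claim~\ref{claim:p_opt_min} directly and then obtain Claim~\ref{claim:p_opt_max} from it by a reflection; throughout, $\cf$ is the class of all measurable design functions and $F$ is symmetric with a positive density on $I$, as assumed in Theorem~\ref{thm:opt_eff_monotonicity}. Fix $(\ol{z},\ol{xz})\in\cj$ with $\ol{z}<0$. The inequality $0<a^*(\ol{z},\ol{xz})$ is immediate from~\eqref{eq:a_star}: since $-\ol{z}>0$, $1-\ol{z}^2>0$, and $\ol{xz}>0$ we get $a^*=-\ol{z}(\ol{xz})^2/(1-\ol{z}^2)>0$ (when $\ol{xz}=0$ one has $a^*=0$, which is the boundary endpoint). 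The content is the bound $I_{\cf;\min}(\ol{z},\ol{xz})\le 0$, and since $I_{\cf;\min}(\ol{z},\ol{xz})$ is the infimum of $\e_p(x^2z)$ over feasible designs, it suffices to produce one feasible design $p$ with $\e_p(x^2z)\le 0$.

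I would construct such a $p$ by interpolating between two designs that each treat a fraction $\theta:=(1+\ol{z})/2<1/2$. The first is the RCT $p_{\mathrm{RCT}}\equiv\theta$: by~\eqref{eq:tower} it has $\e_{p_{\mathrm{RCT}}}(z)=\ol{z}$, $\e_{p_{\mathrm{RCT}}}(xz)=(2\theta-1)\e(x)=0$, and $\e_{p_{\mathrm{RCT}}}(x^2z)=(2\theta-1)\e(x^2)=\ol{z}\,\e(x^2)<0$. The second is the generalized RDD $p_{\ol{z}}=p_{0,1,t}$ of Lemma~\ref{lemma:stg} with $t=F^{-1}((1-\ol{z})/2)$, which has $\e_{p_{\ol{z}}}(z)=\ol{z}$ and $\e_{p_{\ol{z}}}(xz)=\ol{xz}_{\max}(\ol{z})$. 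Taking $\lambda:=\ol{xz}/\ol{xz}_{\max}(\ol{z})\in[0,1]$, the convex combination $p_\lambda:=\lambda p_{\ol{z}}+(1-\lambda)p_{\mathrm{RCT}}\in\cf$ satisfies $\e_{p_\lambda}(z)=\ol{z}$ and $\e_{p_\lambda}(xz)=\lambda\,\ol{xz}_{\max}(\ol{z})=\ol{xz}$, hence is feasible, and by linearity of $p\mapsto\e_p(x^2z)$,
\[
\e_{p_\lambda}(x^2z)=\lambda\,\e_{p_{\ol{z}}}(x^2z)+(1-\lambda)\,\ol{z}\,\e(x^2).
\]
The second summand is nonpositive, so everything reduces to showing $\e_{p_{\ol{z}}}(x^2z)\le 0$.

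This last inequality is the crux, and the one place where symmetry of $F$ is truly needed: without it, a sufficiently right-skewed $F$ makes the top-$\theta$ region carry more than half the second-moment mass and the claim fails. Because $\theta<1/2$ and the density is positive, the threshold satisfies $t>0$; symmetry then gives $\e(x^2\ind(x\le -t))=\e(x^2\ind(x\ge t))$, so splitting $\e(x^2)=2\,\e(x^2\ind(x\ge t))+\e(x^2\ind(-t<x<t))$ and using~\eqref{eq:tower} with $a=2$,
\[
\e_{p_{\ol{z}}}(x^2z)=2\,\e(x^2\ind(x\ge t))-\e(x^2)=-\,\e\bigl(x^2\ind(-t<x<t)\bigr)\le 0.
\]
Therefore $I_{\cf;\min}(\ol{z},\ol{xz})\le\e_{p_\lambda}(x^2z)\le 0$, which completes Claim~\ref{claim:p_opt_min}.

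For Claim~\ref{claim:p_opt_max} I would invoke the reflection $p\mapsto\widetilde p$ with $\widetilde p(x):=1-p(-x)$. Using symmetry of $F$ and $\e(x)=0$, a short computation with~\eqref{eq:tower} shows this is a bijection from the feasible designs for $(\ol{z},\ol{xz})$ onto those for $(-\ol{z},\ol{xz})$ under which $\e_p(x^2z)\mapsto-\e_p(x^2z)$, so $I_{\cf;\max}(-\ol{z},\ol{xz})=-I_{\cf;\min}(\ol{z},\ol{xz})$; and $a^*(-\ol{z},\ol{xz})=-a^*(\ol{z},\ol{xz})$ from~\eqref{eq:a_star}. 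Since $\ol{z}>0$ makes $-\ol{z}<0$, applying Claim~\ref{claim:p_opt_min} to $(-\ol{z},\ol{xz})$ and negating gives exactly $I_{\cf;\max}(\ol{z},\ol{xz})\ge 0>a^*(\ol{z},\ol{xz})$. The only real obstacle in this whole plan is the single inequality $\e_{p_{\ol{z}}}(x^2z)\le 0$; the rest is bookkeeping with~\eqref{eq:tower} and linearity.
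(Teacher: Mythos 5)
Your proof is correct, and it takes a genuinely different route from the paper's. The paper proves both claims by invoking the explicit characterization of the extremal designs from Proposition~\ref{prop:pmax_pmin} --- writing $p_{\min}=p_{[b_1,b_2]}$ (resp.\ $p_{\max}=p_{[a_1,a_2]^c}$) and running casework on the signs of $b_1,b_2$ to bound $\e(x^2p_{\min}(x))\le\e(x^2)/2$, the case $b_1\le 0\le -b_1\le b_2$ requiring a small chain of inequalities that again leans on symmetry of $F$. You instead observe that to bound $I_{\cf;\min}$ from above (equivalently $I_{\cf;\max}$ from below) one only needs a single feasible witness, and you build it as a convex combination of the RCT and the generalized RDD of Lemma~\ref{lemma:stg}; linearity of $p\mapsto\e_p(x^2z)$ then reduces everything to the one inequality $\e_{p_{\ol{z}}}(x^2z)\le 0$, which is immediate from symmetry since $t>0$ when $\ol{z}<0$. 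Finally you transfer Claim~\ref{claim:p_opt_min} to Claim~\ref{claim:p_opt_max} via the reflection $p\mapsto 1-p(-\cdot)$, which correctly negates $\ol{z}$ and $\e_p(x^2z)$ while preserving $\ol{xz}$, rather than redoing the "completely analogous" argument. What your approach buys is economy: it bypasses Proposition~\ref{prop:pmax_pmin} and all the sign casework, and it isolates exactly where symmetry of $F$ is used (your remark that a sufficiently skewed $F$ breaks the claim is consistent with the weibull example in Section~\ref{sec:examples}). What the paper's approach buys is information about the actual minimizer $p_{\min}$ itself, which is reused elsewhere in the appendix (e.g., the derivative computations for $g(\ol{xz})$), whereas your witness $p_\lambda$ is generally not extremal. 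One shared imprecision, which you flag honestly and the paper does not: at $\ol{xz}=0$ one has $a^*=0$, so the strict inequality in both claims holds only for $\ol{xz}>0$; this is harmless since the theorem only uses the claims on $\ol{xz}>0$.
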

\begin{proof}[Proof of Claims~\ref{claim:p_opt_min} and~\ref{claim:p_opt_max}]
We write $I_{\cf;\min}(\ol{z},\ol{xz}) = 2\e(x^2p_{\min;\ol{z},\ol{xz}}(x))-\e(x^2)$
and similarly rewrite $I_{\cf;\max}(\ol{z},\ol{xz})$.
For claim~\ref{claim:p_opt_min}, we proceed by writing $p_{\min}=p_{[b_1,b_2]}$ by Proposition~\ref{prop:pmax_pmin}
(suppressing the dependence of $b_1$ and $b_2$ on $(\ol{z},\ol{xz})$ in our notation)
and performing casework on the signs of $b_1$ and $b_2$ to show that $\e(x^2p_{\min}(x)) \leq \e(x^2)/2$ in each case. 
In the case $b_1 \leq b_2 \leq 0$ we have $\e(x^2p_{\min}(x)) \leq \e(x^2\bm{1}(x \leq 0)) = \e(x^2)/2$ by symmetry; 
similarly if $b_2 \ge b_1 \ge 0$ then $\e(x^2p_{\min}(x)) \leq \e(x^2\bm{1}(x \geq 0)) = \e(x^2)/2$. 
Next, if $b_1 \leq 0 \leq -b_1 \leq b_2$ then $F(b_2)+F(0)-F(b_1) = \Pr(b_1 < x \le b_2)+1/2 = \e(p_{\min}(x))+1/2 < 1$ since $\ol{z}<0$ implies $\e(p(x))<1/2$ by~\eqref{eq:tower}. 
Therefore
\begin{align*}
\e(x^2p_{\min}(x)) & = \e(x^2\indic(b_1 \leq x \leq 0)) + \e(x^2\indic(0 \leq x \leq b_2))  \\
& \leq b_2^2 \Pr(b_1 \leq x \leq 0) + \e(x^2\indic(0 \leq x \leq b_2)) \\
& = b_2^2 \Pr(b_2 \leq x \leq F^{-1}(F(b_2)+F(0)-F(b_1)))+ \e(x^2\indic(0 \leq x \leq b_2)) \\
& \leq \e(x^2\bm{1}(0 \leq x \leq F^{-1}(F(b_2)+F(0)-F(b_1)))) \leq \frac{\e(x^2)}{2}
\end{align*}
where the final inequality uses symmetry of $F$ again.
The final case $b_1 \leq 0 \leq b_2 \leq -b_1$ follows by a symmetric argument.
The proof of Claim~\ref{claim:p_opt_max} is completely analogous, 
with $p_{\max}=p_{[a_1,a_2]^c}$ by Proposition~\ref{prop:pmax_pmin}.
\end{proof}

We now proceed to prove the theorem.
Given Claim~\ref{claim:p_opt_min}, we have $\ol{x^2z}^*(\ol{z},\ol{xz};\eff) = \min(I_{\cf;\max}(\ol{z},\ol{xz}), a^*(\ol{z},\ol{xz}))$ by~\eqref{eq:quad_min}, 
and hence suppressing some $\ol{z}$ dependences
\[
h(\ol{xz}) \equiv \det(M(p_{\opt;\ol{z},\ol{xz}})) = 
\begin{cases}
h^*(\ol{xz}), & g(\ol{xz}) \geq a^*(\ol{z},\ol{xz})  \\
\det(M(p_{\max;\ol{z},\ol{xz}})), & g(\ol{xz}) \leq a^*(\ol{z},\ol{xz})
\end{cases}
\]
where $g(\ol{xz}) \equiv I_{\cf;\max}(\ol{z},\ol{xz})$ 
and $h^*(\ol{xz})$ is defined by substituting $\e_p(x^2z)=a^*(\ol{z},\ol{xz})$ into~\eqref{eq:detM}. 
We must show that $h(\ol{xz})$ is decreasing on $\ol{xz}>0$.

First, we compute
$h^*(\ol{xz}) = (\e(x^2))^2 M_{11}^2 /(\e(x^2)(1-\ol{z}^2))$
and note it is decreasing in $\ol{xz}$ since $M_{11}$ is positive (Corollary~\ref{corollary:M11})
and decreasing in $\ol{xz}$ on $[0,\ol{xz}_{\max}(\ol{z})]$.
Next, we show $\det(M(p_{\max;\ol{z},\ol{xz}}))$ is decreasing in $\ol{xz}$.
Note $(a_1,a_2)$ are the unique solutions to the system
\begin{align*}
F(a_1) + 1-F(a_2) & = (1+\ol{z})/2  \\[1ex]
\e(x(\ind(x < a_1) +\ind(x > a_2))) & = \ol{xz}/2
\end{align*}
By the implicit function theorem (e.g., \citet{Oliveira2018} 
since we do not require continuity of $f$)),
it follows that $a_1=a_1(\ol{xz})$ and $a_2=a_2(\ol{xz})$ are differentiable and satisfy
\begin{align}
f(a_1) a_1'(\ol{xz}) - f(a_2)a_2'(\ol{xz}) & = 0  \label{eq:partial_a_xz_1},\quad\text{and} \\
a_1 f(a_1) a_1'(\ol{xz}) - a_2f(a_2)a_2'(\ol{xz}) & = 1/2 \label{eq:partial_a_xz_2}.
\end{align} 
Equations~\eqref{eq:partial_a_xz_1} and~\eqref{eq:partial_a_xz_2} imply that
\[
g'(\ol{xz}) = \frac{\partial}{\partial \ol{xz}} \e_{p_{[a_1,a_2]^c}}(x^2z) = 2a_1^2 f(a_1) a_1'(\ol{xz}) - 2a_2^2f(a_2)a_2'(\ol{xz}) = a_1+a_2 < 0.
\]
The inequality follows by the assumption $\ol{z}<0$ which ensures $a_1$ and $a_2$ must have different signs,
and then noting that $\e_{p_{\max}}(xz)>0$ requires $\e(xp_{\max}(x)\indic(x>0)) > -\e(xp_{\max}(x)\indic(x<0)) = \e(xp_{\max}(-x)\indic(x>0))$,
the equality following by symmetry of $F$.
Thus, for all $\ol{xz}>0$ such that $g(\ol{xz}) \leq a^*(\ol{z},\ol{xz}) = -\ol{z} (\ol{xz})^2/(1-\ol{z}^2)$ we have
\begin{align*}
\frac{\e(x^2) \partial \det(M(p_{\max}))}{\partial \ol{xz}} 
& = g(\ol{xz})\left(-2(1-\ol{z}^2) g'(\ol{xz}) - 4(\ol{xz} \cdot \ol{z})\right) \\
&\phantom{=}\, - 2(\ol{xz})^2 \cdot \ol{z} g'(\ol{xz}) + 4(\ol{xz})((\ol{xz})^2-\e(x^2)) \\
& \leq \frac{4(\ol{xz})^3(\ol{z})^2}{1-\ol{z}^2} + 4(\ol{xz})((\ol{xz})^2-\e(x^2))  = -\frac{4M_{11} \cdot (\ol{xz})\e(x^2)}{1-\ol{z}^2}
\end{align*}
The RHS is negative (Corollary~\ref{corollary:M11}),
so $\det(M(p_{\max;\ol{z},\ol{xz}}))$ is in fact decreasing in $\ol{xz} > 0$. 

Finally, we fix $0 \leq x_1 < x_2 \leq \ol{xz}_{\max}(\ol{z})$ and show $h(x_1)>h(x_2)$.
Note $\ovl{g}(\ol{xz}) \equiv g(\ol{xz})-a^*(\ol{z},\ol{xz})$ is continuous in $\ol{xz}$, 
and $h^*(\ol{xz}) \geq \det(M(p_{\max;\ol{z},\ol{xz}}))$. We now carry out casework on the signs of $\bar{g}(x_1)$ and $\bar{g}(x_2)$. \\
\underline{$\ovl{g}(x_1) \geq 0$:} In this case
\begin{equation}
\label{eq:h_case1}
h(x_1) = h^*(x_1) > h^*(x_2) \geq h(x_2).
\end{equation}
\underline{$\ovl{g}(x_1) < 0$ and $\ovl{g}(x_2) \geq 0$:} Define $S = \{x \in [x_1,x_2] \mid \ovl{g}(x) \geq 0\}$, which contains $x_2$.
Letting $x_3 = \inf S > x_1$, we have $\ovl{g}(x_3)=0$ and $\ovl{g}(x) \leq 0 $ for $x \in [x_1,x_3]$, so 
\begin{equation}
\label{eq:h_case2}
h(x_1) = \det(M(p_{\max;\ol{z},x_1})) > \det(M(p_{\max;\ol{z},x_3})) = h^*(x_3) \geq h^*(x_2)=h(x_2)
\end{equation}
\underline{$\ovl{g}(x_1) < 0$ and $\ovl{g}(x_2) < 0$:} In this case either $\ovl{g}(x) \leq 0$ on $[x_1,x_2]$ (so $h(x_1) = \det(M(p_{\max;\ol{z},x_1})) > \det(M(p_{\max;\ol{z},x_2})) = h(x_2)$),
or $S$ as defined in the previous case is non-empty with $x_3 = \inf(S)$ and $x_4 = \sup(S)$ satisfying $x_1 < x_3 \leq x_4 < x_2$ and $\ovl{g}(x_3)=\ovl{g}(x_4)=0$. Then 
\[
h(x_1) \stackrel{(\ref{eq:h_case2})}{>} h(x_3) \stackrel{(\ref{eq:h_case1})}{\geq} h(x_4) \stackrel{(\ref{eq:h_case2})}{>} h(x_2)
\]
which shows the theorem when $\ol{z}<0$. The proof of the case $\ol{z} > 0$ is completely symmetric, and relies on Claim 2.


\section{Proof of Theorem~\ref{thm:inadmissibility}}
\label{app:inadmissibility}
First we fix $\ol{z}<0$. 
It suffices to show that assuming $\e(x^2) < F^{-1}(1)^2$,
there exists $\delta > 0$ such that
$(\partial/\partial \ol{xz}) \det\bigl(M( p_{\opt;\ol{z},\ol{xz}}^{\dag})\bigr) > 0$
whenever $\ol{xz} \in (0,\delta)$, 
and that $\det(M(p_{\opt;\ol{z},\ol{xz}}^{\dag}))$ is continuous in $\ol{xz}$ at $\ol{xz}=0$. 

From the assumed continuity of $F$ and Proposition~\ref{prop:pmax_pmin_prime},
we have $p_{\max}^\dag(x) = p_{\ell,1,t}(x)$, with $\ol{z}>-1$ ensuring $F(t) > 0$.
Again, we suppress the dependence of $\ell$ and $t$ on $(\ol{z},\ol{xz})$ in our notation for brevity.
By the treatment fraction constraint $\e_{p_{\ell,1,t}}(z)=\ol{z}$, we must have $\ell=1-(1-\ol{z})/(2F(t))$.
From the short-term gain constraint $\e_{p_{\ell,1,t}}(xz)=\ol{xz}$ we see
\begin{align*}
\frac{\ol{xz}}{2} & =(\ell-1)\e(x\indic(x<t))= -\frac{1-\ol{z}}{2F(t)}\e(x\indic(x<t)).
\end{align*}
We know by Proposition~\ref{prop:pmax_pmin_prime} and continuity of $F$ 
that the two equations above have a unique solution $(\ell,t)=(\ell(\ol{xz}),t(\ol{xz}))$ 
for $\ol{xz} \in (0,\ol{xz}_{\max}(\ol{z}))$.
Thus, we can differentiate both of the equations above with respect to $\ol{xz}$ to see that the derivatives of $\ell$ and $t$ are given by
\[
t' = t'(\ol{xz}) = \frac{F(t)^2}{(1-\ol{z})f(t)\e((x-t)\ind(x<t))}\qquad\text{and} \qquad 
\ell'  = \ell'(\ol{xz}) = \frac{1}{2\e((x-t)\ind(x < t))}.
\]
Then $g(\ol{xz}) \equiv I_{\cf;\max}(\ol{z},\ol{xz}) = 2(\ell-1)\e(x^2\ind(x < t)) + \e(x^2)$
is differentiable as well with
\[
g'(\ol{xz}) = 
2\ell'\e(x^2\indic(x<t))
+2(\ell-1)t^2f(t)t' =
\frac{2(1-\ell)t^2F(t)^2-(1-\ol{z})\e(x^2\ind(x < t))}{(1-\ol{z})\e((t-x)\ind(x<t))}
\]
Next, note that
$g(0) = \ol{z}\e(x^2) < 0 = a^*(\ol{z},0)$,
in the notation of~\eqref{eq:a_star}.
By differentiability (and thus continuity) of $a^*(\ol{z},\cdot)$ and $g$ (the latter due to differentiability of $\ell$ and $t$), 
we conclude that there exists $\epsilon > 0$ such that $a^*(\ol{z},\ol{xz}) - g(\ol{xz}) \geq 0$ for all $\ol{xz} \in [0,\epsilon]$. 
By~\eqref{eq:quad_min},
this means $p_{\opt;\ol{z},\ol{xz}}^{\dag} = p_{\max,\ol{z},\ol{xz}}^{\dag}$ for all $\ol{xz} \in [0,\epsilon]$.
Thus, it suffices to show 
$\frac{\partial}{\partial \ol{xz}} \det(M(p_{\max;\ol{z},\ol{xz}}^{\dag})) > 0$ 
for all $\ol{xz} \in (0,\delta)$, 
for some $\delta \le \epsilon$. 
Continuity of $\det(M(p_{\max;\ol{z},\ol{xz}}^{\dag}))$ at $\ol{xz}=0$ 
follows immediately from continuity of $g$ and~\eqref{eq:detM}. 

As $\ol{xz}\downarrow 0$ we have $t(\ol{xz})\uparrow F^{-1}(1)$ and $\ell(\ol{xz})\uparrow (1+\ol{z})/2$ and also $\e((t-x)\ind(x<t)) = tF(t)-\e(x\ind(x<t)) \uparrow F^{-1}(1)$.
In the case $F^{-1}(1) < \infty$ we have $\lim_{\ol{xz} \downarrow 0} g'(\ol{xz}) = F^{-1}(1) - \e(x^2)/F^{-1}(1) > 0$ by assumption.
If $F^{-1}(1) = \infty$, then $g'(\ol{xz}) \rightarrow \infty$ as $\ol{xz} \downarrow 0$. 
Finally, we substitute into the formula~\eqref{eq:detM} for $\det(M)$ getting
\begin{align*}
\frac{\partial \det(M( p_{\max;\ol{z},\ol{xz}}^{\dag}))}{\partial \ol{xz}}  &= -\frac{2g'(\ol{xz})\left((1-\ol{z}^2)g(\ol{xz})+\ol{z}(\ol{xz})^2\right)}{\e(x^2)} - \frac{4g(\ol{xz})\ol{z}(\ol{xz})}{\e(x^2)} + \frac{4(\ol{xz})^3}{\e(x^2)}-4\ol{xz}.
\end{align*}
Since $g(\ol{xz}) \to  \ol{z} \cdot \e(x^2)$ as $\ol{xz} \downarrow 0$, we have
$(1-\ol{z}^2)g(\ol{xz})+(\ol{xz})^2\ol{z} \to \e(x^2)\ol{z}M_{11} < 0$ (Corollary~\ref{corollary:M11}).
Our analysis of the limiting behavior on $g'(\ol{xz})$ then indicates that
\[
\lim_{\ol{xz} \downarrow 0} \frac{\partial \det(M(p_{\max;\ol{z},\ol{xz}}^{\dag}))}{\partial \ol{xz}}  = -2\ol{z}M_{11}(F^{-1}(1)-\e(x^2)/F^{-1}(1)) > 0
\]

The proof for the case $\ol{z} > 0$ is completely analogous. 
We first show that $p_{\opt;\ol{z},\ol{xz}}^{\dag} = p_{\min;\ol{z},\ol{xz}}^{\dag}$ whenever $\ol{xz}$ is sufficiently close to 0. 
Then we note $(u,s)$ is the unique solution to the equations $u=(1+\ol{z})/(2(1-F(s)))$ and $\ol{xz}/2 = (1+\ol{z})\e(x\indic(x \ge s))/(2(1-F(s)))$
to compute the derivatives $u'(\ol{xz})$ and $s'(\ol{xz})$. 
This enables us to show 
\[
\lim_{\ol{xz} \downarrow 0} (\partial/\partial \ol{xz}) \det(M( p_{\min;\ol{z},\ol{xz}}^{\dag})) > 0
\]
under the condition $\e(x^2) < F^{-1}(0)^2$.


\end{appendix}

\section*{Acknowledgments}
This work was supported by the US National Science Foundation
under grants IIS-1837931 and DMS-2152780.
The authors would like to thank Kevin Guo, Dan Kluger, Tim Morrison, and several anonymous reviewers for helpful comments.

\bibliographystyle{apalike}
\bibliography{rd}

\end{document}